\documentclass[twoside]{article}

\usepackage[subtle]{savetrees}
\usepackage{tabularx}

%
\usepackage[accepted]{aistats2025}
%



\usepackage{graphicx} 
\usepackage{graphicx, centernot, amsmath, bbm, booktabs, multirow, array, capt-of, varwidth, natbib, amstext, tabularx}

\usepackage[utf8]{inputenc}
\usepackage{dirtytalk}
\usepackage{csquotes,enumitem}
\usepackage{graphicx}
\usepackage{amsmath,amsthm,amsfonts}
\usepackage{lipsum}
\usepackage[utf8]{inputenc}
\usepackage{hyperref}
\usepackage{float}
\hypersetup{
    colorlinks=true,
    linkcolor=blue,
    filecolor=magenta,      
    urlcolor=cyan,
    pdftitle={Overleaf Example},
    pdfpagemode=FullScreen,
    }
\usepackage{tikz}
\usetikzlibrary{bayesnet}
\usepackage{caption}
\usepackage{subcaption}




\usepackage[utf8]{inputenc} 
\usepackage[T1]{fontenc}    
\usepackage{hyperref}       
\usepackage{url}            
\usepackage{booktabs}       
\usepackage{amsfonts}       
\usepackage{nicefrac}       
\usepackage{microtype}      
\usepackage{xcolor}         

\begin{document}

%

\newcommand{\lnote}{\textcolor[rgb]{1,0,0}{Lydia: }\textcolor[rgb]{0,0,1}}
\newcommand{\todo}{\textcolor[rgb]{1,0,0.5}{To do: }\textcolor[rgb]{0.5,0,1}}

\newcommand{\state}{S}
\newcommand{\meas}{M}
\newcommand{\out}{\mathrm{out}}
\newcommand{\piv}{\mathrm{piv}}
\newcommand{\pivotal}{\mathrm{pivotal}}
\newcommand{\isnot}{\mathrm{not}}
\newcommand{\pred}{^\mathrm{predict}}
\newcommand{\act}{^\mathrm{act}}
\newcommand{\pre}{^\mathrm{pre}}
\newcommand{\post}{^\mathrm{post}}
\newcommand{\calM}{\mathcal{M}}

\newcommand{\game}{\mathbf{V}}
\newcommand{\strategyspace}{S}
\newcommand{\payoff}[1]{V^{#1}}
\newcommand{\eff}[1]{E^{#1}}
\newcommand{\p}{\vect{p}}
\newcommand{\simplex}[1]{\Delta^{#1}}

\newcommand{\recdec}[1]{\bar{D}(\hat{Y}_{#1})}

\newcommand{\sphereone}{\calS^1}
\newcommand{\samplen}{S^n}
\newcommand{\wA}{w}
\newcommand{\Awa}{A_{\wA}}
\newcommand{\Ytil}{\widetilde{Y}}
\newcommand{\Xtil}{\widetilde{X}}
\newcommand{\wst}{w_*}
\newcommand{\wls}{\widehat{w}_{\mathrm{LS}}}
\newcommand{\dec}{^\mathrm{dec}}
\newcommand{\sub}{^\mathrm{sub}}

\newcommand{\calP}{\mathcal{P}}
\newcommand{\totspace}{\calZ}
\newcommand{\clspace}{\calX}
\newcommand{\attspace}{\calA}

\newcommand{\Ftil}{\widetilde{\calF}}

\newcommand{\totx}{Z}
\newcommand{\classx}{X}
\newcommand{\attx}{A}
\newcommand{\calL}{\mathcal{L}}

\newcommand{\defeq}{\mathrel{\mathop:}=}
\newcommand{\vect}[1]{\ensuremath{\mathbf{#1}}}
\newcommand{\mat}[1]{\ensuremath{\mathbf{#1}}}
\newcommand{\dd}{\mathrm{d}}
\newcommand{\grad}{\nabla}
\newcommand{\hess}{\nabla^2}
\newcommand{\argmin}{\mathop{\rm argmin}}
\newcommand{\argmax}{\mathop{\rm argmax}}
\newcommand{\Ind}[1]{\mathbf{1}\{#1\}}

\newcommand{\norm}[1]{\left\|{#1}\right\|}
\newcommand{\fnorm}[1]{\|{#1}\|_{\text{F}}}
\newcommand{\spnorm}[2]{\left\| {#1} \right\|_{\text{S}({#2})}}
\newcommand{\sigmin}{\sigma_{\min}}
\newcommand{\tr}{\text{tr}}
\renewcommand{\det}{\text{det}}
\newcommand{\rank}{\text{rank}}
\newcommand{\logdet}{\text{logdet}}
\newcommand{\trans}{^{\top}}
\newcommand{\poly}{\text{poly}}
\newcommand{\polylog}{\text{polylog}}
\newcommand{\st}{\text{s.t.~}}
\newcommand{\proj}{\mathcal{P}}
\newcommand{\projII}{\mathcal{P}_{\parallel}}
\newcommand{\projT}{\mathcal{P}_{\perp}}
\newcommand{\projX}{\mathcal{P}_{\mathcal{X}^\star}}
\newcommand{\inner}[1]{\langle #1 \rangle}

\renewcommand{\Pr}{\mathbb{P}}
\newcommand{\Z}{\mathbb{Z}}
\newcommand{\N}{\mathbb{N}}
\newcommand{\R}{\mathbb{R}}
\newcommand{\E}{\mathbb{E}}
\newcommand{\F}{\mathcal{F}}
\newcommand{\var}{\mathrm{var}}
\newcommand{\cov}{\mathrm{cov}}

\newcommand{\calN}{\mathcal{N}}

\newcommand{\jccomment}{\textcolor[rgb]{1,0,0}{C: }\textcolor[rgb]{1,0,1}}
\newcommand{\fracpar}[2]{\frac{\partial #1}{\partial  #2}}

\newcommand{\A}{\mathcal{A}}
\newcommand{\B}{\mat{B}}

\newcommand{\I}{\mat{I}}
\newcommand{\M}{\mat{M}}
\newcommand{\D}{\mat{D}}
\newcommand{\V}{\mat{V}}
\newcommand{\W}{\mat{W}}
\newcommand{\X}{\mat{X}}
\newcommand{\Y}{\mat{Y}}
\newcommand{\mSigma}{\mat{\Sigma}}
\newcommand{\mLambda}{\mat{\Lambda}}
\newcommand{\e}{\vect{e}}
\newcommand{\g}{\vect{g}}
\renewcommand{\u}{\vect{u}}
\newcommand{\w}{\vect{w}}
\newcommand{\x}{\vect{x}}
\newcommand{\y}{\vect{y}}
\newcommand{\z}{\vect{z}}
\newcommand{\fI}{\mathfrak{I}}
\newcommand{\fS}{\mathfrak{S}}
\newcommand{\fE}{\mathfrak{E}}
\newcommand{\fF}{\mathfrak{F}}

\newcommand{\Risk}{\mathcal{R}}

\renewcommand{\L}{\mathcal{L}}
\renewcommand{\H}{\mathcal{H}}

\newcommand{\cn}{\kappa}
\newcommand{\nn}{\nonumber}

\newcommand{\Hess}{\nabla^2}
\newcommand{\tlO}{\tilde{O}}
\newcommand{\tlOmega}{\tilde{\Omega}}

\newcommand{\calF}{\mathcal{F}}
\newcommand{\fhat}{\widehat{f}}
\newcommand{\calS}{\mathcal{S}}

\newcommand{\calX}{\mathcal{X}}
\newcommand{\calY}{\mathcal{Y}}
\newcommand{\calD}{\mathcal{D}}
\newcommand{\calZ}{\mathcal{Z}}
\newcommand{\calA}{\mathcal{A}}
\newcommand{\fbayes}{f^B}
\newcommand{\func}{f^U}

\newcommand{\bayscore}{\text{calibrated Bayes score}}
\newcommand{\bayrisk}{\text{calibrated Bayes risk}}

\newtheorem{example}{Example}[section]
\newtheorem{exc}{Exercise}[section]

\newtheorem{theorem}{Theorem}[section]
\newtheorem{definition}{Definition}
\newtheorem{proposition}[theorem]{Proposition}
\newtheorem{corollary}[theorem]{Corollary}

\newtheorem{remark}{Remark}[section]
\newtheorem{lemma}[theorem]{Lemma}
\newtheorem{claim}[theorem]{Claim}
\newtheorem{fact}[theorem]{Fact}
\newtheorem{assumption}{Assumption}

\newcommand{\iidsim}{\overset{\mathrm{i.i.d.}}{\sim}}
\newcommand{\unifsim}{\overset{\mathrm{unif}}{\sim}}
\newcommand{\sign}{\mathrm{sign}}
\newcommand{\wbar}{\overline{w}}
\newcommand{\what}{\widehat{w}}
\newcommand{\KL}{\mathrm{KL}}
\newcommand{\Bern}{\mathrm{Bernoulli}}
\newcommand{\ihat}{\widehat{i}}
\newcommand{\Dwst}{\calD^{w_*}}
\newcommand{\fls}{\widehat{f}_{n}}

\newcommand{\brpi}{\pi^{br}}
\newcommand{\brtheta}{\theta^{br}}

\twocolumn[

\runningtitle{Evaluating Prediction-based Interventions with Human Decision Makers In Mind}

\aistatstitle{\parbox{\linewidth}{\centering Evaluating Prediction-based Interventions \\ with Human Decision Makers In Mind}}

\aistatsauthor{Inioluwa Deborah Raji \And Lydia Liu}

\aistatsaddress{University of California, Berkeley \And  Princeton University } 
]

\begin{abstract}
Automated decision systems (ADS) are broadly deployed to inform and support human decision-making across a wide range of consequential settings. However, various context-specific details complicate the goal of establishing meaningful experimental evaluations for prediction-based interventions. Notably, current experiment designs rely on simplifying assumptions about human decision making in order to derive causal estimates. In reality, \emph{specific experimental design decisions} may induce cognitive biases in human decision makers, which could then significantly alter the observed effect sizes of the prediction intervention. In this paper, we formalize and investigate various models of human decision-making in the presence of a predictive model aid. We show that each of these behavioral models produces dependencies across decision subjects and results in the violation of existing assumptions, with consequences for treatment effect estimation. This work aims to further advance the scientific validity of intervention-based evaluation schemes for the assessment of ADS deployments.
\end{abstract}

\section{INTRODUCTION}
When making use of a predictive model in a decision-making process, human discretion often influences the final decisions -- hence, there is "not necessarily a one-to-one mapping from the predictive tool to the final outcome'' \citep{albright2019if}. 
A growing body of research hints at why -- 
while the field of machine learning is focused on the evaluation of predictive model outputs using accuracy metrics often tested on benchmark datasets \citep{raji2021ai}, these assessments do not reveal the full pattern of how humans \emph{interact} with the algorithm~\citep{green2019disparate}, and thus provides little assurance that the predictions will improve downstream outcomes on the impacted population through better decision-making 
\citep{liu2023reimagining, liu2024actionability}. ~\cite{kleinberg2018human} goes further to illustrate how, unlike prediction evaluation which focuses on minimizing model generalization error, decisions involve a more complex calculation of cost-benefit trade-offs, requiring considerations beyond just model performance. 

This difference between \emph{predictions} and \emph{decisions} - also sometimes referred to as \emph{judgements} ~\citep{agrawal2018prediction} - necessitates a critical expansion of scope in terms of what is being evaluated.
Once deployed, predictive models operate much more like typical policy interventions, not isolated prediction problems. These \emph{prediction-based interventions} are thus ideally evaluated around the causal question of how much the introduction of the predictive system impacts some important downstream outcome. This paradigm is about hypothesis testing a treatment - effectively thinking through the counterfactual measurement of what happens in the presence and in absence of the prediction-based intervention for a given decision-maker on a specific case. What we actually need to evaluate is not just the accuracy of the models' output predictions relative to outcomes, but the impact of such predictions on the actions of decision-makers and the appropriateness of those decisions.

One obvious approach to the estimation of these causal effects is through the implementation of \emph{experimental evaluations} for prediction-based interventions. Several efforts across criminal justice~\cite{B17imai2023experimental}, healthcare~\citep{B4plana2022randomized}, and education~\citep{B9dawson2017prediction} have begun conducting such evaluations to estimate the impact of introducing a prediction-based intervention to modify a given default policy. However, it is clear that those conducting these experiments not necessarily factoring in the particularities of the prediction-based intervention context into their experiment design. We argue that increased attentiveness to experiment design choices in this context is important -- as such choices may impact the \emph{responsiveness} (i.e., tendency to be swayed by an algorithmic decision aid) of the decision-makers relying on algorithmic recommendations, which in turn distorts our understanding of the average treatment effect of prediction-based interventions. 





Our contributions are thus as follows:

\begin{enumerate}[leftmargin=*]
    \item We examine the role of specific \emph{experiment design choices} on judge responsiveness -- specifically, choices that experiment designers make on (1) treatment assignment ($Z$), (2) the model positive prediction rate  ($P (\hat{Y} = 1)$) (by setting the threshold of the model corresponding  to a recommended action) and (3) the model correctness ($P (\hat{Y} = Y)$) (through model selection). We conclude that beyond intrinsic judge biases, \emph{experiment design choices can also impact judge responsiveness}. 
    \item We mathematically formalize how such experimental design choices impact judge responsiveness via novel models of judge decision making and prove that this leads to violations of SUTVA. We furthermore discuss how these design choices and their impact on judge responsiveness can lead to a mis-estimation of the average treatment effect in experimental evaluation settings for prediction-based interventions. 
    \item Using existing experiment data of a prediction-based intervention \citep{imai2020experimental}, we simulate a scenario of modifying experimental design choices and observing differences in judge decisions and estimated average treatment effect. 
\end{enumerate}

\section{RELATED WORK}

Here we provide a high level overview of background literature -- further discussion on related work is available in Appendix~\ref{app:further_rel}.

\paragraph{Modeling Discretion in Quasi-experimental investigations.} It is well known that decision-makers are subject to intuitive biases that inform how they respond to cases more generally ~\citep{guthrie2007blinking,guthrie2000inside}, and specifically how they factor in information from algorithmic decision aids~\citep{demichele2021intuitive}. However, most experimental and quasi-experimental work that have considered judge responsiveness assume that the tendency to follow or not follow the algorithmic recommendation is inherent to each judge -- a \emph{static} attribute of the judges' internal state, or their access to privileged information, i.e. factors outside of the experimenter's control \citep{mullainathan2022diagnosing,albright2019if}. For example, \citet{hoffman2018discretion} considered possible access to privileged information that impacted the hiring manager's judgement (each hiring manager having a particular \emph{exception rate} for making a different decision from the algorithm). 
In contrast, we consider a \emph{dynamic} model of judge responsiveness that is sensitive to factors in experimental design, acknowledging the connection between these experiment design factors and the induced behavioral biases that may impact judge decision-making.

\paragraph{The Rise of Experimental Evaluations.}
 In this work, we focus on \citet{imai2020experimental}, who presented a case study involving a Randomized control trial (RCT) examining the use of Public Safety Assessment (PSA) risk scores by judges in Dane County, Wisconsin. However, various other examples of RCTs for risk assessments have been explored in other domains. For instance, \citet{wilder2021clinical} ran a ``clinical trial'' on an algorithm used to select peer leaders in a public health program from homelessness youth. 
Similarly, several studies in education execute RCTs to assess models used to predict which students (in real-world and MOOC settings) are most likely to drop out~\citep{dawson2017from, borrella2019predict}.  
In healthcare, ~\citet{plana2022randomized} provides a recent survey of the 41 RCTs of AI/ML-based healthcare interventions done to date, critiquing a systematic lack of adherence to documentation standards such as CONSORT-AI and SPIRIT-AI ~\citep{liu2020reporting}. 
That being said, many of the RCTs cited suffer from the same design flaws and challenges we observe in ~\cite{imai2020experimental}, which we argue ignores the impact of experiment design choices on judge responsiveness.

\paragraph{Spillover effects.}
It is well-known, in the experimental economics literature \citep[][]{banerjee2009experimental,deaton2018understanding} for example, that insights from RCTs often fail to scale up when the intervention is applied broadly across the entire target population. This failure is often attributed to \emph{interaction} or \emph{interference} between decision subjects, and has also been referred to, in various contexts, as \emph{general equilibrium effects} or \emph{spillover effects}. In the potential outcomes model for causal inference, this is referred to as a violation of the Stable Unit Treatment Value Assumption (SUTVA) \citep{rubin1980discussion,rubin1990formal}. Recognizing the presence of interference, experimenters typically employ a two-level randomized experiment design \citep{duflo2003role,crepon2013labor} in order to estimate average treatment effects under different treatment proportions (e.g. proportion of job seekers receiving an intervention); treatment proportions are randomly assigned to the treatment clusters, e.g., the labor market. Although there are some exceptions in healthcare~\citep{gohil202113}, such experiment designs have not yet widely been applied to study human-algorithm decision making, where each treatment cluster is associated with a particular human decision-maker. 
%



\section{EXISTING EXPERIMENTAL PARADIGM AND ASSUMPTIONS}
\label{sec:existing}

We are interested in the problem of experimentally evaluating how algorithmic interventions in the form of predictive decision aids influence human decision-making. In this section, we describe the existing paradigm for experimental evaluations---we call this the \emph{case-independent} model---and the common assumptions required for causal identification. 

 In many existing experimental evaluations of algorithmic decision aids, the \emph{treatment unit} is presumed to be the decision subject and the \emph{treatment}, $Z_i$ is conceptualized as the provision of an predictive risk score $\hat{Y}_i$ to a decision maker who is responsible for making the final decision $D_i$. In the case of pre-trial detention decisions, the decision subject is the defendant in a particular court case, the decision maker is the judge presiding on the case, and the treatment $Z_i$ is a binary variable indicating if the PSA is shown to the judge for case $i$. $D_i$ represents the judge's binary detention decision, where $D_i = 1$ means detaining and $D_i = 0$ means releasing the arrestee before trial. $Y_i$ is the binary outcome variable, with $Y_i = 1$ indicating the arrestee committed an NCA, and $Y_i = 0$ indicating they did not. $X_i$ is a vector of observed pre-treatment covariates for case $i$, including age, gender, race, and prior criminal history. Independence is assumed between each case. We illustrate these variables and their dependencies (or lack thereof) in Figure~\ref{fig:indep_causal_model}.

 \begin{figure}[htbp]
  \centering
 \includegraphics[trim=1.5cm 2.5cm 3cm 10.5cm, clip, width=\columnwidth]{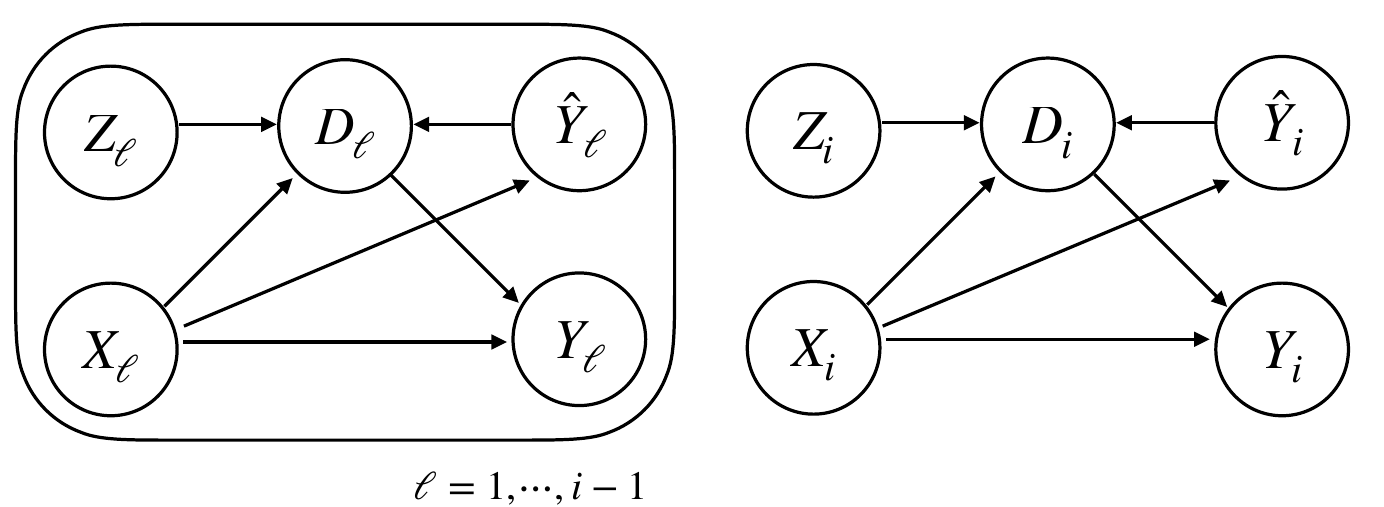}
  \caption{The case-independent model of human decision making with a predictive decision aid. This is the causal model assumed in prior work \citep[e.g.,][]{imai2020experimental}}
  \label{fig:indep_causal_model}
\end{figure}

The crucial assumption here is that of non-interference between decision subjects 
(see Assumption~\ref{assmp:sutva}, SUTVA). In words, the treatment status of one unit should not affect the decision for another unit. As a result, $Z_i$ is typically randomized at a single-level, i.e., on individual cases; and there is no randomization at the level of decision makers. 
In \citet{imai2020experimental} for example, where only one judge participated in the study, the algorithmic score is shown only for cases with an even case number (i.e., $Z_i =1$); for odd-numbered cases, no algorithmic recommendation is shown (i.e., $Z_i =0$). 

The assumption of non-interference across subjects is questionable, for multiple reasons. For one, in the psychology and behavioural economics literature, it is well-known that decision-makers have a bias towards consistency to their internal policies for decision-making and generally do not change on a case by case basis~\cite{tversky1985framing}. This means, in ~\cite{imai2020experimental} for example, where the judge only sees the risk score for a fraction of the cases, they might choose to ignore the PSA completely on cases where they do see it, or to infer patterns for cases where they don’t see the score, in order to be as consistent as possible in their decision-making (in this work, we mainly model effects such as the former). 
As such, experiment designs that assume the case-independent model does not allow us to draw clear conclusions on how a judge would modify their default decision-making in response to the algorithmic intervention implemented as a full-scale \emph{policy}, since we only observe the judge's decision-making under a \emph{partial} implementation of the would-be algorithmic intervention.

\section{MODEL}\label{sec:model}
In this section, we propose a novel causal model of algorithm-aided human decision making that is distinct from case-independent model, as depicted in Figure~\ref{fig:indep_causal_model}. Our model, depicted as a causal directed acyclic graph (Figure~\ref{fig:expanded_causal_model}), hypothesizes and accounts for dependence in the decisions across cases induced by the human decision maker's cognitive bias. We consider three types of cognitive bias that are particularly relevant to the current setting of human decision making with a predictive decision aid---all of which are directly influenced by experimental design choices (Table~\ref{table:cognitive_bias_models}). We do so by introducing two latent variables $J_{i,k}$ and $\epsilon_{i,k}$ that track the internal state of the decision maker $k$ and how it affects the decision for subject $i$.

\begin{figure}[htbp]
  \centering
  \includegraphics[trim=1.5cm 2cm 3cm 7cm, clip, width=\columnwidth]{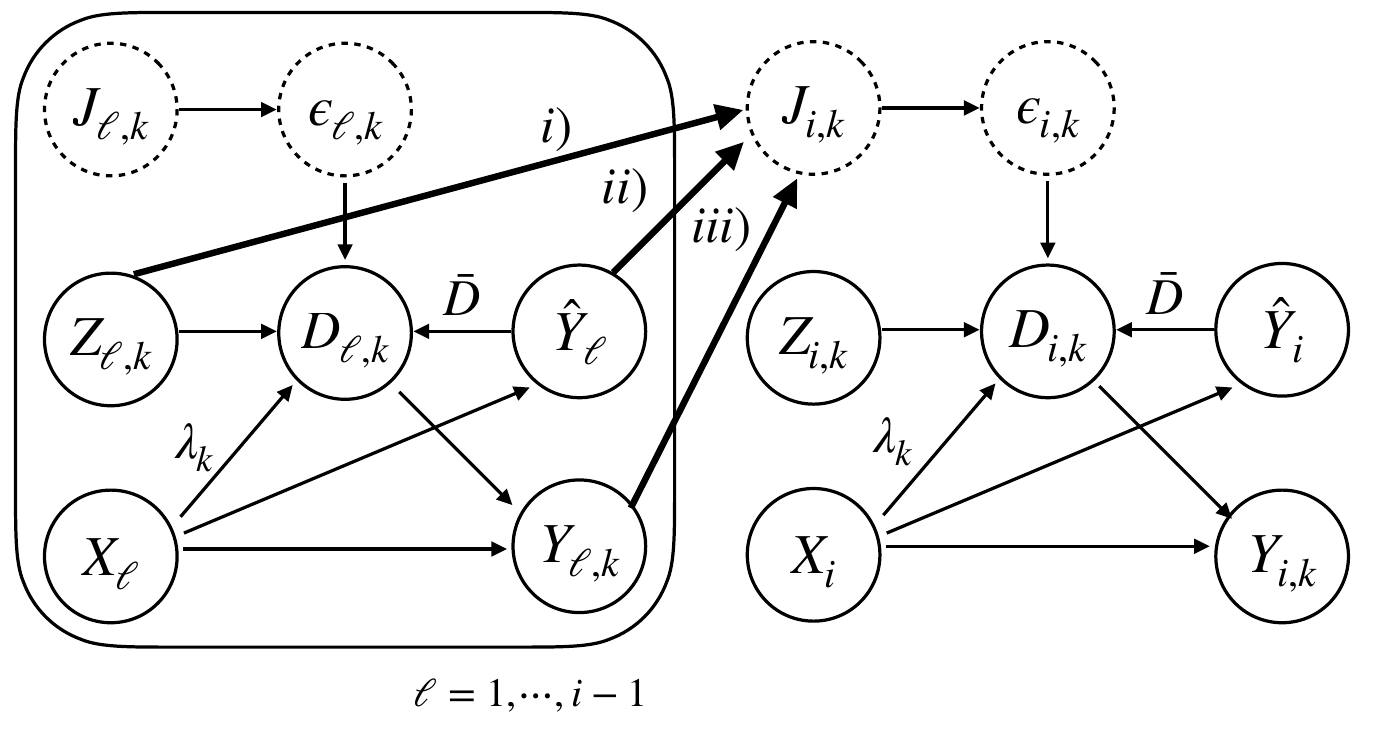}
  \caption{Proposed causal model that accounts for human decision maker bias. We describe three versions of this model (see Table~\ref{table:cognitive_bias_models}). Under the \emph{treatment exposure} model, arrow (i) is activated, but not (ii) and (iii). Under the \emph{capacity constraint} model, arrows (i) and (ii) are activated, but not (iii). Under the \emph{low trust} model, all three arrows, (i-iii), are activated.}
  \label{fig:expanded_causal_model}
\end{figure}

As in the case-independent model, we assume that $X_i$'s are independent and identically distributed.
In contrast to the case-independent model that does not include the role of the human decision maker (i.e., the judge), we explicitly model the judge---we index each case decision $D_{i,k}$ by both the decision maker (index $k$ in Figure~\ref{fig:expanded_causal_model}) and the decision subject (index $i$ or $\ell$ in Figure~\ref{fig:expanded_causal_model}), as opposed to only indexing it by the decision subject. Similarly, the treatment assignment variable $Z_{i,k}$ denotes the treatment status of both the decision subject $i$ and the decision maker $k$. This provides a fuller account of the space of possible treatment assignment counterfactuals, e.g., a case could have been assigned to a different judge leading to a different decision and outcome, all else held constant. 
We further assume that the judge is sequentially exposed to cases, where the judges consider case $i$ after the case $i-1$, and one case at a time.

To be precise, we define
\begin{equation*}
    Z_{i,k}
 = 
\begin{cases}
1 \text{ if unit } i \text{ is assigned to Judge }k \\ \text{ and unit } i \text{ received algorithmic treatment}\\
0 \text{ if unit } i \text{ is assigned to Judge }k \\ \text{ and unit } i \text{ received no treatment} \\
-1 \text{ if unit } i \text{ is not assigned to Judge }k 
  \end{cases}
\end{equation*}  

As the decision outcome $Y_{i,k}$ is downstream of $D_{i,k}$, we also index it by both the judge and the decision subject.

\subsection{Judge's decision}\label{sec:judge_dec}

We model the judge's decision for each treatment unit in the experiment as a random event, whose probability is determined by the individual judge's decision parameters as well as aspects of the experimental design (e.g. the treatment assignments).

For each judge $k$, we define a default decision function $\lambda_k$ that takes in observable characteristics $X_i \in \calX$ and operates independently of any algorithmic decision aid.\footnote{To illustrate the model in Figure~\ref{fig:expanded_causal_model}, we suggest that $\lambda_k$ only takes $X_i$ as input. For full generality, the Judge's default decision can be expressed as $\lambda(X_i, U_i)$, where $U_i$ is an additional \emph{exogenous} variable. Here $U_i$ could represent information available to the judge that is not available to the decision aid algorithm, or other exogenous noise. All our subsequent results also apply to this more general formulation, but to streamline our notation, we will omit $U_i$ and subsume it within $\lambda_k$. In other words, $\lambda_k(X_i)$ need not be deterministic given $X_i$.} When no predictive decision aid is provided to the unit~$i$ assigned to judge $k$ (i.e., $Z_{i,k} = 0$), the judge's decision $D_{i,k}$ takes value $\lambda_k(X_i)$.

Suppose instead that the unit $i$ is treated (i.e., $Z_{i,k} = 1$). In this case, the judge receives the algorithmic score $\hat{Y}_i$. We denote the recommended decision function as $\bar{D}(y)$ which maps a prediction $y$ to a decision. This function can represent existing decision guidelines used by judges for PSA scores~\citep{stith1998fear}. Note that prediction (lower-case) $y$ in $\bar{D}(y)$ is a variable that can take on any value $\hat{Y}_i$. This is to distinguish $\bar{D}(y)$ (the recommended decision, which is a function of the prediction), and $D(z)$ (the actual decision, which is a function of the judge being treated or not). 


In the case that $\recdec{i}$ differs from the judge's default decision $\lambda_k(X_i)$, the judge may choose to `comply' with the $\recdec{i}$, or to disagree with the ADS recommendation and choose $\lambda_k(X_i)$. We model this as a random event as follows. Let $J_{i,k}$ denote the \emph{responsiveness} of Judge $k$ for treatment unit~$i$, i.e. the probability that the Judge follows the PSA recommendation for case $i$. In other words, $J_{i,k}$ models the automation bias of the judge at the point they are deciding on unit $i$. Let $\epsilon_{i,k} \sim Ber(J_{i,k})$ be the random variable denoting judge response, drawn independently for each $i$; when $\epsilon_{i,k} = 1$, the judge  `complies' with the ADS recommendation.  


To summarize, we have the following decision for judge $k$ on unit $i$:
\begin{equation}\label{eq:model_of_judge}
    D_{i,k} = \begin{cases}
        \lambda_k(X_i) \text{ if } Z_{i,k} = 0 \text{ (no PSA) } \\
        \begin{cases}
     \recdec{i} &\text{if } \recdec{i} = \lambda_k(X_i)  \text{ or } \epsilon_{i,k}=1 \\
        \lambda_k(X_i) &\text{if }\recdec{i} \ne \lambda_k(X_i) \text{ and } \epsilon_{i,k}=0 
        \end{cases} \text{o.w.}
    \end{cases}.
\end{equation}


Note that in this case $J_{i,k}$ denotes the probability that a judge decides to follow $\recdec{i}$, in the case that $\lambda_k(X_i) \neq \recdec{i}$. Hence the judge responsiveness $J_{i,k}$ is a lower bound for actual frequency that the judge is makes the same decision as the algorithmic recommendation.


Drawing upon cognitive psychology, behaviorial economics, and human computer interaction literatures\footnote{It has long been understood in psychology and behavioral economics that the ``framing’’ or context in which nudges or advice is given, meaningful impacts how much this additional information is heeded during decision-making (see \citep{1kahneman2011thinking, 4klein2017sources, 8tversky1989rational}). This discussion has been further elaborated on in computer science contexts, through the formalization of broader ``algorithm-in-the-loop’’ considerations ~\citep{10green2019principles, 9guo2024decision}, and notions of ``automation bias’’ (ie. the degree of reliance of a human decision-maker on an algorithmic recommendation) ~\citep{14goddard2012automation, 15albright2019if}.}, we propose that the following three \textbf{responsiveness factors} are likely to impact judge response: 1) Treatment exposure, 2) capacity constraint, and 3) low trust. We model each of these factors independently for ease of exposition but they may impact the judge's response at the same time. These factors are summarized in Table~\ref{table:cognitive_bias_models}.

\begin{table*}[h!]
\centering
\begin{tabular}{| m{2cm} | m{6cm} | m{4cm} |}
  \hline
  \textbf{Model} & \textbf{Effect on Total Responsiveness $J_{i,k}$} & \textbf{Experiment design choices} \\
  \hline
  Treatment exposure & $J_{i,k}$ increases as average exposure to predictive decision aid increases & $Z_{i,k}$ \\
  \hline
  Capacity constraint & $J_{i,k}$ decreases as average exposure to `high risk' prediction rate increases. & $Z_{i,k}, \Pr(\hat{Y}_{i} > 0) $ \\
  \hline
  Low trust & $J_{i,k}$ decreases as average exposure to prediction error increases. & $Z_{i,k},  \Pr(\hat{Y_i} = Y_i)$ \\
  \hline
\end{tabular}
\caption{Description of Three Models of Cognitive Bias in Judge Responsiveness}
\label{table:cognitive_bias_models}
\end{table*}

\paragraph{Treatment exposure model.} The judge becomes responsive to the algorithmic recommendation if she encounters the algorithmic recommendation for more subjects (i.e., greater fraction of subjects are treated).
\begin{equation}\label{eq:treatment_exposure}
    J_{i+1,k} = 
b_k + f\left(\frac{1}{i} \sum_{m=1}^{i} Z_{m,k}\right)
\end{equation}
$b_k$ is the baseline responsiveness and $f\left(\frac{1}{n} \sum_{m=1}^{i} Z_{m,k}\right)$ is the adjustment to the responsiveness based on number of treated decision subjects seen so far. 

Humans demonstrate a self-consistency principle ~\citep{luu2018post, 3beresford2008understanding}, particularly in high-pressure decision-making contexts ~\citep{4klein2017sources}. This bias may lead to reliance on consistently available information rather than adjusting decision-making protocols for each case. Consequently, judges may respond more strongly to algorithmic recommendations when exposed to them frequently, while showing less responsiveness when exposure is limited ~\citep{2tversky1981framing, 8tversky1989rational}.

To illustrate, we present a simple thresholding model for $J_{i,k}$ based on the average number of treated decision subjects:
\begin{equation}
J_{i+1,k}=
    \begin{cases}
        1 & \text{if } \sum_{m=1}^{i} Z_{m,k} > i\tau\\
        0 & \text{o.w.}
    \end{cases},
\end{equation}
where $\tau \in (0,1)$. This instance of the judge's decision making model suggests that the judge becomes always responsive to the algorithmic recommendation as long as the cumulative average treatment frequency is above $\tau$.

\paragraph{Capacity constraint model} The judge has a limited capacity to respond to positive (or `high risk') predictions. They reduce their responsiveness as the rate of positive predictions they see from the algorithm increases. 
\begin{equation}\label{eq:capacity}
    J_{i+1,k} = b_k - f\left(\frac{1}{i}\sum_{m=1}^iZ_{m,k} \cdot \Ind{\hat{Y}_m > 0}\right)
\end{equation}
In high-stakes environments, decision-makers may become less responsive after reaching a specific capacity threshold for responding to certain types of algorithmic recommendations.\footnote{We describe simple thresholding based instantiations of this model as well as the low trust model in Appendix~\ref{app:instance}.} This phenomenon has been noted by decision theory scholars ~\citep{4klein2017sources} and economists ~\citep{5boutilier2024randomized} alike.\footnote{The empirical evidence regarding its implications in resource-constrained settings is mixed. Some computer scientists have found that, in such cases, decision-makers might over-rely on algorithmic recommendations instead of engaging in more complex, independent decision-making ~\citep{13buccinca2021trust, 14goddard2012automation}. This topic remains an area of ongoing research and discussion.}

This model of responsiveness doesn't make any assumption about whether the human decision maker is observing realized outcomes~$Y_{i,k}$, only the predicted outcomes. Hence this is a different model of cognitive bias than the following model, which addresses low accuracy directly.

\paragraph{Low trust model} The judge observes realized outcomes and knows when the algorithm has made a mistake (i.e., predicted $\hat{Y} \ne$ true outcome $Y$). They reduce their responsiveness as the error rate of the predictions they see from the algorithm increases. 
\begin{equation}\label{eq:low_trust}
    J_{i+1,k} = b_k - f\left(\frac{1}{i}\sum_{m=1}^iZ_{m,k} \cdot \Ind{\hat{Y}_m \neq Y_{m,k}}\right)
\end{equation}
This phenomenon exemplifies ``alert fatigue'' in algorithmic decision systems (ADS) with high false positive rates ~(\citet{wong2021external} found ``even a small number of bad [sepsis] predictions can cause alert fatigue'' in clinicians). Research in human-computer interaction (HCI) has shown that trust in AI recommendations is influenced by both stated and observed model accuracy \citep{6yin2019understanding}, with lower perceived accuracy reducing responsiveness to algorithmic recommendations. This finding has been empirically supported in subsequent research ~\citep{7lai2023towards, 12zhang2020effect}. Note that as written \eqref{eq:low_trust} cannot be applied to settings where the judge does not observe the realized outcome before the next decision, or if decisions affect the observability of baseline outcomes, e.g., $Y_{i,k}(D_{i,k} = 0)$ is not observed when $D_{i,k} = 1$ in the pre-trial risk assessment setting. 

\section{RESULTS}
\label{sec:results}

In this section, we describe the implications that our model of human decision making has for causal experiments and the estimation of treatment effects. First, we give simple conditions under which the common assumption for treatment effect estimation, Stable Unit Treatment Value Assumption (SUTVA), is violated. Second, we show that the underestimation of treatment effect that occur due to the chosen randomization in the assignment of cases to judges, even when the treated cases are the same, focusing on the treatment exposure model.

\subsection{Violation of SUTVA}

Our first observation is that the judge's changing responsiveness induces interference between decision subjects. We will formally show that this results in the violation of SUTVA, stated as follows. 
\begin{assumption}[SUTVA \citep{rubin1990formal,angrist1996identification}]\label{assmp:sutva}
A set of treatment assignments, decisions and outcomes $(\vect{Z}, \vect{Y}, \vect{D})$ is said to satisfy SUTVA if both the following conditions hold.
\begin{enumerate}
    \item[a.] If $Z_{i,k} = Z_{i,k}'$, then $D_{i,k}(\vect{Z}) = D_{i,k}(\vect{Z}')$.
    \item[b.] If $Z_{i,k} = Z_{i,k}'$ and $D_{i,k} = D_{i,k}'$, then $Y_{i,k}(\vect{Z}, \vect{D}) = Y_{i,k}(\vect{Z}',\vect{D}')$.
\end{enumerate}
\end{assumption}

In the following result, we show that under simple conditions on $J_{i,k}$, Assumption~\ref{assmp:sutva} (part a) is violated. The proof hinges on the lack of conditional independence between $D_{i,k}$ and prior treatment assignments $(Z_{1,k}, \cdots, Z_{i-1,k})$, given current treatment $Z_{i,k}$. This may be intuitively apparent from the causal DAG in Figure~\ref{fig:expanded_causal_model}, where one can see that $D_{i,k}$ is not d-separated from $(Z_{1,k}, \cdots, Z_{i-1,k})$ by $Z_{i,k}$. We include the proof in Appendix~\ref{app:proofs}.

\begin{theorem}[Violation of SUTVA]\label{thm:sutva_violation}
    Fix $k>0$ and consider some $i > 1$. Assume the judge's decision model is as described in equation~\eqref{eq:model_of_judge}, where $J_{i,k}$ is a monotonically non-decreasing (or non-increasing) function of $Z_{1,k}, \cdots, Z_{i-1,k}$, and strictly increasing (resp. decreasing) in at least one of its arguments. Assume that the judge's default decision function $\lambda_k$ is such that $\Pr(\lambda_k(X_i) \neq \recdec{i}) > 0$.
    
    Then $D_{i,k}$ is not conditionally independent of $(Z_{1,k}, \cdots, Z_{i-1,k})$, given $Z_{i,k}$. In particular, there exists treatment assignments $\vect{Z}, \vect{Z}'$ such that $Z_{i,k} = Z_{i,k}'$ and 
    \begin{equation*}
        \Pr\left(D_{i,k}(\vect{Z}) = D_{i,k}(\vect{Z}')\right) < 1.
    \end{equation*}
\end{theorem}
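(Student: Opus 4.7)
My plan is to exhibit two treatment assignment vectors $\vect{Z}, \vect{Z}'$ that agree in the $(i,k)$-th entry but differ in some earlier entry where $J_{i,k}$ is strictly monotone, and to show that the induced potential outcomes $D_{i,k}(\vect{Z})$ and $D_{i,k}(\vect{Z}')$ disagree with positive probability. The enabling device is to express the Bernoulli draw $\epsilon_{i,k}$ as a threshold on a common uniform random variable, so that the two potential outcomes can be coupled meaningfully despite depending on different Bernoulli parameters.

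Concretely, without loss of generality I take the monotonically non-decreasing case and fix some $j \in \{1, \ldots, i-1\}$ in which $J_{i,k}$ is strictly increasing. I would construct $\vect{Z}$ and $\vect{Z}'$ to agree in every coordinate except $(j,k)$, with $Z_{j,k} = 0$ and $Z_{j,k}' = 1$, and to satisfy $Z_{i,k} = Z_{i,k}' = 1$ so that the judge actually consults the algorithmic recommendation at step $i$ under either assignment. Strict monotonicity in the $j$-th coordinate then gives $\delta \defeq J_{i,k}(\vect{Z}') - J_{i,k}(\vect{Z}) > 0$, with both values in $[0,1]$.

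Next, introduce a single $U \sim \mathrm{Unif}[0,1]$ independent of $X_i$ and define $\epsilon_{i,k}(\vect{Z}) = \Ind{U \le J_{i,k}(\vect{Z})}$ and $\epsilon_{i,k}(\vect{Z}') = \Ind{U \le J_{i,k}(\vect{Z}')}$, which preserves the marginal $\mathrm{Ber}(J_{i,k})$ laws demanded by the model. Consider the event $A = \{\lambda_k(X_i) \neq \recdec{i}\}$, which has positive probability by hypothesis, and the independent event $B = \{U \in (J_{i,k}(\vect{Z}), J_{i,k}(\vect{Z}')]\}$, which has probability $\delta$. On $A \cap B$, the decision rule in equation~\eqref{eq:model_of_judge} produces $D_{i,k}(\vect{Z}) = \lambda_k(X_i)$ (since $\epsilon_{i,k}(\vect{Z}) = 0$) and $D_{i,k}(\vect{Z}') = \recdec{i}$ (since $\epsilon_{i,k}(\vect{Z}') = 1$), and these values differ by definition of $A$. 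Therefore $\Pr(D_{i,k}(\vect{Z}) \neq D_{i,k}(\vect{Z}')) \geq \Pr(A) \cdot \delta > 0$, which gives the stated inequality. The conditional independence claim then follows from essentially the same computation: conditional on $Z_{i,k} = 1$, the law of $D_{i,k}$ is a two-point mixture with weight $J_{i,k}$ on $\recdec{i}$, and since $J_{i,k}$ takes different values at $\vect{z}_{<i}$ and $\vect{z}_{<i}'$, the conditional laws differ, contradicting d-separation of $D_{i,k}$ from $(Z_{1,k}, \ldots, Z_{i-1,k})$ given $Z_{i,k}$.

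The main subtlety I expect is the coupling of the randomness across potential outcomes, because $\epsilon_{i,k}$'s distribution itself depends on $\vect{Z}$ through $J_{i,k}$; without a shared source of randomness, one could at best compare distributions rather than pointwise realizations of $D_{i,k}(\vect{Z})$. The uniform-threshold construction is a standard trick for this kind of SUTVA argument and resolves the ambiguity cleanly; the remaining pieces (strict monotonicity implying $\delta > 0$, and reading off the case from equation~\eqref{eq:model_of_judge}) are essentially bookkeeping.
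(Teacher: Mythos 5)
Your proposal is correct and follows essentially the same route as the paper's proof: both compare two assignments that agree at $(i,k)$ but differ in earlier coordinates, use strict monotonicity to force $J_{i,k}(\vect{Z}) < J_{i,k}(\vect{Z}')$, and lower-bound $\Pr\left(D_{i,k}(\vect{Z}) \ne D_{i,k}(\vect{Z}')\right)$ by $\Pr(\lambda_k(X_i) \ne \recdec{i})$ times the probability that the two $\epsilon$ draws disagree. Your uniform-threshold coupling merely makes explicit a joint law of $(\epsilon_{i,k}(\vect{Z}), \epsilon_{i,k}(\vect{Z}'))$ that the paper leaves implicit (any coupling of Bernoullis with distinct parameters would do), and your single-coordinate flip versus the paper's all-zeros/all-ones comparison is an immaterial difference.
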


Theorem~\ref{thm:sutva_violation} suggests that for the specific models of $J_{i,k}$ that we introduced in Section~\ref{sec:judge_dec}, the SUTVA is indeed violated. We state this formally in the following corollary.

\begin{corollary}
Suppose $J_{i,k}$ is as defined in \eqref{eq:treatment_exposure}, \eqref{eq:capacity}, or \eqref{eq:low_trust}.
For any $\lambda_k$, $X_{i}$, $Y_{i}$, there exists $\hat{Y}(\cdot)$, $f$ and $b_k$ such that Assumption~\ref{assmp:sutva} (part a) does not hold with some positive probability. 

\end{corollary}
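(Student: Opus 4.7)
The plan is to reduce the corollary to Theorem~\ref{thm:sutva_violation} by verifying its two hypotheses for each of the three specified models of $J_{i,k}$: (a) monotonicity of $J_{i,k}$ in the past treatment assignments $(Z_{1,k},\dots,Z_{i-1,k})$, and (b) the disagreement condition $\Pr(\lambda_k(X_i) \ne \recdec{i}) > 0$.

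Condition (b) is easy to secure: since $\bar{D}$ is a non-constant binary-valued function (as is standard for risk-score based recommendations) and we are free to choose $\hat{Y}$, we can always pick $\hat{Y}$ so that $\bar{D}(\hat{Y}(x)) \ne \lambda_k(x)$ on a set of positive $X_i$-probability, ensuring that $\recdec{i}$ disagrees with $\lambda_k(X_i)$ with positive probability.

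To verify condition (a), I would treat each model in turn. In the \emph{treatment exposure} model, pick any strictly increasing $f$ and $b_k \ge 0$ with $b_k + f(1) \le 1$; then $J_{i+1,k} = b_k + f(\tfrac{1}{i}\sum_m Z_{m,k})$ is strictly increasing in each $Z_{m,k}$. In the \emph{capacity constraint} model, set $\hat{Y}(x) \equiv 1$ so that $\Ind{\hat{Y}_m > 0} = 1$ almost surely, reducing $J_{i+1,k}$ to $b_k - f(\tfrac{1}{i}\sum_m Z_{m,k})$, which is strictly decreasing in each $Z_{m,k}$ for $f$ strictly increasing and $b_k$ chosen so that $J_{i+1,k} \in [0,1]$. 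In the \emph{low trust} model, $J_{i+1,k}$ is non-increasing in each $Z_{m,k}$ for every realization of the prediction-error indicators, and strictly decreasing in $Z_{m,k}$ whenever $\Ind{\hat{Y}_m \ne Y_{m,k}} = 1$; choosing $\hat{Y}$ to be inaccurate on a positive-probability set of inputs ensures this event has positive probability.

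The main obstacle is that in the capacity and low trust models, $J_{i,k}$ is not a deterministic function of the past treatments alone---it also depends on the random variables $\hat{Y}_m$ and $Y_{m,k}$. I would handle this by applying Theorem~\ref{thm:sutva_violation} conditionally on a positive-probability event that freezes the additional randomness: in the capacity case, the almost-sure event $\{\hat{Y}_m > 0 \text{ for all } m < i\}$ under the constant choice $\hat{Y} \equiv 1$; in the low trust case, the event $\{\hat{Y}_m \ne Y_{m,k} \text{ for some } m < i\}$, which has positive probability whenever the predictor is inaccurate on a positive-probability set. On such an event the theorem's conclusion holds strictly, and since the event itself has positive probability, the unconditional bound $\Pr(D_{i,k}(\vect{Z}) = D_{i,k}(\vect{Z}')) < 1$ follows, completing the proof.
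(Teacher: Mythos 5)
Your proposal is correct and takes essentially the same route as the paper: instantiate $f$ as a (strictly increasing) linear function, choose $b_k$ and $\hat{Y}$ so that the indicators $\Ind{\hat{Y}_m > 0}$ and $\Ind{\hat{Y}_m \ne Y_{m,k}}$ fire for some $m < i$, and then invoke Theorem~\ref{thm:sutva_violation}. The paper's one-line proof does exactly this (with $f(x)=ax$, $b_k=0$, and $\hat{Y}_\ell>0$, $\hat{Y}_\ell\ne Y_{\ell,k}$ for some $\ell<i$); your version is somewhat more careful in explicitly verifying the disagreement condition $\Pr(\lambda_k(X_i)\ne\recdec{i})>0$, keeping $J_{i,k}\in[0,1]$, and handling the extra randomness in the capacity and low-trust models by conditioning on a positive-probability event, but the underlying argument is the same.
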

\begin{proof}
   Consider linear $f$ (i.e. $f(x) = ax$), $b_k = 0$ and $\hat{Y}$ such that $\hat{Y}_\ell > 0$ and $\hat{Y}_\ell \ne Y_{\ell,k}$ for some $\ell < i$. Then $J_{i,k}$ as defined in \eqref{eq:treatment_exposure}, \eqref{eq:capacity}, or \eqref{eq:low_trust} satisfies the assumptions of Theorem~\ref{thm:sutva_violation}.
\end{proof}

We note that under our causal model, the second part of the SUTVA (Assumption~\ref{assmp:sutva}, part b) is actually not violated. This is because $Y_{i,k}$ is conditionally independent of $(Z_{1,k}, \cdots, Z_{i-1,k})$ and  $(D_{1,k}, \cdots, D_{i-1,k})$ given $D_{i,k}$, as seen from Figure~\ref{fig:expanded_causal_model}.

\citet{imai2020experimental} argued that there was no statistically significant spillover effect in their experiment, hence supporting the assumption of SUTVA. However, the question of whether a conditional randmization test such as a permutation test can detect the violation SUTVA is intricate. Permutation tests typically reorder data to assess the null hypothesis, but in this case, the order does not affect the judge's average exposure to treatment, $\E[Z_{1,k}]$. Indeed, $J_{i,k}$ is a function of an average over i.i.d. variables as defined in \eqref{eq:treatment_exposure}, \eqref{eq:capacity}, or \eqref{eq:low_trust}, and it becomes effectively constant for all cases after a certain case count $i > T$, given the law of large numbers. This convergence in $J_{i,k}$ implies that reordering would not alter the joint distribution of $D_{i,k}$'s for $i$ large enough. The conditional randomness test used by \citet{imai2020experimental}, for example, maintains the average treatment proportion, randomizing only the spillover effects that are contingent on case order, and hence cannot effectively detect interference between decision subjects that is induced by $J_{i,k}$.



\subsection{Estimation of causal effects}
\label{sec:causal_effects}

Having shown that SUTVA is violated under a model of judge bias, we now investigate the implications on the estimation of causal effects. Our goal in this section is to illustrate the underestimation of the treatment effect of predictive decision aids on the human decision that may occur, if interference due to judge bias is not taken into account in the experiment's randomization scheme. We discuss a worked example focusing on the \emph{treatment exposure} and \emph{capacity constraint} model, leaving the investigation of the \emph{low trust} model to the appendix.

Consider two different treatment assignments (to $n$ total decision subjects) that have the same treated and untreated decision subjects, but assign these treated decision subjects to different decision makers (and we assume each case is only seen by one judge in each treatment assignment). For simplicity, we consider two judges, $k=1,2$, who are assumed to be identical, apart from their assigned cases. Specifically, both of them have decision making model as described in equation~\eqref{eq:treatment_exposure}, where $f(z) = az$ is a linear function and $b_1 = b_2 = b$.

\begin{enumerate}
    \item (Uniform randomization) $\vect{Z} =(\vect{Z}_{\cdot, 1},\vect{Z}_{\cdot, 2})$ such that Judge 1 receives 50\% of total cases, 50\% of them are treated and 50\% are untreated. Judge 2 receives other 50\% of total cases, 50\% of them are treated and 50\% are untreated.
    \item (Two-level randomization) $\vect{Z}' = (\vect{Z'}_{\cdot, 1},\vect{Z'}_{\cdot, 2})$ such that Judge 1 receives 50\% of total cases, 100\% of them are treated. Judge 2 receives other 50\% of total cases, 100\% are untreated.
\end{enumerate}
The above two treatment assignments each represent a different randomization scheme: 1) single-level randomization by decision subjects (cases) only, or 2) two-level randomization by decision makers and decision subjects. This is illustrated in Figure~\ref{fig:comparison_randomization_level}.

\begin{figure*}[ht]
	\centering	
 \begin{subfigure}{0.5\textwidth}
  \centering
  \captionsetup{justification=centering}
\includegraphics[width=\linewidth]{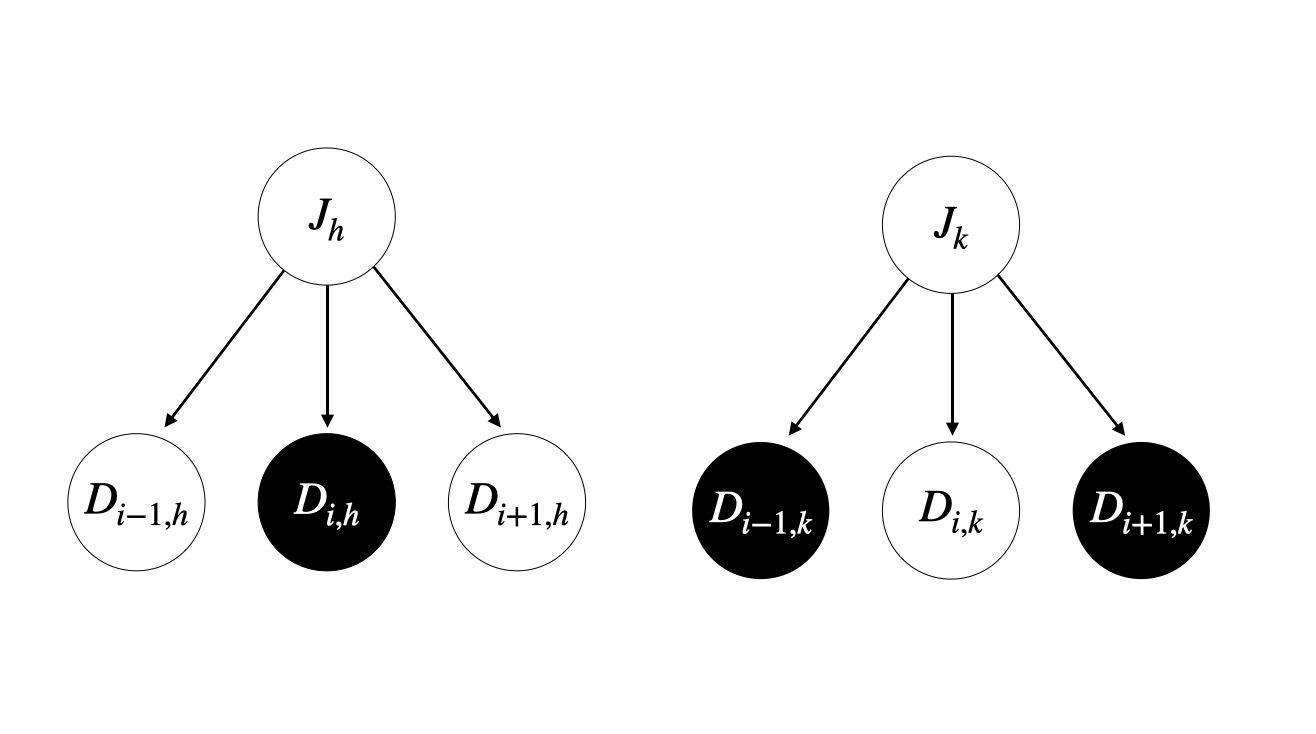}
  \caption{Uniform (case level) \\randomization.}
  \label{fig:sub1}
\end{subfigure}%
\begin{subfigure}{0.5\textwidth}
  \centering
  \captionsetup{justification=centering}
  \includegraphics[width=\linewidth]{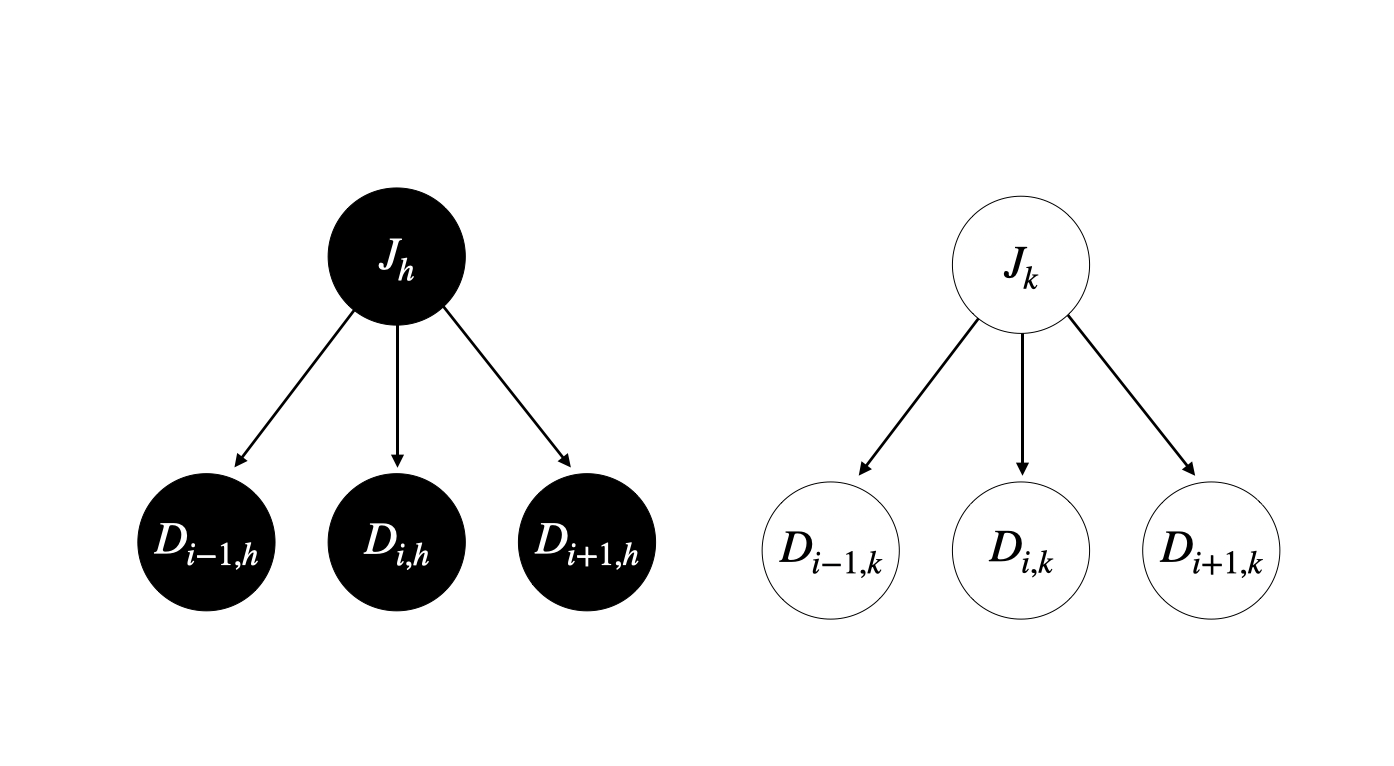}
  \caption{Two level (Decision-maker level) \\randomization.}
  \label{fig:sub2}
\end{subfigure}
\caption{Prior experimental designs randomize the treatment for the algorithmic intervention at (a) the case level, and not (b) the decision-maker level.}\label{fig:comparison_randomization_level}
\end{figure*}


Recall the definition of the average treatment effect (ATE) of treatment $Z_{i,k}$ on decisions $D_{i,k}$:
\begin{equation}
    ATE := \E[D(Z = 1) - D(Z = 0)]
\end{equation}
In our two hypothetical experiments, suppose we use the following estimator of ATE:
\begin{equation}
    \widehat{ATE} := \frac{2}{n}\sum_{i=1}^n\sum_{k=1}^2D_{i,k}\Ind{ Z_{i,k} = 1} - D_{i,k} \Ind{Z_{i,k} = 0}
\end{equation}
Given that decision subjects are randomly assigned to treatment (with half of the decision subjects being treated), $\widehat{ATE}$ appears to be an unbiased estimate of $ATE$. Yet, under our model of $D_{i,k}$ (e.g., \eqref{eq:treatment_exposure}), the $ATE$ is clearly under-specified, as $D_{i,k}$ in general depends not only on $Z_{i,k}$ but also $\{Z_{1, k}, \cdots, Z_{i-1, k}\}$. If the goal is to estimate the treatment effect of providing a predictive decision aid to a decision maker \emph{consistently}, we would really like, perhaps, to estimate the average treatment effect under total treatment.
\begin{equation}
    \overline{ATE} := \E[D(\vect{Z} = \vect{1}) - D(\vect{Z} = \vect{0})].
\end{equation} In that case, the standard estimator $\widehat{ATE}$ would suffer from large bias if used to estimate $\overline{ATE}$, resulting in underestimation (or overestimation) of the treatment effect. Proposition~\ref{prop:treatment_effect} and Proposition~\ref{prop:capacity_constraint} below show that there can be a significant gap between the expectation of the $\widehat{ATE}$ estimator under different treatment assignments, under the  treatment exposure model and the capacity constraint model of judge decision making respectively. Specifically, we compare $\widehat{ATE}_{\text{uniform}}:=\widehat{ATE}(\vect{Z})$ and that of $\widehat{ATE}_{\text{two-level}}:=\widehat{ATE}(\vect{Z}')$. We include the proofs in Appendix~\ref{app:proofs}.


\begin{proposition}[Estimated treatment effects under treatment exposure model]\label{prop:treatment_effect}
Assume both judge 1 and 2's decision model is as described in equation~\eqref{eq:model_of_judge}, and $J_{i,k}$ follows \eqref{eq:treatment_exposure}, with $b_{k} = b \in(0,1) \forall k$ and $f(x) = ax, a \in (0,1-b)$. Suppose we have $\E[\bar{D}(\hat{Y}_1) - \lambda_k(X_1)] = \rho > 0$. Then
\begin{equation}
  \E[\widehat{ATE}_{\text{two-level}}]- \E[\widehat{ATE}_{\text{uniform}}] = \frac{a \cdot \rho }{2}
\end{equation}
\end{proposition}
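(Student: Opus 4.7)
The plan is to substitute the judge's decision model~\eqref{eq:model_of_judge} into each estimator, compute $\E[\widehat{ATE}]$ under the two randomizations by conditional expectation, and then take the difference. My first step is to collapse the piecewise form of $D_{i,k}$ into the single expression
\[
D_{i,k} \;=\; \lambda_k(X_i) \,+\, Z_{i,k}\,\epsilon_{i,k}\,\big(\recdec{i}-\lambda_k(X_i)\big),
\]
which is valid because the bracket vanishes whenever $\recdec{i}=\lambda_k(X_i)$ and the whole term drops out when $Z_{i,k}=0$. Setting $\mu := \E[\lambda_k(X_1)]$, the conditional-independence structure encoded in Figure~\ref{fig:expanded_causal_model}, together with the i.i.d.-ness of the $(X_i,\hat{Y}_i)$, then yields
\[
\E[D_{i,k}\mid Z_{i,k}=1,\,Z_{1,k},\ldots,Z_{i-1,k}] \;=\; \mu + J_{i,k}\,\rho, \qquad \E[D_{i,k}\mid Z_{i,k}=0] \;=\; \mu.
\]

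Next I would plug these into $\widehat{ATE}$. Because both designs use exactly $n/2$ treated and $n/2$ untreated cases, the $\mu$ contributions cancel between the treated-minus-untreated sums, leaving
\[
\E[\widehat{ATE}] \;=\; \frac{2\rho}{n}\sum_{i,k}\E\!\big[J_{i,k}\,\Ind{Z_{i,k}=1}\big].
\]
Using $J_{i,k}=b+\tfrac{a}{i-1}\sum_{m<i}Z_{m,k}$, I would evaluate this sum under each scheme. Under uniform randomization the $Z_{m,k}$ are i.i.d.\ $\mathrm{Bernoulli}(1/2)$, so $J_{i,k}$ is independent of $Z_{i,k}$ with mean $b+a/2$, giving per-term contribution $(b+a/2)/2$. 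Under two-level randomization, judge~1 sees only treated cases, so $J_{i,1}=b+a$ deterministically with $\Pr(Z_{i,1}=1)=1$; judge~2's contribution is zero. Summing over the $n/2$ cases per judge and differencing produces the claimed $a\rho/2$ (up to a vanishing $O(1/n)$ correction from $i=1$, where $J_{1,k}=b$ under either scheme).

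The step I expect to require the most care is justifying the factorization $\E[D_{i,k}\mid Z_{i,k}=1,\,Z_{1,k},\ldots,Z_{i-1,k}] = \mu + J_{i,k}\rho$: it simultaneously uses (a) independence of $\epsilon_{i,k}$ from $(X_i,\hat{Y}_i)$ and the past treatment assignments, (b) i.i.d.-ness of $(X_i,\hat{Y}_i)$ across cases so that $\rho$ is case-independent, and (c), for the uniform scheme, independence of $Z_{i,k}$ from $(Z_{1,k},\ldots,Z_{i-1,k})$, which ensures that conditioning on $Z_{i,k}=1$ does not perturb the mean of $J_{i,k}$. Once these factorizations are in hand, the remaining steps are bookkeeping.
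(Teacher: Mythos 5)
Your proposal is correct and follows essentially the same route as the paper's proof: both collapse \eqref{eq:model_of_judge} to $D_{i,k}=\lambda_k(X_i)+Z_{i,k}\,\epsilon_{i,k}\,(\recdec{i}-\lambda_k(X_i))$, use the independence of $(X_i,\hat Y_i)$ and $\epsilon_{i,k}$ from the treatment assignments to factor each treated term into $\rho\cdot\E[J_{i,k}\mid Z_{i,k}=1]$, and then evaluate this mean as $b+\tfrac{a}{2}$ under uniform randomization versus $b+a$ under two-level randomization, yielding the gap $\tfrac{a\rho}{2}$. Your explicit flagging of the $i=1$ boundary term (which the paper's proof silently absorbs) is the only substantive difference, and it is a point in your favor rather than a gap.
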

Intuitively, the gap between the two estimates increases as 1) the judge responsiveness is more sensitive to past exposure ($a$ is larger) and 2) the expected difference between the judge's default decision and the algorithmically recommended decision increases ($\rho$ is larger).

As seen above, under the treatment exposure model, uniform randomization results in the underestimation of the $\overline{ATE}$. Under the capacity constraint model (Proposition~\ref{prop:capacity_constraint}), however, uniform randomization actually results in the overestimation of the $\overline{ATE}$.


\begin{proposition}[Estimated treatment effects under capacity constraint model]\label{prop:capacity_constraint}
Assume both judge 1 and 2's decision model is as described in equation~\eqref{eq:model_of_judge}, and $J_{i,k}$ follows \eqref{eq:capacity}, with $b_{k} = b \in(0,1) \forall k$ and $f(x) = ax, a \in (0,1-b)$. Suppose we have $\E[\bar{D}(\hat{Y}_1) - \lambda_k(X_1)] = \rho > 0$ and $\Pr(\hat{Y}_1 > 0) = \gamma > 0$. Then
\begin{equation}
  \E[\widehat{ATE}_{\text{two-level}}]- \E[\widehat{ATE}_{\text{uniform}}] = -\frac{a\cdot\rho\cdot\gamma}{2}
\end{equation}
\end{proposition}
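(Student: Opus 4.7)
The plan is to mirror the strategy likely used to prove Proposition~\ref{prop:treatment_effect}, adapting the computation to the capacity constraint model \eqref{eq:capacity}. First, I would express the conditional expectation of $D_{i,k}$ under each treatment status using the decision model \eqref{eq:model_of_judge}. Since $\epsilon_{i,k} \sim \mathrm{Ber}(J_{i,k})$ drawn independently, we can write $D_{i,k} = \epsilon_{i,k}\bar{D}(\hat{Y}_i) + (1-\epsilon_{i,k})\lambda_k(X_i)$ whenever $Z_{i,k}=1$ (whether or not $\bar{D}(\hat{Y}_i)$ and $\lambda_k(X_i)$ agree). Crucially, $J_{i,k}$ depends only on the history $\{(Z_{m,k}, \hat{Y}_m)\}_{m<i}$ and is therefore independent of the current case's $(X_i, \hat{Y}_i)$ under the i.i.d. assumption on cases, giving $\E[D_{i,k} \mid Z_{i,k}=1] = \E[J_{i,k}]\rho + \E[\lambda_k(X_1)]$ and $\E[D_{i,k} \mid Z_{i,k}=0] = \E[\lambda_k(X_1)]$.

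The key observation next is that both randomization schemes produce exactly $n/2$ treated and $n/2$ untreated cases overall, and since the judges are identical (so $\lambda_1$ and $\lambda_2$ are equidistributed), the $\E[\lambda_k(X_1)]$ contributions cancel in $\widehat{ATE}$. This reduces the expected estimator under either scheme to
\begin{equation*}
\E[\widehat{ATE}] = \frac{2\rho}{n}\sum_{(i,k):\, Z_{i,k}=1} \E[J_{i,k}].
\end{equation*}

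Next I would compute $\E[J_{i,k}]$ under each scheme using linearity of $f$. Since $Z_{m,k}$ (the treatment indicator) and $\hat{Y}_m$ (a function of $X_m$, drawn independently of the treatment assignment) are independent, $\E[Z_{m,k}\Ind{\hat{Y}_m > 0}] = \Pr(Z_{m,k}=1)\cdot\gamma$. Under uniform randomization, $\Pr(Z_{m,k}=1)=1/2$ among the cases assigned to judge $k$, so $\E[J_{i,k}] = b - \tfrac{1}{2}a\gamma$ for $i \geq 2$. Under two-level randomization, judge 1 sees only treated cases, so $\Pr(Z_{m,1}=1)=1$ and $\E[J_{i,1}] = b - a\gamma$. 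Substituting into the displayed formula, $\E[\widehat{ATE}_{\text{uniform}}] = \rho(b - \tfrac{1}{2}a\gamma)$ (summing over $n/4$ treated cases per judge) and $\E[\widehat{ATE}_{\text{two-level}}] = \rho(b - a\gamma)$ (summing over $n/2$ treated cases for judge 1); subtracting yields the claimed $-\tfrac{a\rho\gamma}{2}$.

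The main technical care is ensuring that $J_{i,k}$ is independent of the current case's features — an immediate consequence of the i.i.d.\ structure on $(X_i, \hat{Y}_i)$ and the randomness of the treatment assignment — and handling the boundary case $i=1$, where \eqref{eq:capacity} is not literally defined but the convention $J_{1,k}=b$ introduces only an $O(1/n)$ correction that is absorbed by the $2/n$ scaling. The qualitative differences from Proposition~\ref{prop:treatment_effect} — the sign flip and the extra factor of $\gamma$ — stem directly from the $-f(\cdot)$ in \eqref{eq:capacity} versus $+f(\cdot)$ in \eqref{eq:treatment_exposure}, and from the fact that only exposure to positive predictions (occurring with probability $\gamma$) contributes to the capacity constraint.
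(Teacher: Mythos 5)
Your proposal is correct and follows essentially the same route as the paper's proof: condition on treatment status, use the identity $D_{i,k}=\epsilon_{i,k}\bar{D}(\hat{Y}_i)+(1-\epsilon_{i,k})\lambda_k(X_i)$ for treated cases together with the independence of $J_{i,k}$ from the current case's $(X_i,\hat{Y}_i)$, compute $\E[J_{i,k}]$ as $b-\tfrac{1}{2}a\gamma$ versus $b-a\gamma$ under the two schemes, and subtract. Your explicit remarks on the cancellation of the $\E[\lambda_k(X_1)]$ terms and the $i=1$ boundary case are slightly more careful than the paper's write-up, but the argument is the same.
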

Here the gap between the two estimates increases as 1) $a$ is larger and 2) $\rho$ is larger, as before, but also when 3) the positive prediction rate increases ($\gamma$ is larger).

In this section we illustrated the implications that different randomization schemes have on treatment effect estimation under our model of judge decision making bias. Although we have focused our investigation on estimation of the population treatment effect, our observation that treatment randomization schemes can introduce bias into the estimated treatment effects likely also extends to other types of treatment effects such as those based on principal stratification \citep[see e.g., Average Principal Causal Effects from][]{imai2020experimental}. 

\section{SEMI-SYNTHETIC EXPERIMENTS}
\label{sec:expt}
We test our findings empirically by setting up semi-synthetic simulations, using the data from the experiment conducted in ~\cite{imai2020experimental}, as described in Section~\ref{sec:existing}. Further details of our experimental setting and results can be found in Appendix ~\ref{app:exps}. 


\paragraph{Simulation details.} 
Using our model of $J_{i,k}$, we can derive an alternate set of judge decisions to that in the original experiment. In the original study, all of the treated and untreated cases were assigned to a single judge. We simulate multiple judges, and a corresponding set of alternative decisions to observe the impact of these biases (Section~\ref{sec:judge_dec}) on estimations of the overall average treatment effect estimate on judge decisions. To do this, we change the simulated judge $J_k$ assigned to a particular case, and their subsequent decisions $D_{i,k}$, using the models of responsiveness in Equations~\ref{eq:treatment_exposure}, \ref{eq:capacity}, and \ref{eq:low_trust}. Each model is explored in Experiments 1 (Figure~\ref{fig:exp1_results}), 2 (Figure~\ref{fig:exp2_results}), and 3 (Figure~\ref{fig:exp3_results}) respectively. We do not manipulate other case details directly for our simulation, and observe the impact of these changes to the average treatment effect (ATE) of being exposed to a PSA score on decisions. We use a binary normalized form of judge decisions (i.e. $D_{i,k} = 0$ for a signature bond release, $D_{i,k} = 1$ for any kind of cash bond). 
We set $\lambda_k(X_i)$ to be the set of provided original judge decisions, and $\recdec{i}$ to be the recommended decision under official interpretation guidelines for the PSA. 
Descriptive statistics, implementation details and results for decision correctness can be found in Appendix~\ref{app:exps}. 
The data from  ~\cite{imai2020experimental} is publicly available \href{https://dataverse.harvard.edu/dataverse/harvard?q=%20Replication%20data%20for:%20Experimental%20evaluation%20of%20algorithm-assisted%20human%20decision-making:%20Application%20to%20pretrial%20public%20safety%20assessment}{here}, and the replication code for experiments is available  \href{https://anonymous.4open.science/r/exp-design-human-decisions-A3CD/README.md}{here}.
Each result is over a simulation with 1000 trials, and plotted with standard error bars.

Note that 
the average treatment effects reported for the empirical results are not absolute, but are changes in the measured treatment effect relative to  the baseline reported treatment effect in the original single-judge study. As such, we already have a baseline as defined by the original experiment and specifically report deviations from this baseline as our main results.

\subsection{Experiment 1: Treatment Exposure Effect}

As discussed in Appendix ~\ref{app:exps}, we investigate two cases -- one with two judges (treated at 50\% each (Scenario 1A) or ~100\% treated and ~0\% treated (Scenario 1B)), and one with three judges (treated at 33\% each (Scenario 2A) or at 60\%, 30\% and 10\% respectively (Scenario 2B)). We observe that the largest treatment effect measurements on decisions and decision correctness (ie. the correlation of the decision and the recorded downstream outcome) apply in Scenario 1B, though differences are smaller in the latter case. We find that modifying treatment exposure regimes can adjust the causal effect estimate on decisions by up to $~10x$ (i.e. for non-white female (NWF) subjects), though this impact varies (e.g. for white men (WM), there is little observed impact). 

\begin{figure}
  \centering
 \includegraphics[clip, width=\columnwidth]{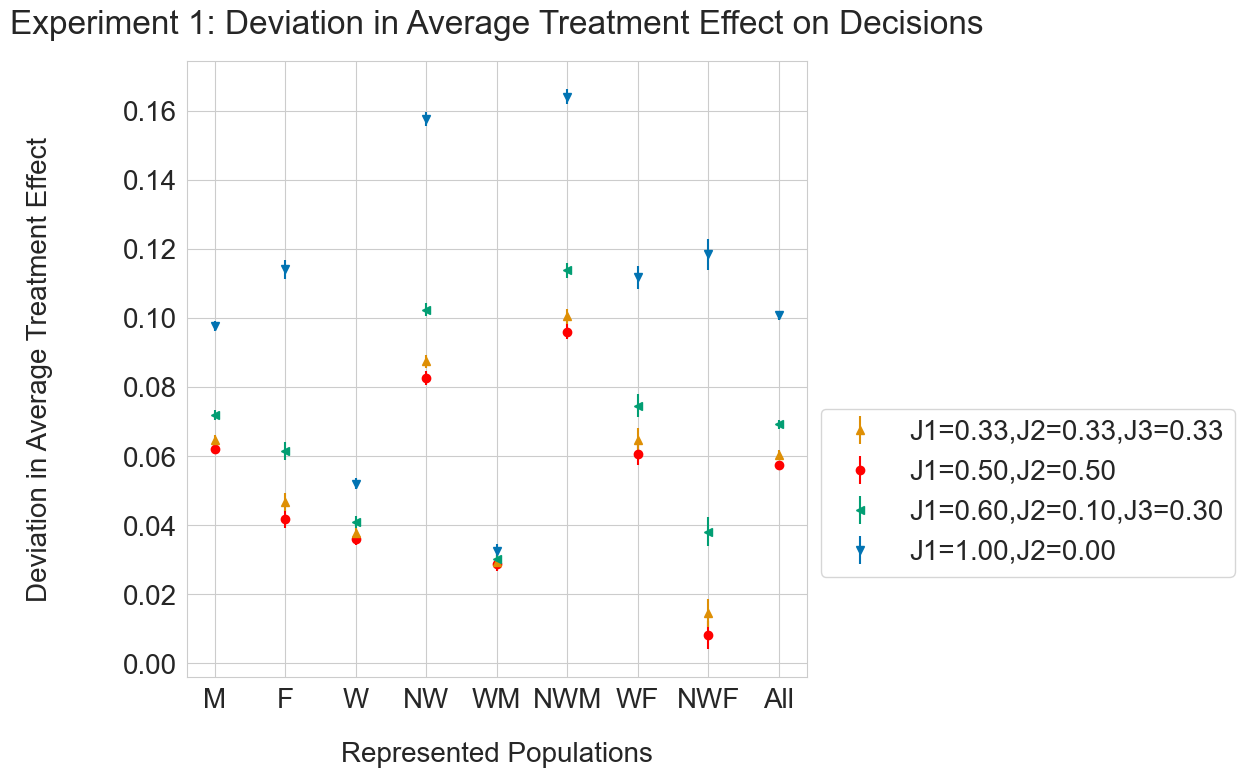}
  \caption{\textbf{Experiment 1: Treatment Exposure Effect.} We empirically observe changes to the ATE under different treatment assignments for judges $J_1$, $J_2$, $J_3$. 
  Results for 1000 trials, with M = Male, F = Female, NW = Non-White, W = White.}
  \label{fig:exp1_results}
\end{figure}

\subsection{Experiment 2: Capacity Constraint effect}

The available data allows decision thresholds to be set at a level of 3,4,5 or 6. Given the same treatment exposure, we find that there are average treatment effect measurement differences under different algorithm decision thresholds (and thus changes to the positive prediction rate of the algorithm). For example, at a lower algorithmic decision threshold (eg. threshold = 3), the algorithm predicts $\hat{Y} = 1$ (i.e. detainment) at a higher frequency and, under our judge responsiveness model, the measured treatment effect on decisions is lower. Interestingly, this effect does not meaningful impact the measure of changes to simulated decision correctness (See details in Appendix ~\ref{app:exps}).

\begin{figure}
  \centering
  \includegraphics[clip, width=0.8\columnwidth]{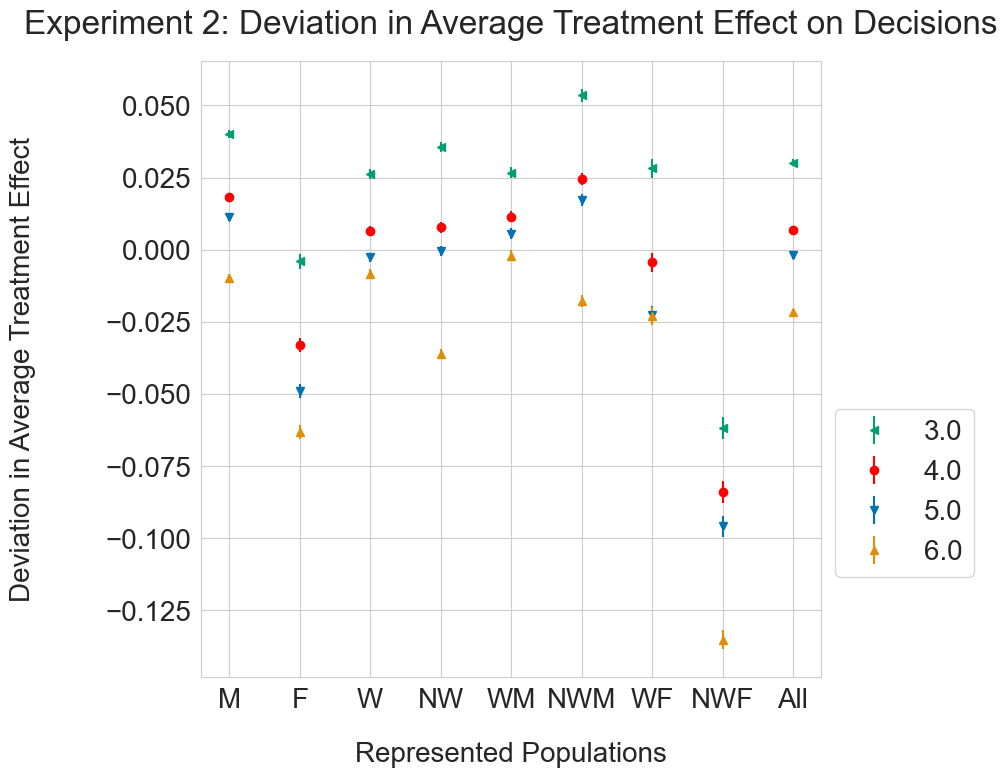}
 \caption{\textbf{Experiment 2: Capacity Constraint effect.} By modifying the threshold on the PSA score (which ranges from 3 to 6), we can simulate a modification of the decision aid's positive prediction rate. 
 We find that a lower prediction threshold means that the algorithm predicts detainment at a higher frequency and the measured treatment effect on decisions is lower. Results are for 1000 trials, with M = Male, F = Female, NW = Non-White, W = White. In this illustrative example, $J_1$, $J_2$ and $J_3$ are assigned $Z=1$ in about 33\% of their cases respectfully.} 
  \label{fig:exp2_results}
\end{figure}

\subsection{Experiment 3: Low Trust Effect}
In the low trust setting, we derive hypothetical decisions made under different algorithmic recommendation accuracy “boosts”, where an accuracy boost of 0 results in the default algorithmic recommendation accuracy of the real world setting, with respect to actual real world outcomes ($\approx 54\%$). If the boost is at 0.1, then 10\% of the algorithmic recommendations are flipped to the actual correct decision with respect to the recorded real world outcome - this simulates a higher accuracy model (eg. the 10\% boosted algorithmic recommendation accuracy is $\approx 67\%$). Under our judge responsiveness model in this setting, low accuracy predictions lead to an underestimation of the measured treatment effect on decisions and this impact is even greater for decision correctness (See details in Appendix ~\ref{app:exps}).

\begin{figure}
  \centering
 \includegraphics[clip, width=0.8\columnwidth]{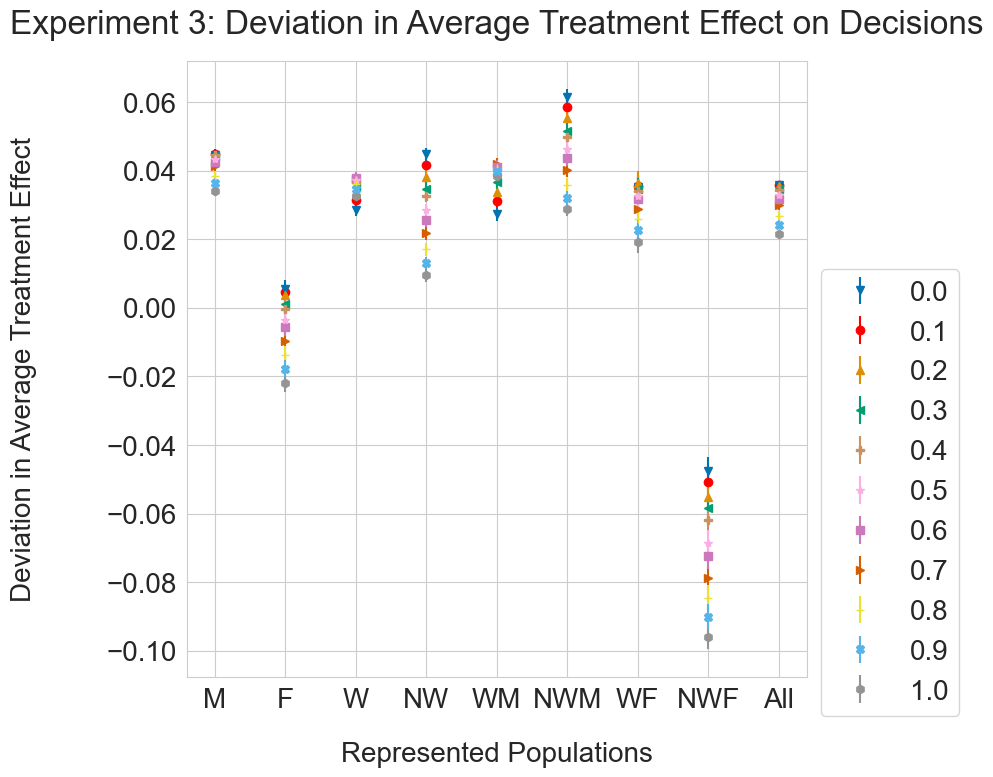}
 \caption{\textbf{Experiment 3: Low Trust Effect.} Under our model, lower accuracy algorithmic recommendations lead to less judge responsiveness and an underestimation of the measured treatment effect on decisions. We model this with hypothetical decisions made under incremental accuracy “boosts” to the algorithmic recommendation. Results are for 1000 trials, with M = Male, F = Female, NW = Non-White, W = White. In this illustrative example, $J_1$, $J_2$ and $J_3$ are assigned $Z=1$ in about 33\% of their cases respectfully.} 
  \label{fig:exp3_results}
\end{figure}

\section{CONCLUSION}

Past work looking at judge responsiveness to automated decision systems explicitly models the phenomenon as a fixed, static property of the judge interacting with the prediction used for automated decision support. This applies to past work in HCI (eg, ~\cite{green2019disparate} assume a fixed judge bias, similar to $b_k$) and past work on observational studies done in economics (eg. ~\cite{albright2019if, hoffman2018discretion} and others also assume a fixed judge bias, similar to $b_k$). In this work, we propose that the variability in judge responsiveness is not just a fixed attribute inherent to each judge, but that this judge variability likely also arises as the result of the impact of specific experiment design decisions on overall judge responsiveness (i.e. experiment design choices that impact treatment exposure, model trust, and capacity constraints); judge responsiveness is thus not just a static property inherent to each judge, but also impacted by conditions that the experimenter can control. We thus integrate our understanding of interference in causal inference with the existing discussion on judge responsiveness, demonstrating that if we model the prediction as part of an intervention, estimates of causal effects will be influenced by experiment design decisions through the mechanism of the judge’s decision-making across cases. 



This work has implications on how we design future experiments in the context of prediction-based interventions, as well as how we study and measure treatment effects in the context of experimental evaluations for prediction-based interventions.
Further study of multi-judge experimental settings, two-level randomization designs, and multi-factor experiment designs (i.e., allowing for comparisons of multiple prediction system options, multiple thresholds, and thus positive prediction rates, etc.) for prediction-based interventions is needed. 
Further theoretical directions to explore include further discussions on the interactions between multiple responsiveness factors on overall treatment effects, long-term learning effects, group-level impacts, as well as the exploration of measurement corrections to the reported deviations to causal effect estimates. We truly hope that many of these directions can be explored further in future work. 


Although our empirical results are limited to data from the judicial pre-trial risk assessment context, our justification for the judge responsiveness factors derives from empirical and theoretical evidence observed across a variety of domains, including healthcare and education (see Appendix ~\ref{app:further_rel-domains}, ~\ref{app:further_rel-evidence}). The judge responsiveness factors thus apply more broadly than the criminal justice domain and have been previously observed in other settings. We hope to see further exploration of these other domains in future work.


 \paragraph{Acknowledgments}
I.D.R. was funded by the Mozilla Foundation and the MacArthur Foundation. L.T.L. was supported by an Amazon Research Award Fall 2023. Any opinions, findings, and conclusions or recommendations expressed in this material are those of the authors and do not reflect the views of Amazon. We appreciate the comments and feedback from colleagues Avi Feller, Benjamin Recht and Imai Kosuke.  

\bibliographystyle{plainnat} 
\bibliography{ref} 

\section*{Checklist}

The checklist follows the references. For each question, choose your answer from the three possible options: Yes, No, Not Applicable.  You are encouraged to include a justification to your answer, either by referencing the appropriate section of your paper or providing a brief inline description (1-2 sentences). 
Please do not modify the questions.  Note that the Checklist section does not count towards the page limit. Not including the checklist in the first submission won't result in desk rejection, although in such case we will ask you to upload it during the author response period and include it in camera ready (if accepted).

\textbf{In your paper, please delete this instructions block and only keep the Checklist section heading above along with the questions/answers below.}

 \begin{enumerate}

 \item For all models and algorithms presented, check if you include:
 \begin{enumerate}
   \item A clear description of the mathematical setting, assumptions, algorithm, and/or model. [Yes, see Section~\ref{sec:model}.]
   \item An analysis of the properties and complexity (time, space, sample size) of any algorithm. [Not Applicable]
   \item (Optional) Anonymized source code, with specification of all dependencies, including external libraries. [Yes]
 \end{enumerate}

 \item For any theoretical claim, check if you include:
 \begin{enumerate}
   \item Statements of the full set of assumptions of all theoretical results. [Yes, see section~\ref{sec:results}]
   \item Complete proofs of all theoretical results. [Yes we sketch proof ideas in \ref{sec:results} and include complete proofs in Appendix.]
   \item Clear explanations of any assumptions. [Yes]     
 \end{enumerate}

 \item For all figures and tables that present empirical results, check if you include:
 \begin{enumerate}
   \item The code, data, and instructions needed to reproduce the main experimental results (either in the supplemental material or as a URL). [Yes in Section~\ref{sec:expt} and Appendix.]
   \item All the training details (e.g., data splits, hyperparameters, how they were chosen). [Not Applicable]
         \item A clear definition of the specific measure or statistics and error bars (e.g., with respect to the random seed after running experiments multiple times). [Yes]
         \item A description of the computing infrastructure used. (e.g., type of GPUs, internal cluster, or cloud provider). [Yes, in appendix]
 \end{enumerate}

 \item If you are using existing assets (e.g., code, data, models) or curating/releasing new assets, check if you include:
 \begin{enumerate}
   \item Citations of the creator If your work uses existing assets. [Yes, see citation in text.]
   \item The license information of the assets, if applicable. [Not Applicable]
   \item New assets either in the supplemental material or as a URL, if applicable. [Yes, code base linked anonymously]
   \item Information about consent from data providers/curators. [Not Applicable]
   \item Discussion of sensible content if applicable, e.g., personally identifiable information or offensive content. [Not Applicable]
 \end{enumerate}

 \item If you used crowdsourcing or conducted research with human subjects, check if you include:
 \begin{enumerate}
   \item The full text of instructions given to participants and screenshots. [Not Applicable]
   \item Descriptions of potential participant risks, with links to Institutional Review Board (IRB) approvals if applicable. [Not Applicable]
   \item The estimated hourly wage paid to participants and the total amount spent on participant compensation. [Not Applicable]
 \end{enumerate}

 \end{enumerate}

\appendix

\onecolumn

\section*{Appendix / supplemental material}

\section{Examples of capacity constraint and low trust model}\label{app:instance}

A particular instantiation of the \emph{capacity constraint} model is based on a simple thresholding of the average number of `high risk' predictions seen:
\begin{equation}
J_{i,k}=
    \begin{cases}
        1 & \text{if } \sum_{m=1}^{i} Z_{m,k}\cdot \Ind{\hat{Y}_m > 0} < i\tau\\
        0 & \text{o.w.}
    \end{cases},
\end{equation}
where $\tau \in (0,1)$. This instance of the judge's decision making model suggests that the judge becomes always responsive to the algorithmic recommendation as long as the cumulative average `high risk' prediction is below~$\tau$. Otherwise, they always trust themselves over the model in moments of disagreement between their default decision and the ADS decision.

A particular instantiation of the \emph{low trust} model is based on a simple thresholding of the average number of predictive `errors' observed:
\begin{equation}
J_{i,k}=
    \begin{cases}
        1 & \text{if } \sum_{m=1}^{i}Z_{m,k} \cdot \Ind{\hat{Y}_m \neq Y_{m,k}} < i\tau\\
        0 & \text{o.w.}
    \end{cases},
\end{equation}
where $\tau \in (0,1)$. This instance of the judge's decision making model suggests that the judge becomes always responsive to the algorithmic recommendation as long as the cumulative average error rate of prediction $\hat{Y}$ is below $\tau$. Otherwise, they always trust themselves over the model in moments of disagreement between their default decision and the ADS decision. 

\section{Proofs}\label{app:proofs}

\begin{proof}[Proof of Theorem~\ref{thm:sutva_violation}]  Consider $\vect{Z}, \vect{Z}'$ such that $Z_{1,k}= Z_{2,k} = \cdots = Z_{i-1,k} = 0$ and  $Z'_{1,k}= Z'_{2,k} = \cdots = Z'_{i-1,k} = 1$. Also let $Z_{i,k} = Z_{i,k}' = 1$. First consider the case that $J_{i,k}$ is monotonically non-decreasing. Then we have
\begin{align*}
    J_{i,k}(\vect{Z}) < J_{i,k}(\vect{Z}'),
\end{align*}
by our assumption that $J_{i,k}(z_{1,k}, \cdots, z_{i-1,k})$ is strictly increasing in at least one of $z_{1,k}, \cdots, z_{i-1,k}$. In other words, we have $\Pr\left(\epsilon_{i,k}(\vect{Z}) = 1) < \Pr(\epsilon_{i,k}(\vect{Z}') = 1\right)$.

Applying the definition of $D_{i,k}$, we then lower bound the probability that $D_{i,k}(\vect{Z})$ differs from $D_{i,k}(\vect{Z}')$ as follows.
\begin{align*}
    \Pr\left(D_{i,k}(\vect{Z}) \ne D_{i,k}(\vect{Z}')\right) &\ge \Pr\{\lambda_k(X_i) \neq \recdec{i}\}\cdot \Pr\left(\epsilon_{i,k}(\vect{Z}) \ne \epsilon_{i,k}(\vect{Z}')\right) \\
    &> 0.
\end{align*}
  The proof proceeds analogously for the case where $J_{i,k}$ is monotonically non-increasing.
\end{proof}

\begin{proof}[Proof of Proposition~\ref{prop:treatment_effect}]
Recall \begin{equation}
    D_{i,k} = \left( \bar{D}(\hat{Y}_i)\cdot \epsilon_{i,k} + \lambda_k(X_i) \cdot (1-\epsilon_{i,k})\right) \cdot Z_{i,k} + \lambda_k(X_i) \cdot (1- Z_{i,k}).
\end{equation}

Under uniform randomization, we have 
\begin{align*}
    \E[\widehat{ATE}_{\text{uniform}}] &= \frac{2}{n} \sum_{i=1}^n\sum_{k=1}^2 \E[D_{i,k}\Ind{ Z_{i,k} = 1} - D_{i,k} \Ind{Z_{i,k} = 0}]
    \\
    &= \frac{1}{2n}\sum_{i=1}^n\E[D_{i,1}\mid Z_{i,1} = 1] +\E[D_{i,2}\mid Z_{i,2} = 1]\\
    &\quad\quad\quad -\E[D_{i,1}\mid Z_{i,1} = 0]-\E[D_{i,2}\mid Z_{i,2} = 0] \\
    &= \frac{1}{2n}\left(\sum_{i=1}^n\E[\bar{D}(\hat{Y}_i) - \lambda_k(X_i)]\cdot \E\left[b+ \frac{a}{i-1} \sum_{m=1}^{i-1} Z_{m,1} \mid Z_{i,1} = 1\right]  \right)\\
    &\quad +\frac{1}{2n}\left(\sum_{i=1}^n\E[\bar{D}(\hat{Y}_i) - \lambda_k(X_i)]\cdot \E\left[b+ \frac{a}{i-1} \sum_{m=1}^{i-1} Z_{m,2} \mid Z_{i,2} = 1\right]  \right)\\
    &= \frac{1}{2}\left(\E[\bar{D}(\hat{Y}_i) - \lambda_k(X_i)]\cdot (b+ \frac{a}{2}) \right)+\frac{n}{4}\left(\E[\bar{D}(\hat{Y}_i) - \lambda_k(X_i)]\cdot (b+ \frac{a}{2})  \right)\\
    &= \left(\E[\bar{D}(\hat{Y}_1) - \lambda_k(X_1)]\cdot (b+ \frac{a}{2}) \right)
\end{align*}
Note that in the third equality we applied the definition of $D_{i,k}$ and $J_{i,k}$, as well as the independence between $(\hat{Y}_i, X_i)$ and $Z_{i,k}$. In the fourth equality, we use the fact that $\E[Z_{m,2}\mid Z_{i,2} = 1] = 1/2$ for this treatment assignment. In the last equality we used the fact that $\hat{Y}_i$'s abnd $X_i's$ are i.i.d.

Under two-level randomization, we have 

\begin{align*}
    \E[\widehat{ATE}_{\text{two-level}}] &=\frac{2}{n} \sum_{i=1}^n\sum_{k=1}^2 \E[D_{i,k}\Ind{ Z_{i,k} = 1} - D_{i,k} \Ind{Z_{i,k} = 0}] \\
    &= \frac{1}{n}\sum_{i=1}^n\E[D_{i,1}\mid Z_{i,1} = 1] -\E[D_{i,2}\mid Z_{i,2} = 0] \\
    &=\frac{1}{n}\sum_{i=1}^n\E[\bar{D}(\hat{Y}_i) - \lambda_k(X_i)]\cdot \E\left[b+ \frac{a}{i-1} \sum_{m=1}^{i-1} Z_{m,1} \mid Z_{i,1} = 1\right]  \\
    &= \left(\E[\bar{D}(\hat{Y}_1) - \lambda_k(X_1)]\cdot (b+ a) \right)
\end{align*}

\end{proof}

\begin{proof}[Proof of Proposition~\ref{prop:capacity_constraint}]

Under uniform randomization, we have 
\begin{align*}
    \E[\widehat{ATE}_{\text{uniform}}] &= \frac{2}{n} \sum_{i=1}^n\sum_{k=1}^2 \E[D_{i,k}\Ind{ Z_{i,k} = 1} - D_{i,k} \Ind{Z_{i,k} = 0}]
    \\
    &= \frac{1}{2n}\sum_{i=1}^n\E[D_{i,1}\mid Z_{i,1} = 1] +\E[D_{i,2}\mid Z_{i,2} = 1]\\
    &\quad\quad\quad -\E[D_{i,1}\mid Z_{i,1} = 0]-\E[D_{i,2}\mid Z_{i,2} = 0] \\
    &= \frac{1}{2n}\left(\sum_{i=1}^n\E[\bar{D}(\hat{Y}_i) - \lambda_k(X_i)]\cdot \E\left[b- \frac{a}{i-1} \sum_{m=1}^{i-1} Z_{m,1} \cdot \Ind{\hat{Y}_m >0}\mid Z_{i,1} = 1\right]  \right)\\
    &\quad +\frac{1}{2n}\left(\sum_{i=1}^n\E[\bar{D}(\hat{Y}_i) - \lambda_k(X_i)]\cdot \E\left[b- \frac{a}{i-1} \sum_{m=1}^{i-1} Z_{m,2} \cdot \Ind{\hat{Y}_m >0}\mid Z_{i,2} = 1\right]  \right)\\
    &= \frac{1}{2}\left(\E[\bar{D}(\hat{Y}_i) - \lambda_k(X_i)]\cdot (b- \frac{a\cdot \gamma}{2}) \right)+\frac{n}{4}\left(\E[\bar{D}(\hat{Y}_i) - \lambda_k(X_i)]\cdot (b- \frac{a\cdot \gamma}{2})  \right)\\
    &= \left(\E[\bar{D}(\hat{Y}_1) - \lambda_k(X_1)]\cdot (b- \frac{a\cdot \gamma}{2}) \right)
\end{align*}
Note that in the third equality we applied the definition of $D_{i,k}$ and $J_{i,k}$, as well as the independence between $(\hat{Y}_i, X_i)$ and ($Z_{m,k}$, $\hat{Y}_m$) for $m<i$. In the fourth equality, we use the facts that $\E[Z_{m,2}\mid Z_{i,2} = 1] = 1/2$ for this treatment assignment, $Z_{m,2}$ and $\hat{Y}_m$ are independent, and $\E[\Ind{\hat{Y}_m >0}\mid Z_{i,2} = 1] = \gamma$. In the last equality we used the fact that $\hat{Y}_i$'s and $X_i$'s are i.i.d.

Under two-level randomization, we have 

\begin{align*}
    \E[\widehat{ATE}_{\text{two-level}}] &=\frac{2}{n} \sum_{i=1}^n\sum_{k=1}^2 \E[D_{i,k}\Ind{ Z_{i,k} = 1} - D_{i,k} \Ind{Z_{i,k} = 0}] \\
    &= \frac{1}{n}\sum_{i=1}^n\E[D_{i,1}\mid Z_{i,1} = 1] -\E[D_{i,2}\mid Z_{i,2} = 0] \\
    &=\frac{1}{n}\sum_{i=1}^n\E[\bar{D}(\hat{Y}_i) - \lambda_k(X_i)]\cdot \E\left[b- \frac{a}{i-1} \sum_{m=1}^{i-1} Z_{m,1} \cdot\Ind{\hat{Y}_m >0}\mid Z_{i,1} = 1\right]  \\
    &= \left(\E[\bar{D}(\hat{Y}_1) - \lambda_k(X_1)]\cdot (b- a\cdot \gamma) \right)
\end{align*}

\end{proof}

\section{Treatment effect estimation under Low trust model}\label{app:low_trust}

As discussed in Section~\ref{sec:judge_dec}, equation~\eqref{eq:low_trust} cannot directly model  settings where the judge does not observe some outcomes, or where the judge's decisions affect the observability of baseline outcomes. In our illustrative example of treatment effect estimation in the low trust model, we take for granted that the judge is always able to observe the outcome of the previous case before the next case, and is using $\Ind{\hat{Y}_i \ne Y_{i,k}}$ to determine whether the ADS made a mistake, 
as well as that the judge's decision $D_{i,k}$ is independent of outcome of interest $Y_{i,k}$ (i.e., is not a parent of $Y_{i,k}$). We note that this does not hold in our motivating example of pre-trial risk assessment, but may hold approximately in other settings such as medical diagnosis \citep{wong2021external}. In this setting, our assumption is saying that $Y_{i,k}$ (a health status) will always be revealed to the judge (clinician)---regardless of whether they `respond' to the prediction $\hat{Y}_i$ (e.g. by administering antibiotics)--such that they can compare $Y_{i,k}$ and $\hat{Y}_i$ to adjust how much to trust the ADS in the future.

\begin{proposition}[Estimated treatment effects under low trust model]\label{prop:low_trust}
Assume both judge 1 and 2's decision model is as described in equation~\eqref{eq:model_of_judge}, and $J_{i,k}$ follows \eqref{eq:low_trust}, with $b_{k} = b \in(0,1) \forall k$ and $f(x) = ax, a \in (0,1-b)$. Further assume that $D_{i,k}$ is not a parent of $Y_{i,k}$. Suppose we have $\E[\bar{D}(\hat{Y}_1) - \lambda_k(X_1)] = \rho > 0$ and $\Pr(\hat{Y}_1 \neq Y_{i,k}) = \theta > 0$. Then
\begin{equation}
  \E[\widehat{ATE}_{\text{two-level}}]- \E[\widehat{ATE}_{\text{uniform}}] = -\frac{a\cdot\rho\cdot\theta}{2}
\end{equation}
\end{proposition}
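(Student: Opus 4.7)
The plan is to mirror the proof of Proposition~\ref{prop:capacity_constraint}, with the indicator $\Ind{\hat{Y}_m > 0}$ (whose probability was $\gamma$) replaced by the indicator $\Ind{\hat{Y}_m \neq Y_{m,k}}$ (whose probability is $\theta$). First I would unpack $\E[\widehat{ATE}]$ under each randomization scheme, using \eqref{eq:model_of_judge} to write $D_{i,k} = (\bar{D}(\hat{Y}_i)\epsilon_{i,k} + \lambda_k(X_i)(1-\epsilon_{i,k}))Z_{i,k} + \lambda_k(X_i)(1-Z_{i,k})$ and taking conditional expectations. Since $(\hat{Y}_i, X_i)$ is independent of both $Z_{i,k}$ and the pre-$i$ history, the per-case difference $\E[D_{i,k}\mid Z_{i,k}=1] - \E[D_{i,k}\mid Z_{i,k}=0]$ collapses to $\E[\bar{D}(\hat{Y}_i) - \lambda_k(X_i)]\cdot \E[J_{i,k}\mid Z_{i,k}=1] = \rho \cdot \E[J_{i,k}\mid Z_{i,k}=1]$. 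All the work then lies in evaluating $\E[J_{i,k}\mid Z_{i,k}=1]$ under the two schemes.

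For uniform randomization, I would substitute \eqref{eq:low_trust} and argue that for each $m < i$, the factors $Z_{m,k}$ and $\Ind{\hat{Y}_m \neq Y_{m,k}}$ are jointly independent of $Z_{i,k}$, so that $\E[Z_{m,k}\cdot \Ind{\hat{Y}_m \neq Y_{m,k}}\mid Z_{i,k}=1] = \Pr(Z_{m,k}=1)\cdot \Pr(\hat{Y}_m \neq Y_{m,k}) = (1/2)\theta$. Averaging gives $\E[J_{i,k}\mid Z_{i,k}=1] = b - a\theta/2$, and by the i.i.d.-ness of $(X_i, \hat{Y}_i)$ across $i$, $\E[\widehat{ATE}_{\text{uniform}}] = \rho\bigl(b - a\theta/2\bigr)$. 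For the two-level scheme, Judge 1 sees $Z_{m,1}=1$ deterministically for every case in her caseload, so the sum collapses to $\frac{1}{i-1}\sum_{m=1}^{i-1}\Ind{\hat{Y}_m \neq Y_{m,1}}$, which has expectation $\theta$; hence $\E[J_{i,1}\mid Z_{i,1}=1] = b - a\theta$, and only the Judge~1 treated arm and the Judge~2 untreated arm contribute to the estimator, yielding $\E[\widehat{ATE}_{\text{two-level}}] = \rho(b - a\theta)$. Subtracting gives the desired $-a\rho\theta/2$.

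The main obstacle, and the reason the extra assumption ``$D_{i,k}$ is not a parent of $Y_{i,k}$'' appears in the statement, is the factorization step that pushes $\E[Z_{m,k}\cdot \Ind{\hat{Y}_m \neq Y_{m,k}} \mid Z_{i,k} = 1]$ into the product $\Pr(Z_{m,k}=1)\cdot \theta$. In the general causal DAG of Figure~\ref{fig:expanded_causal_model}, $Y_{m,k}$ inherits a dependence on $D_{m,k}$, which in turn depends on $Z_{m,k}$ and on past treatments through $J_{m,k}$; this would couple the error indicator to the treatment history and invalidate the clean constant $\theta$. Removing the arrow from $D$ to $Y$ breaks exactly this path, so that $Y_{m,k}$ is a function of $X_m$ (and exogenous noise) alone and is independent of all treatment variables. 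Once this independence is in hand, the rest of the computation is routine and parallels Proposition~\ref{prop:capacity_constraint} line-for-line.
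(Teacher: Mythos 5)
Your proposal is correct and follows essentially the same route as the paper's own proof: the same expansion of $\widehat{ATE}$ into per-judge conditional expectations, the same factorization of the per-case difference into $\rho\cdot\E[J_{i,k}\mid Z_{i,k}=1]$ via independence of $(\hat{Y}_i,X_i)$ from the pre-$i$ history, and the same evaluation yielding $\rho(b-a\theta/2)$ versus $\rho(b-a\theta)$. Your identification of where the ``$D_{i,k}$ is not a parent of $Y_{i,k}$'' assumption enters (decoupling the error indicator from the treatment history so that $\E[Z_{m,k}\Ind{\hat{Y}_m\neq Y_{m,k}}\mid Z_{i,k}=1]=\tfrac{1}{2}\theta$) matches the paper's stated justification exactly.
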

Here the gap between the two estimates increases as 1) $a$ is larger,  2) $\rho$ is larger, and when 3) the error rate increases ($\theta$ is larger). The proof of the result is similar to Proposition~\ref{prop:capacity_constraint}, as follows.

\begin{proof}
    
Under uniform randomization, we have 
\begin{align*}
    \E[\widehat{ATE}_{\text{uniform}}] &= \frac{2}{n} \sum_{i=1}^n\sum_{k=1}^2 \E[D_{i,k}\Ind{ Z_{i,k} = 1} - D_{i,k} \Ind{Z_{i,k} = 0}]
    \\
    &= \frac{1}{2n}\sum_{i=1}^n\E[D_{i,1}\mid Z_{i,1} = 1] +\E[D_{i,2}\mid Z_{i,2} = 1]\\
    &\quad\quad\quad -\E[D_{i,1}\mid Z_{i,1} = 0]-\E[D_{i,2}\mid Z_{i,2} = 0] \\
    &= \frac{1}{2n}\left(\sum_{i=1}^n\E[\bar{D}(\hat{Y}_i) - \lambda_k(X_i)]\cdot \E\left[b- \frac{a}{i-1} \sum_{m=1}^{i-1} Z_{m,1} \cdot \Ind{\hat{Y}_m \ne Y_{m,k}}\mid Z_{i,1} = 1\right]  \right)\\
    &\quad +\frac{1}{2n}\left(\sum_{i=1}^n\E[\bar{D}(\hat{Y}_i) - \lambda_k(X_i)]\cdot \E\left[b- \frac{a}{i-1} \sum_{m=1}^{i-1} Z_{m,2} \cdot \Ind{\hat{Y}_m \ne Y_{m,k}}\mid Z_{i,2} = 1\right]  \right)\\
    &= \frac{1}{2}\left(\E[\bar{D}(\hat{Y}_i) - \lambda_k(X_i)]\cdot (b- \frac{a\cdot \theta}{2}) \right)+\frac{n}{4}\left(\E[\bar{D}(\hat{Y}_i) - \lambda_k(X_i)]\cdot (b- \frac{a\cdot \theta}{2})  \right)\\
    &= \left(\E[\bar{D}(\hat{Y}_1) - \lambda_k(X_1)]\cdot (b- \frac{a\cdot \theta}{2}) \right)
\end{align*}
Note that in the third equality we applied the definition of $D_{i,k}$ and $J_{i,k}$, as well as the independence between $(\hat{Y}_i, X_i)$ and $(Z_{m,k}, \hat{Y}_m, Y_{m,k})$ for $m<i$. In the fourth equality, we use the facts that (1) $\E[Z_{m,2}\mid Z_{i,2} = 1] = 1/2$ for this treatment assignment, (2) $Z_{m,2}$ is independent of  $\hat{Y}_m$ and $Y_{m,k}$ (since $D_{m,k}$ is not a parent of $Y_{m,k}$), and (3) $\E[\Ind{\hat{Y}_m \ne Y_{m,k}}\mid Z_{i,2} = 1] = \theta$. In the last equality we used the fact that $\hat{Y}_i$'s and $X_i$'s are i.i.d.

Under two-level randomization, we have 

\begin{align*}
    \E[\widehat{ATE}_{\text{two-level}}] &=\frac{2}{n} \sum_{i=1}^n\sum_{k=1}^2 \E[D_{i,k}\Ind{ Z_{i,k} = 1} - D_{i,k} \Ind{Z_{i,k} = 0}] \\
    &= \frac{1}{n}\sum_{i=1}^n\E[D_{i,1}\mid Z_{i,1} = 1] -\E[D_{i,2}\mid Z_{i,2} = 0] \\
    &=\frac{1}{n}\sum_{i=1}^n\E[\bar{D}(\hat{Y}_i) - \lambda_k(X_i)]\cdot \E\left[b- \frac{a}{i-1} \sum_{m=1}^{i-1} Z_{m,1} \cdot\Ind{\hat{Y}_m \ne Y_{m,k}}\mid Z_{i,1} = 1\right]  \\
    &= \left(\E[\bar{D}(\hat{Y}_1) - \lambda_k(X_1)]\cdot (b- a\cdot \theta) \right)
\end{align*}
\end{proof}

\section{Experiment Details}\label{app:exps}

In this section, we go over key details of the empirical investigation we conduct to validate our work. 

Details on the implementation of data pre-processing semi-synthetic experiments, including supplementary materials and reproduction details are available at: \href{https://anonymous.4open.science/r/exp-design-human-decisions-A3CD/README.md}{this anonymous repository}. The data from the ~\cite{imai2020experimental} study is publicly available \href{https://dataverse.harvard.edu/dataverse/harvard?q=%20Replication%20data%20for:%20Experimental%20evaluation%20of%20algorithm-assisted%20human%20decision-making:%20Application%20to%20pretrial%20public%20safety%20assessment}{here}.

Using these simulations, we investigate whether experimental design choices (eg. treatment assignment, the positive predictive rate of the model, and prediction accuracy) can determine how a judge responds to the prediction-based intervention, and thus directly informs a possible under/over-estimation of the measured treatment effect. Our goal is to empirically measure how such experimental design choices may impact the reported average treatment effect under our given judge responsiveness factors. 

In order to investigate this, we conduct an empirical study similar to that described in Section~\ref{sec:causal_effects}. Namely, for each study, we consider two different treatment assignments (to $n$ total decision subjects) that have the same treated and untreated decision subjects, but assign these treated decision subjects to different decision makers (and we assume each case is only seen by one judge in each treatment assignment). We consider $k$ judges, who are assumed to be identical, apart from their assigned cases. Specifically, all judges have the decision making model as described in equation~\eqref{eq:treatment_exposure}, though we also explore additional experiments where $f(z) = az$ is a simple linear function and $b_1 = b_2 = b$.

\subsection{Simulation Details}

During a 30-month assignment period, spanning 2017 until 2019, judges in Dane County, Wisconsin were either given or not given a pretrial Public Safety Assessment (PSA) score for a given case. The randomization was done by an independent administrative court member - the PSA score was calculated for each case $i$, and, for every even-numbered case, the PSA was shown to the judge $k$ as part of the case files. Otherwise, the PSA was hidden from the judge. Given these scores, the judge needs to make the decision, $D_{i,k}$ to enforce a signature bond, a small case bail (defined as less than \$1000) or a large case bail (defined as more than \$1000). Information about judge decisions, and defendant outcomes, $Y_{i,k}$ (ie. failure to appear (FTA), new criminal activity (NCA) and new violent criminal activity (NVCA)) are tracked for a period of two years after randomization. The case study~\cite{imai2020experimental} only looks at one judge, allowing them to simplify the situation to a single decision-maker scenario.

Using this experiment data, we simulate alternative scenarios of one of the discussed experiment design choices -- treatment assignment. Using our model of $J_{i,k}$, we can derive an alternate set of judge decisions to that in the original experiment. In the original study, all of the treated and untreated cases were assigned to a single judge. We simulate multiple judges, and a corresponding set of alternative decisions to observe the impact of these biases on estimations of the overall average treatment effect estimate on judge decisions. To do this, we change the simulated judge $J_k$ assigned to a particular case, and their subsequent decisions $D_{i,k}$, using our model in Equation~\ref{eq:model_of_judge}. We do not manipulate other case details directly for our simulation. We use a binary normalized form of judge decisions (i.e. $D_{i,k} = 0$ for a signature bond release, $D_{i,k} = 1$ for any kind of cash bond). \textbf{We set $\lambda_k(X_i)$ to be the set of provided original judge decisions}, and $\recdec{i}$ to be the decision recommended under official interpretation guidelines for the PSA. It is important to note that we generally do not know a judge's true default decision for treated cases from this dataset, since this could be a counterfactual, unobserved  quantity whenever we see the judge agreeing with the PSA-recommended decision. However, by heuristically setting $\lambda_k(X_i)$ to be the provided original judge decisions, we are able to simulate counterfactual decisions (to estimate treatment effect on decisions) as well as counterfactual outcomes for cases that were originally released (to estimate effect on decision correctness). Thus we consider our experiments semi-synthetic, meant to illustrate our model numerically rather than draw any statistical conclusions.

Note that the measured outcome of our simulations is not the ~\emph{absolute} average treatment effect across all judges, but the ~\emph{change} in average treatment effect from the baseline effect measured in the decisions from the original experiment. What we are testing is the deviation from the original decisions that we expect under our different judge models, given changes to the responsiveness factors described in Section~\ref{sec:judge_dec}. 

To assign each judge, we define a set of judges $J_k$ who are each assigned a distribution of the treated cases So, for example, the judge distribution ($J_{dist}$) of $J_1 =0.6$, $J_2 =0.3$ and $J_1 =0.1$ means that of all the treated cases (where $Z=1$), 60\% of such cases were assigned to Judge 1, 30\% of such cases were assigned to Judge 2 and 10\% of such cases are assigned to Judge 3. Note that this is a ~\emph{re-allocation} of existing treated cases to a simulated set of additional judges. We only examine how, conditional on this treatment, decision-making can be changed under our model. Also note that we simply re-distribute a proportion of treated cases to each of the simulated judges -- this does not a priori determine the treatment exposure for that judge. For example, if we allocate  60\% of treated cases to $J_1$, and $J_1$ being assigned to $568$ treated cases out of a total of $631$ cases, or around 90\% of their assigned cases being treated. 

In order to define the decision that any judge $J_k$ takes, the decision is an output of the model as described in Sec ~\ref{sec:judge_dec}, and in particular Eq ~\ref{eq:model_of_judge}, where we consider the default $\lambda_k(X_i)$ to be the judge decision from the original experiment -- otherwise the decision is the same as the model recommendation.

Note that at no point in this simulation do we modify ~\emph{outcomes}, since we do not have actual control over the judge decision-making in real life and so cannot impact outcomes in any way. Hence, we look only at the impact of experimental design choices on the average treatment effect on ~\emph{decisions} and ~\emph{decision correctness}. 

Each simulation is completed over 100 trials unless indicated otherwise. 

Note that the measured outcome of our simulations is not the absolute average treatment effect across all judges, but the \emph{change} in average treatment effect from the baseline effect measured in the decisions from the original experiment -- all indications in figures or text indicated "Estimated Average Effects" are thus indicative of "Estimated Deviation of Average Effects" from the reported baseline. 

\paragraph{Binary v. Linear Judge Model}

In our paper, we theoretically discuss a binary instantiation of the judge model (see Eq~\ref{eq:treatment_exposure} in the main text, Eq~\ref{eq:capacity} and Eq~\ref{eq:low_trust} in Appendix ~\ref{app:low_trust}). However, there is also a simple linear instantiation of the judge model that we explored in simulations, and we also have simulation results for a linear model of judge responsiveness, in addition to the threshold-based binary model discussed theoretically in the paper. The results for the linear judge model were also calculated and included in reference materials.

The linear implementation of the \emph{treatment exposure} model is as follows, based on the average number of treated cases seen:
\begin{equation}\label{eq:treatment_exposure_linear}
    J_{i+1,k} = 
b_k + a\left(\frac{1}{i} \sum_{m=1}^{i} Z_{m,k}\right)
\end{equation}
where $a = 1 - b_k$, so at maximum exposure (if 100\% assigned cases are treated),  we get 100\% compliance, or $J_{i+1,k} = 1$. 

The linear implementation of the \emph{capacity constraint} model is as follows, based on the average number of `high risk' predictions seen:
\begin{equation}
J_{i,k}= b_k - a\left(\frac{1}{i}\sum_{m=1}^iZ_{m,k} \cdot \Ind{\hat{Y}_m > 0}\right) 
\end{equation}
where $a = b_k$, so at maximum exposure (if 100\% assigned cases are treated and positive predictions),  we get 0\% compliance, or $J_{i+1,k} = 0$. 

The linear implementation of the \emph{low trust} model is as follows, based on the average number of `correct' predictions seen::
\begin{equation}
J_{i,k}= b_k - a\left(\frac{1}{i}\sum_{m=1}^iZ_{m,k} \cdot \Ind{\hat{Y}_m \neq Y_{m,k}}\right)
\end{equation}
where $a = b_k$, so at maximum exposure (if 100\% assigned cases are treated and correct predictions),  we get 0\% compliance, or $J_{i+1,k} = 0$. 


\paragraph{Dis-aggregated analysis}

We analyze the impact of the judge responsiveness factors and the corresponding experiment design choices on the average treatment effect on decisions and decision correctness. In order to fully understand the scope of the involved impacts, we further dis-aggregate our findings into demographic subgroup populations -- M (male), F (female), W (white), NW (non-white), WM (white male), NWM (non-white male), WF (white female), NWF (non-white female). Although out of scope for this study, we leverage this meta-data from the experiment in order to demonstrate how some of these experiment choices can disproportionately impact one subgroup over another. For instance, across many contexts, the WM subgroup appears to have their ATE measurement the least perturbed by alternative experiment design choices while the ATE estimate for NWF seems especially volatile to mis-measurement under a change of experimental conditions. 

\paragraph{Clarifications on Scope} Our empirical results only focus on data from the pre-trial risk assessment context for various reasons: as the authors state in their paper, ~\citet{imai2020experimental} is likely “to the best of our knowledge, this is the first real-world randomized controlled trial (RCT) that evaluates the impacts of algorithmic risk assessment scores on judges’ decisions in the criminal justice system”. The field of experimental evaluation for algorithmic deployments is increasing but still nascent -- many studies are observational or rely on a benchmark framework. As such, this is one of very few academic experimental evaluations of an ADS system and the only one we could identify for which the full experiment results are publicly available. As we discuss in Appendix ~\ref{app:further_rel-domains} on Additional References, there are experimental evaluations done in other fields (notably education, and healthcare), but none of these existing evaluations have a tractable release of data. For healthcare, for example, clinical trial data needs to be registered and requested through official government channels, and approval is needed for additional synthetic manipulation. That being said, in future work, through the continued explorations of further partnerships and collaborations, we hope to be able to obtain data from such settings, where there are multiple judges (ie. doctors, teachers, hiring managers) available in the data. This would allow us to further empirically validate some of our assumptions on judge responsiveness. Also, as we mention in Appendix ~\ref{app:further_rel-evidence}, our justification for the judge responsiveness factors derives from empirical and theoretical evidence referenced to be observed across a variety of domains -- thus, we know from these historical and academic observations that the judge responsiveness factors apply more broadly than the criminal justice domain, and has been previously observed in other settings.

\subsection{Experiment 1: Treatment Exposure Effect}

We first analyze the impact of the experiment design choice of treatment assignment on judge responsiveness. These results are over a simulation with 100 trials. As discussed in Section ~\ref{sec:expt}, We investigate two cases. One case involves  two judges (assigned 50\% of treated cases each (Scenario 1A) or one assigned ~100\% of treated cases and the other assigned ~0\% of treated cases(Scenario 1B)). The other case involves three judges (each assigned 33\% of treated cases (Scenario 2A) or one assigned 60\%, 30\% and 10\% of treated cases respectively (Scenario 2B)). We observe that the largest treatment effect measurements on decisions and decision correctness (ie. the correlation of the decision and the recorded downstream outcome) apply in Scenario 1B, though differences are smaller in the latter case. We find that modifying treatment exposure regimes can adjust the causal effect estimate on decisions by up to ~10x (for non-white female (NWF) subjects), though this impact varies (for white men (WM), there is little observed impact).

We conduct an empirical study similar to that described in Section ~\ref{sec:results}. Namely, for each study, we consider two different treatment assignments (to $n$ total decision subjects) that have the same treated and untreated decision subjects, but assign these treated decision subjects to different decision makers (and we assume each case is only seen by one judge in each treatment assignment). We consider $k$ judges, who are assumed to be identical, apart from their assigned cases (i.e. each judge possesses a uniform baseline bias of $b_k$). Specifically, both of them have decision making model as described in equation~\eqref{eq:treatment_exposure}, where $f(z) = az$ is a linear function and $b_1 = b_2 = b$.

Under these assumptions, we conduct the following set up for Experiment 1:
\begin{enumerate}
    \item (Uniform randomization) $\vect{Z} =(\vect{Z}_{\cdot, 1},\vect{Z}_{\cdot, 2})$ such that Judge 1 receives 50\% of total cases, 50\% of them are treated and 50\% are untreated. Judge 2 receives other 50\% of total cases, 50\% of them are treated and 50\% are untreated.
    \item (Two-level randomization) $\vect{Z}' = (\vect{Z'}_{\cdot, 1},\vect{Z'}_{\cdot, 2})$ such that Judge 1 receives 50\% of total cases, 100\% of them are treated. Judge 2 receives other 50\% of total cases, 100\% are untreated.
\end{enumerate}

Under these assumptions, we conduct the following set up for Experiment 2:
\begin{enumerate}
    \item (Uniform randomization) $\vect{Z} =(\vect{Z}_{\cdot, 1},\vect{Z}_{\cdot, 2},\vect{Z}_{\cdot, 3})$ such that each judge  receives 33.3\% of total cases, 33.3\% of them are treated. Note that this gives each judge a treatment exposure of around 50\% (since cases are split evenly and treated cases are split evenly amongst judges). 
    \item (Two-level randomization) $\vect{Z}' = (\vect{Z'}_{\cdot, 1},\vect{Z'}_{\cdot, 2},\vect{Z}_{\cdot, 3})$ such that Judge 1 receives 33.3\% of total cases, 60\% of total treated cases. Judge 2 receives other 33.3\% of total cases, but only 30\% of total treated cases. Judge 3 receives other 33.3\% of total cases, but only 10\% of total treated cases.
\end{enumerate}

Note that overall, we only ~/emph{re-assign} treated cases to hypothetical judges for the sake of simulation. We do not ever change the status of a case (from treated to untreated, or vice versa), or modify outcomes. We only update decisions made by such hypothetical judge assignments, based on the heuristics discussed in our paper in section ~\ref{sec:judge_dec}. 

Main results for this experiment are summarized in Figures ~\ref{fig:myfig_ex1_1}, ~\ref{fig:myfig_ex1_2}, ~\ref{fig:myfig_ex1_3}, ~\ref{fig:myfig_ex1_4}.

\begin{figure}[htbp]
\centering
\subfloat[Judge bias $b_k =0.6$, Exposure threshold $t=0.4$]{\label{fig:a}\includegraphics[width=0.45\linewidth]{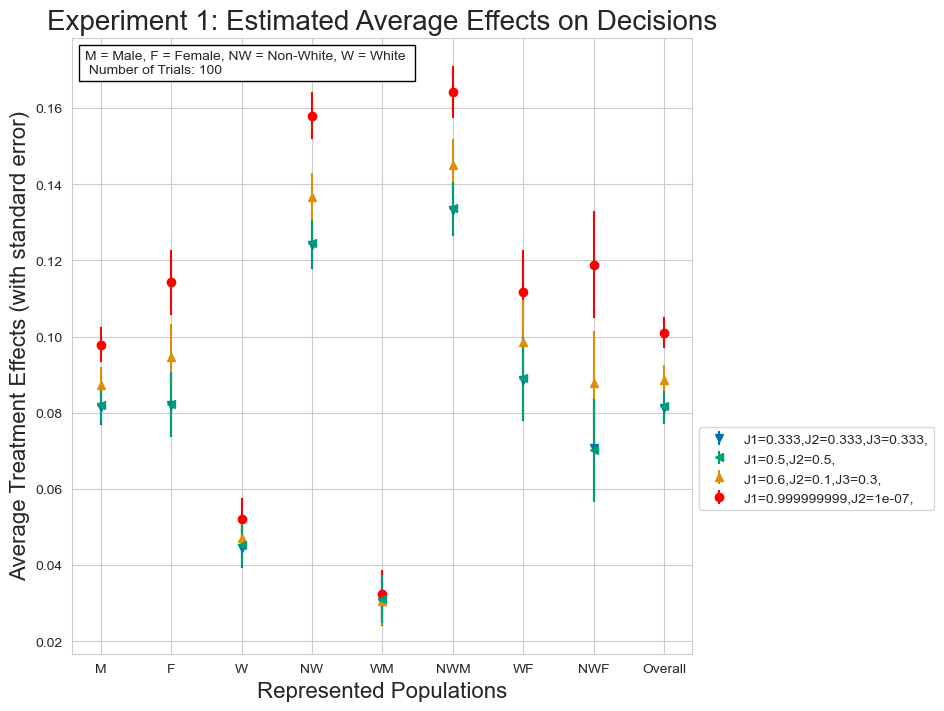}}\qquad
\subfloat[Judge bias $b_k =0.4$, Exposure threshold $ t=0.6$]{\label{fig:b}\includegraphics[width=0.45\linewidth]{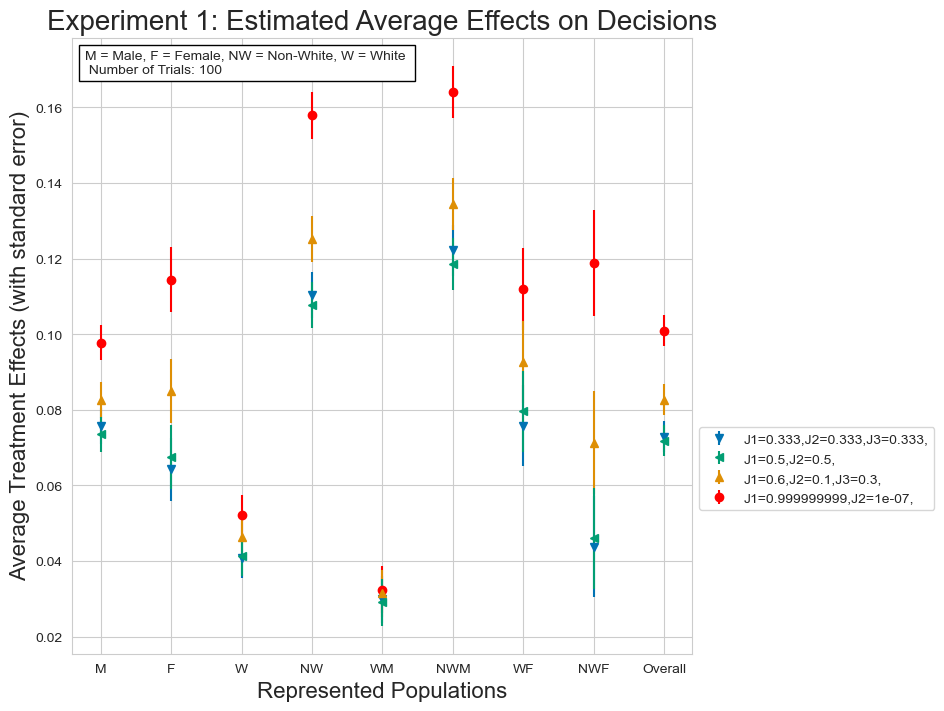}}\\
\subfloat[Judge bias $b_k =0.1$, treatment exposure threshold $ t=0.6$]{\label{fig:c}\includegraphics[width=0.45\textwidth]{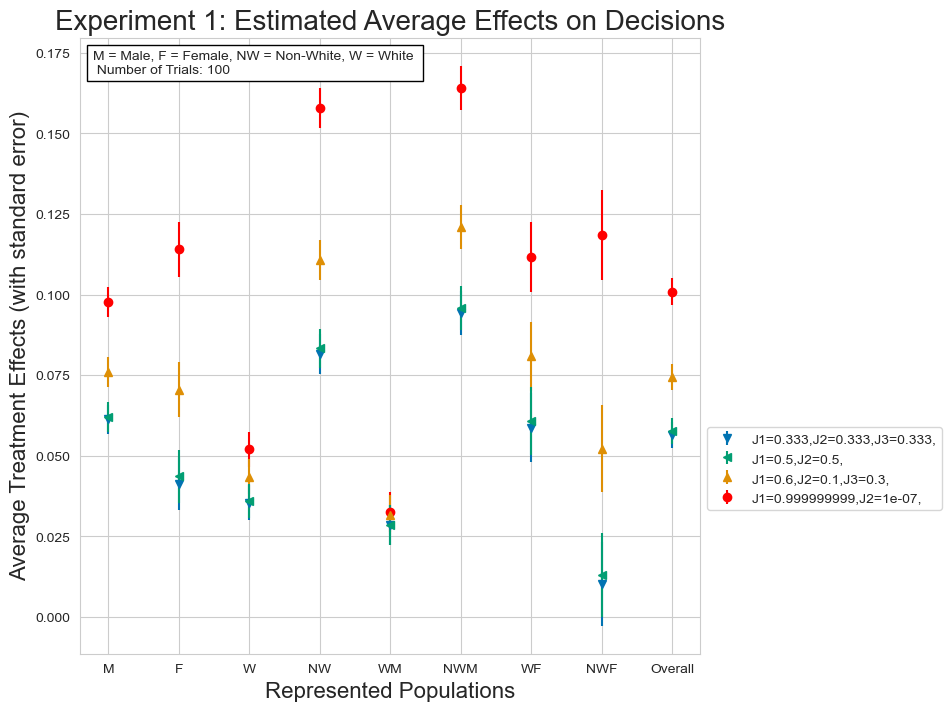}}\qquad%
\subfloat[Judge bias $b_k =0.1$, treatment exposure threshold $ t=0.4$]{\label{fig:d}\includegraphics[width=0.45\textwidth]{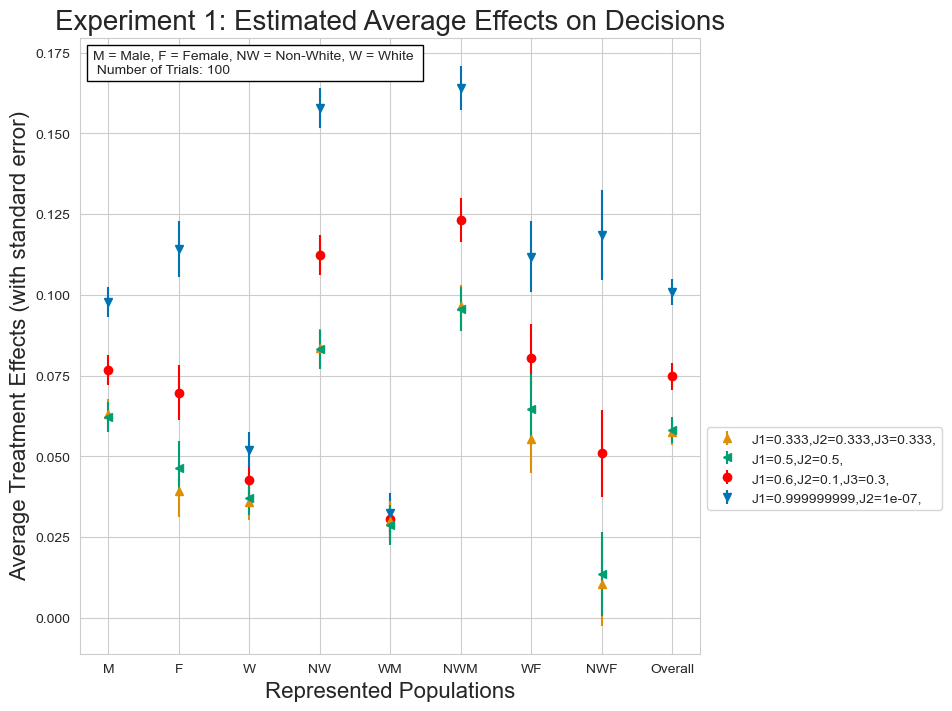}}%
\caption{\textbf{Experiment 1; Average Treatment Effect Changes on Decisions due to treatment exposure.} Judge assignments represent allocations of treated cases across all available simulated judges. So given a judge assignment of $J_1 = 0.6$, 60\% of available treated cases are assigned to $J_1$, giving them an overall treatment exposure of $\approx 90\%$ over all their cases  -- Linear judge model.}
\label{fig:myfig_ex1_1}
\end{figure}

\begin{figure}[htbp]
\centering
\subfloat[Judge bias $b_k =0.4$, Exposure threshold $ t=0.6$]{\label{fig:a}\includegraphics[width=0.45\linewidth]{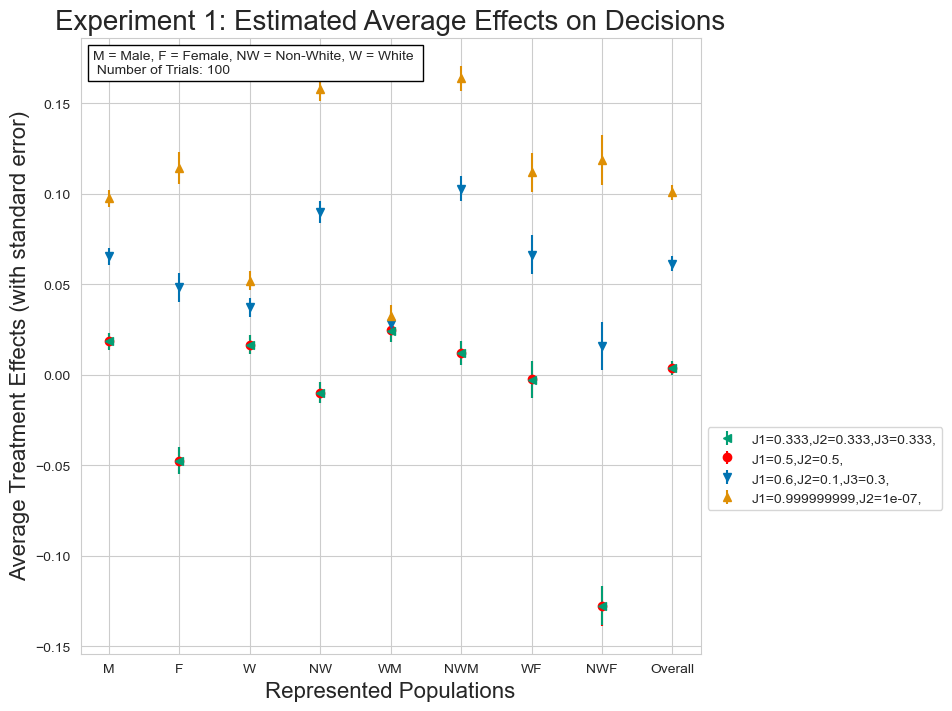}}\qquad
\subfloat[Judge bias $b_k =0.6$, Exposure threshold $ t=0.4$]{\label{fig:b}\includegraphics[width=0.45\linewidth]{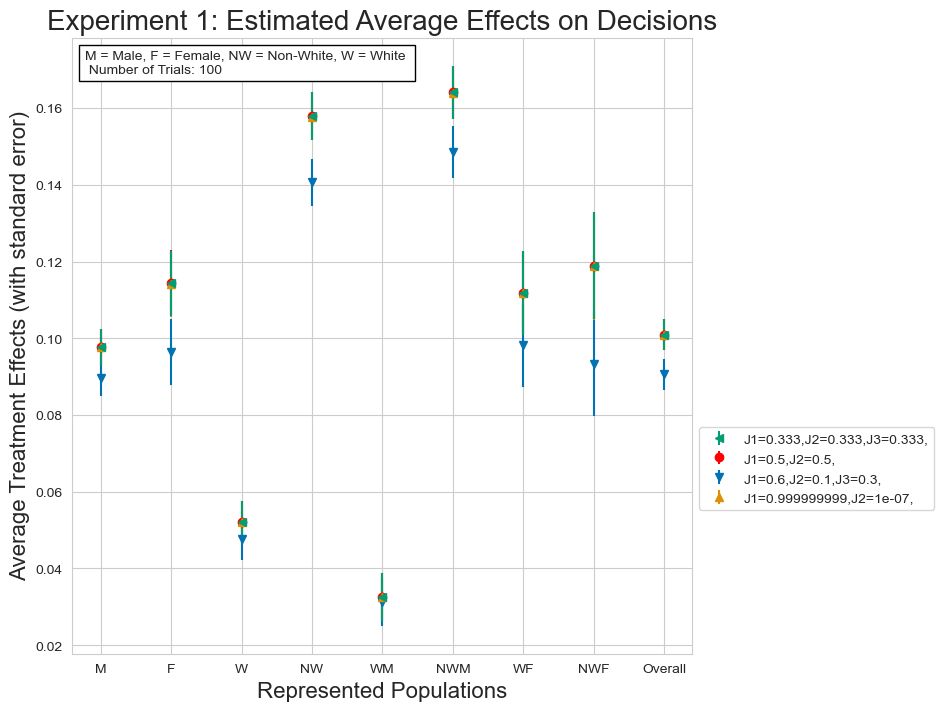}}\\
\subfloat[Judge bias $b_k =0.1$, Exposure threshold $ t=0.6$]{\label{fig:c}\includegraphics[width=0.45\textwidth]{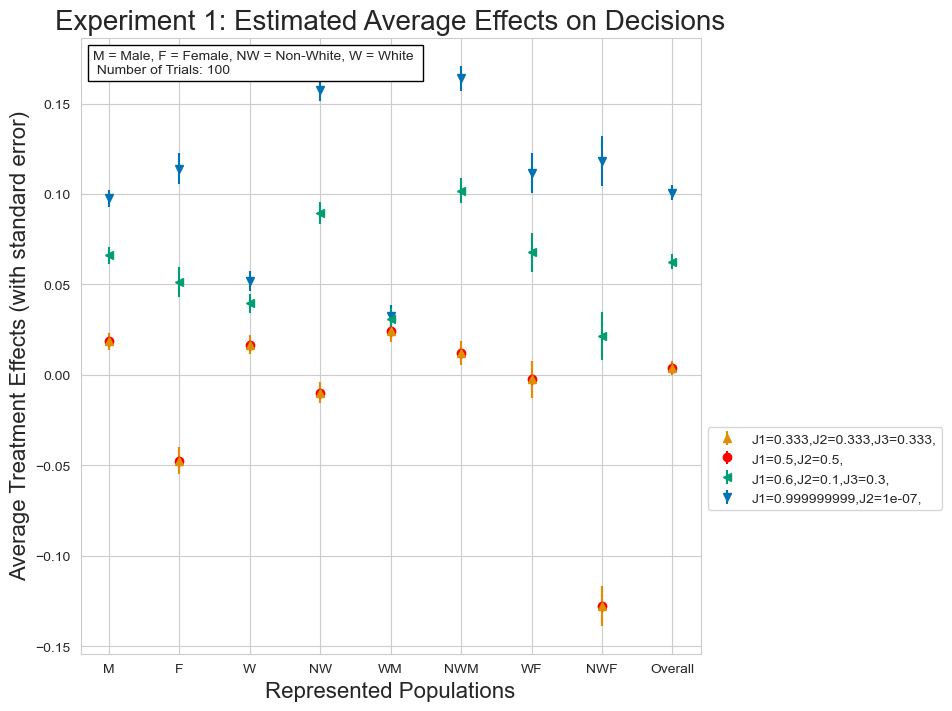}}\qquad%
\subfloat[Judge bias $b_k =0.1$, Exposure threshold $t=0.4$]{\label{fig:d}\includegraphics[width=0.45\textwidth]{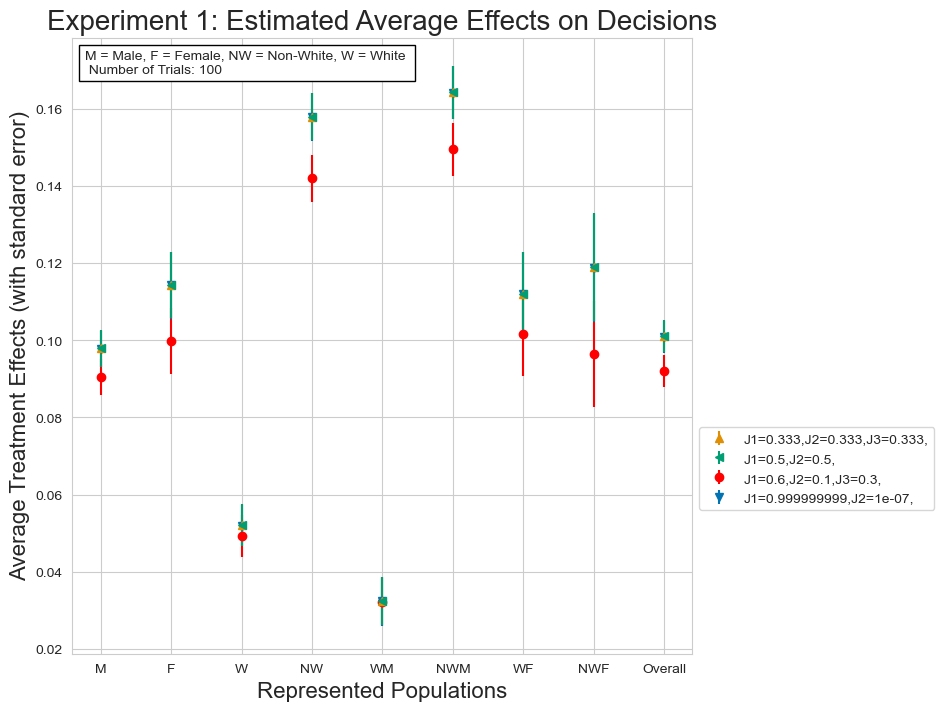}}%
\caption{\textbf{Experiment 1; Average Treatment Effect Changes on Decisions due to treatment exposure.} Judge assignments represent allocations of treated cases across all available simulated judges. So given a judge assignment of $J_1 = 0.6$, 60\% of available treated cases are assigned to $J_1$, giving them an overall treatment exposure of $\approx 90\%$ over all their cases  -- Non-Linear judge model.}
\label{fig:myfig_ex1_2}
\end{figure}

\begin{figure}[htbp]
\centering
\subfloat[Judge bias $b_k =0.6$, Exposure threshold $ t=0.4$]{\label{fig:a}\includegraphics[width=0.45\linewidth]{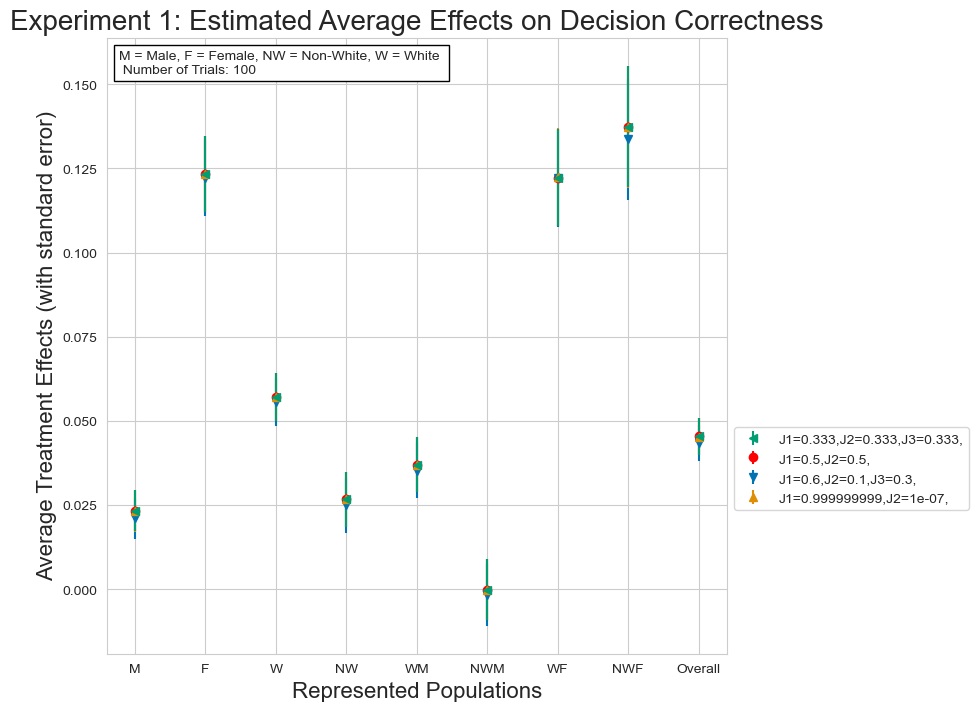}}\qquad
\subfloat[Judge bias $b_k =0.4$, Exposure threshold $ t=0.6$]{\label{fig:b}\includegraphics[width=0.45\linewidth]{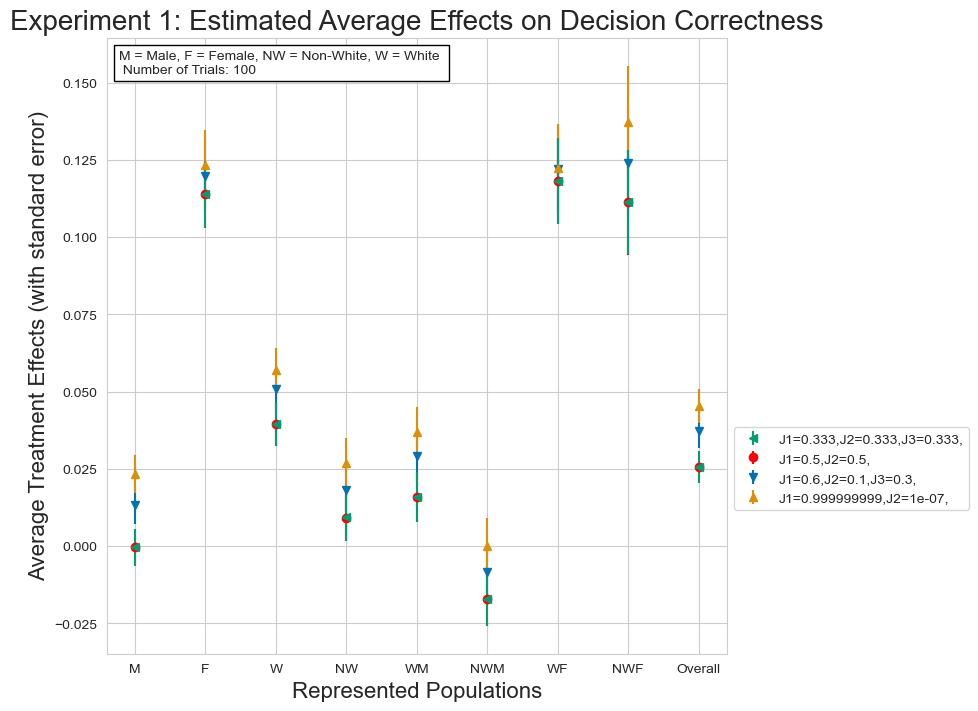}}\\
\subfloat[Judge bias $b_k =0.1$, Exposure threshold $t=0.6$]{\label{fig:c}\includegraphics[width=0.45\textwidth]{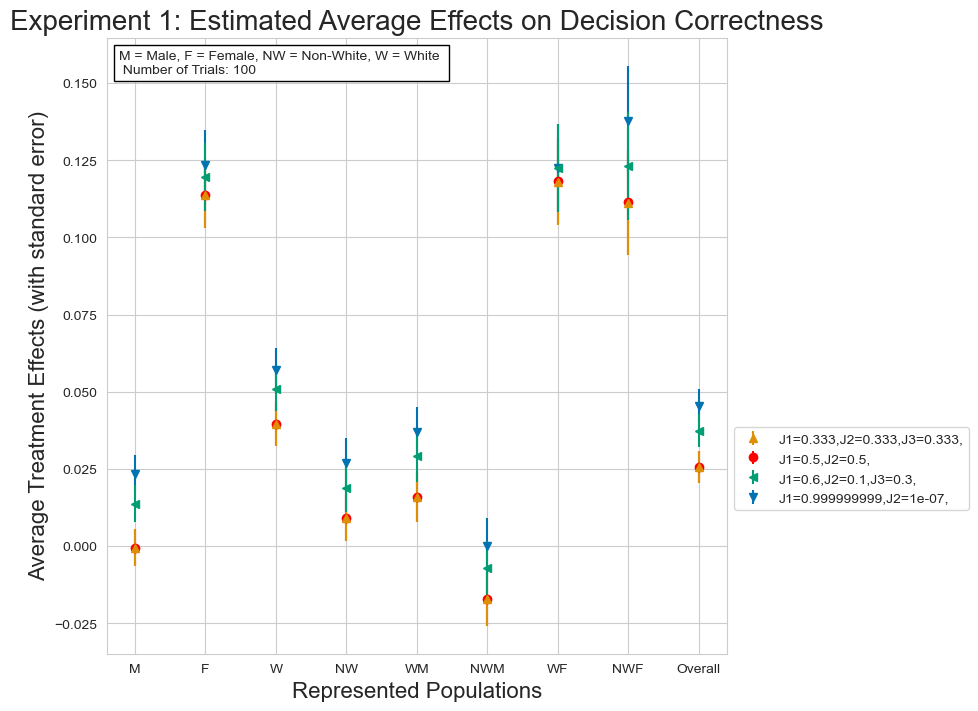}}\qquad%
\subfloat[Judge bias $b_k =0.1$, Exposure threshold $t=0.4$]{\label{fig:d}\includegraphics[width=0.45\textwidth]{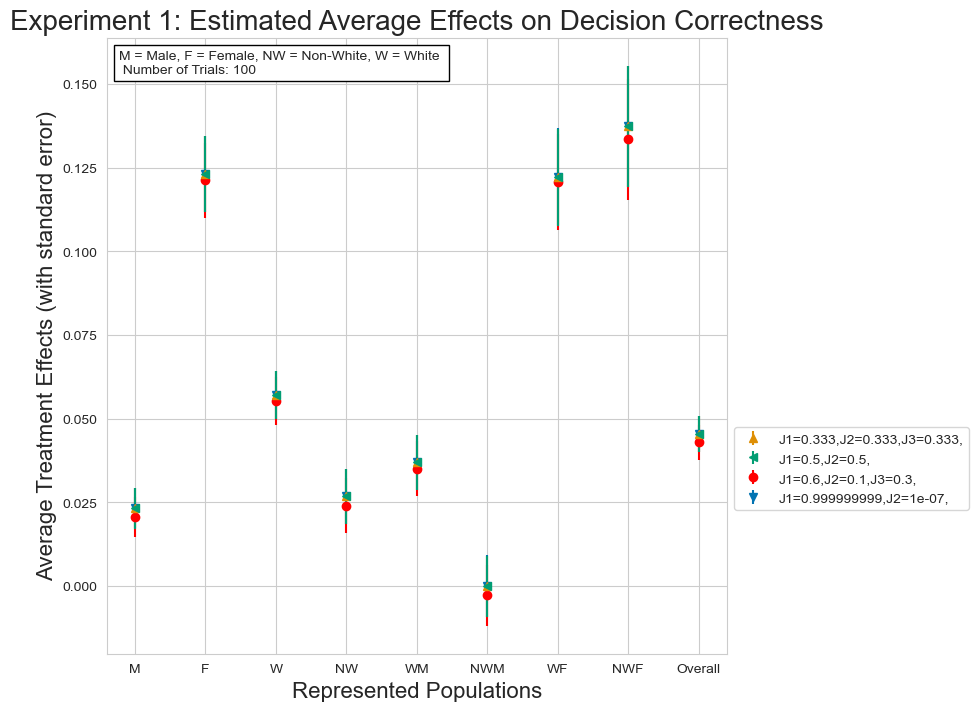}}%
\caption{\textbf{Experiment 1; Average Treatment Effect Changes on Decision Correctness due to average treatment exposure.} Judge assignments represent allocations of treated cases across all available simulated judges. So given a judge assignment of $J_1 = 0.6$, 60\% of available treated cases are assigned to $J_1$, giving them an overall treatment exposure of $\approx 90\%$ over all their cases  -- Non-Linear judge model.}
\label{fig:myfig_ex1_3}
\end{figure}

\begin{figure}[htbp]
\centering
\subfloat[Judge bias $b_k =0.6$, Exposure threshold $t=0.4$]{\label{fig:a}\includegraphics[width=0.45\linewidth]{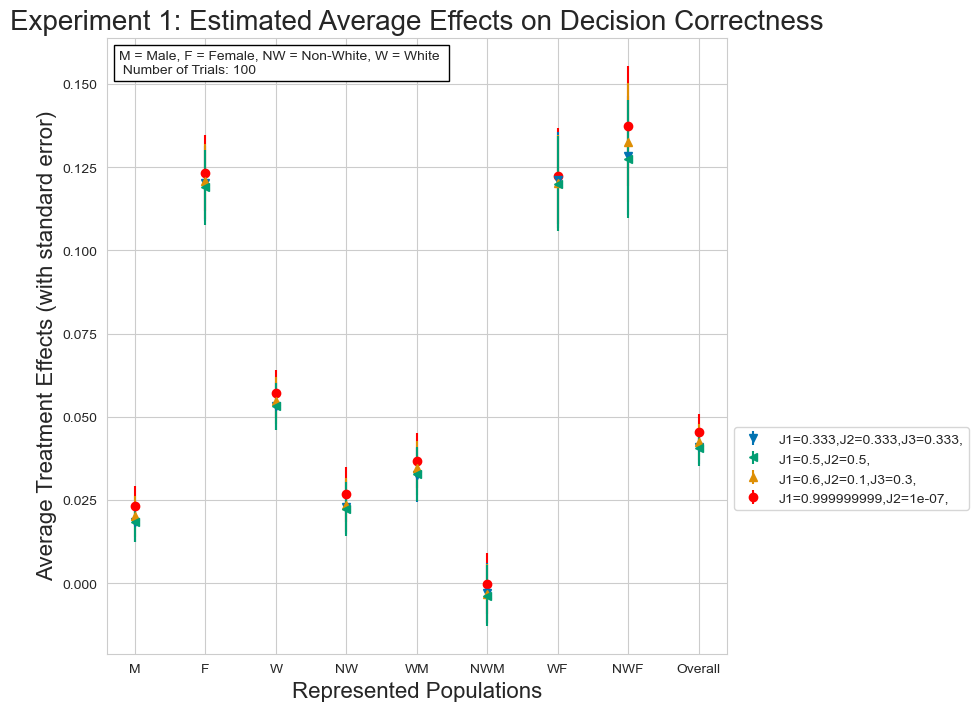}}\qquad
\subfloat[Judge bias $b_k =0.4$, Exposure threshold $t=0.6$]{\label{fig:b}\includegraphics[width=0.45\linewidth]{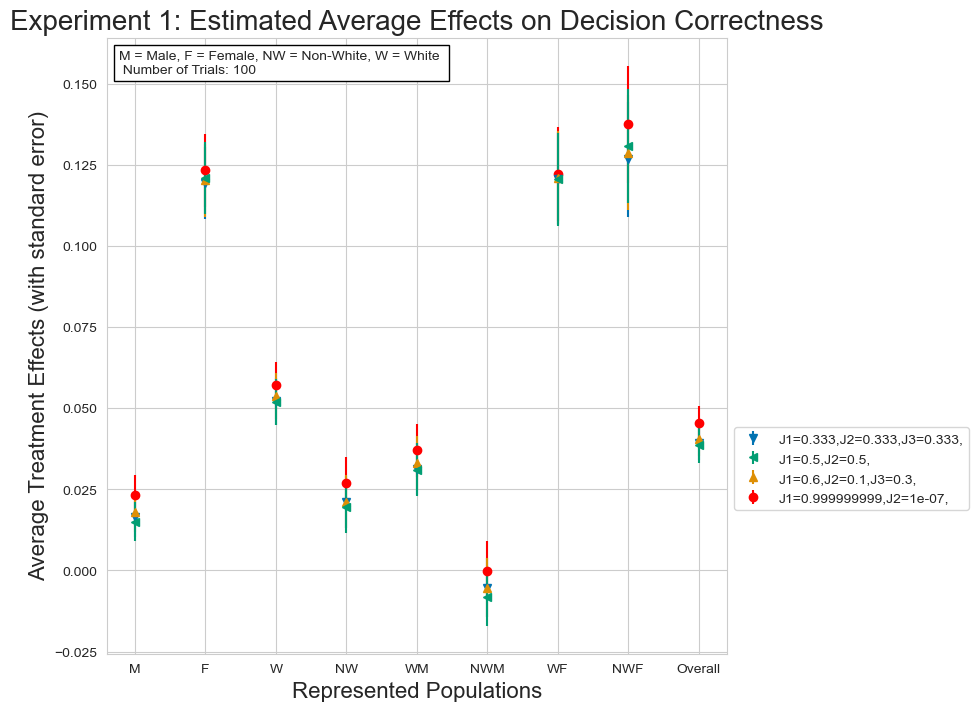}}\\
\subfloat[Judge bias $b_k =0.1$, Exposure threshold $t=0.6$]{\label{fig:c}\includegraphics[width=0.45\textwidth]{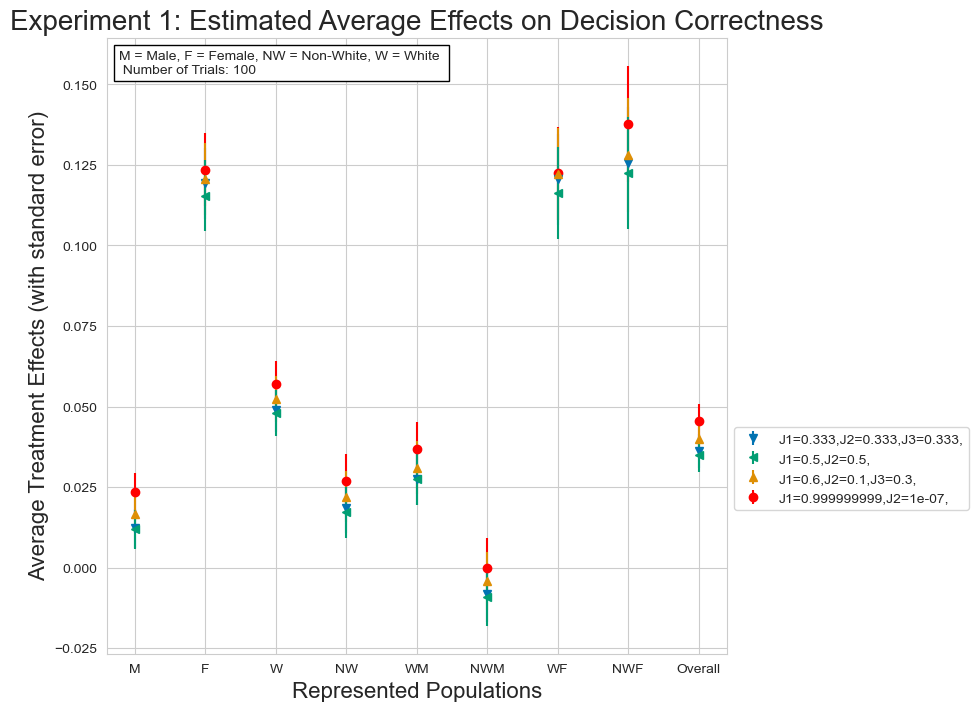}}\qquad%
\subfloat[ Judge bias $b_k =0.1$, Exposure threshold $t=0.4$]{\label{fig:d}\includegraphics[width=0.45\textwidth]{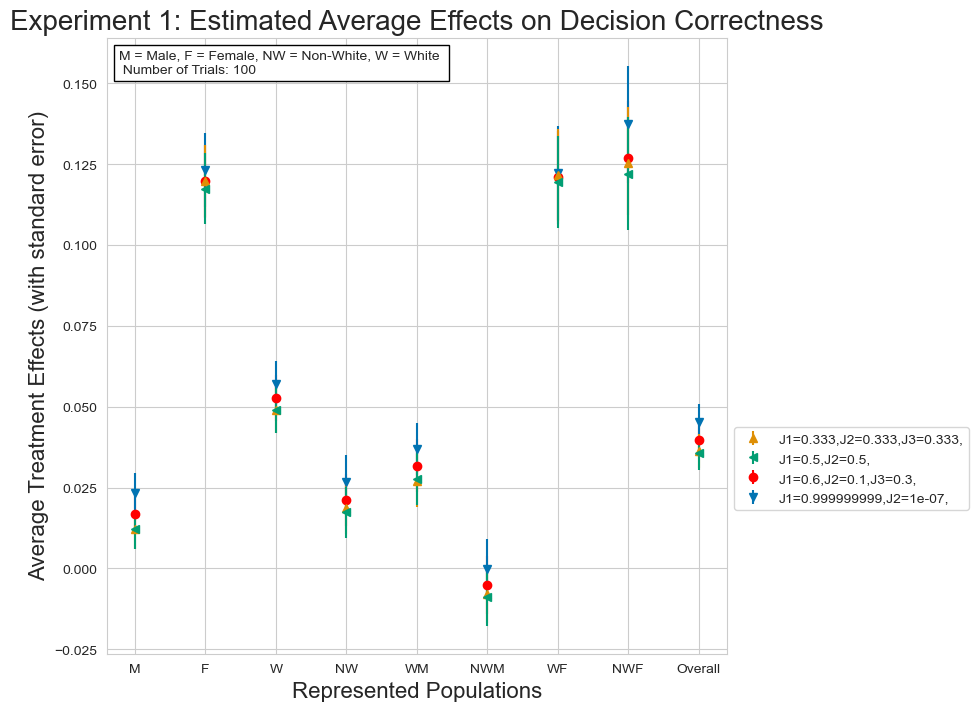}}%
\caption{\textbf{Experiment 1; Average Treatment Effect Changes on Decision Correctness due to average treatment exposure.} Judge assignments represent allocations of treated cases across all available simulated judges. So given a judge assignment of $J_1 = 0.6$, 60\% of available treated cases are assigned to $J_1$, giving them an overall treatment exposure of $\approx 90\%$ over all their cases -- Linear judge model. }
\label{fig:myfig_ex1_4}
\end{figure}


\subsection{Experiment 2: Capacity Constraint effect}

 When the judge is over-exposed to positive recommendations, our capacity constraint model concludes that the judge then becomes less responsive. Thus, to model the capacity constraint setting, we set arbitrary thresholds on the model recommendation, in order to then modify the positive predictive rate of the algorithm. These results are over a simulation with 100 trials, and the data we have available allows decision thresholds to be set at a level of 3,4,5 or 6.

Given the same treatment exposure, we find that there are average treatment effect measurement differences under different algorithm decision thresholds. Under our judge model, this is induced by changes to the positive predictive rate of the algorithm, which in turn impacts judge compliance through the capacity constraint. For example, a lower algorithmic decision threshold (eg. threshold = 3) means that the algorithm predicts detainment at a higher frequency and the measured treatment effect on such decisions seem to be higher than the effect measured for higher thresholds. Interestingly, this effect does not meaningful impact the measure of changes to decision correctness.

Main results for this experiment can be found in in Figures ~\ref{fig:myfig_exp2_1}, ~\ref{fig:myfig_exp2_2}, ~\ref{fig:myfig_exp2_3}, ~\ref{fig:myfig_exp2_4}, ~\ref{fig:myfig_exp2_5}, ~\ref{fig:myfig_exp2_6}, ~\ref{fig:myfig_exp2_7}, ~\ref{fig:myfig_exp2_8}. 

\begin{figure}[htbp]
\centering
\subfloat[Judge Assignment: $J_1=0.6$, $J_2=0.3$, $J_3=0.1$]{\label{fig:a}\includegraphics[width=0.45\linewidth]{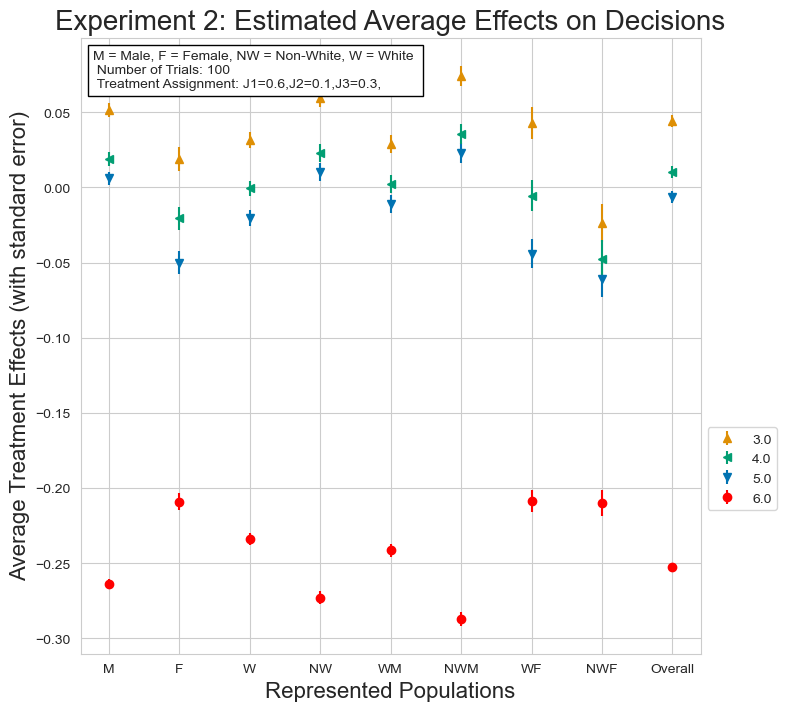}}\qquad
\subfloat[Judge Assignment: $J_1=0.5$, $J_2=0.5$]{\label{fig:b}\includegraphics[width=0.45\linewidth]{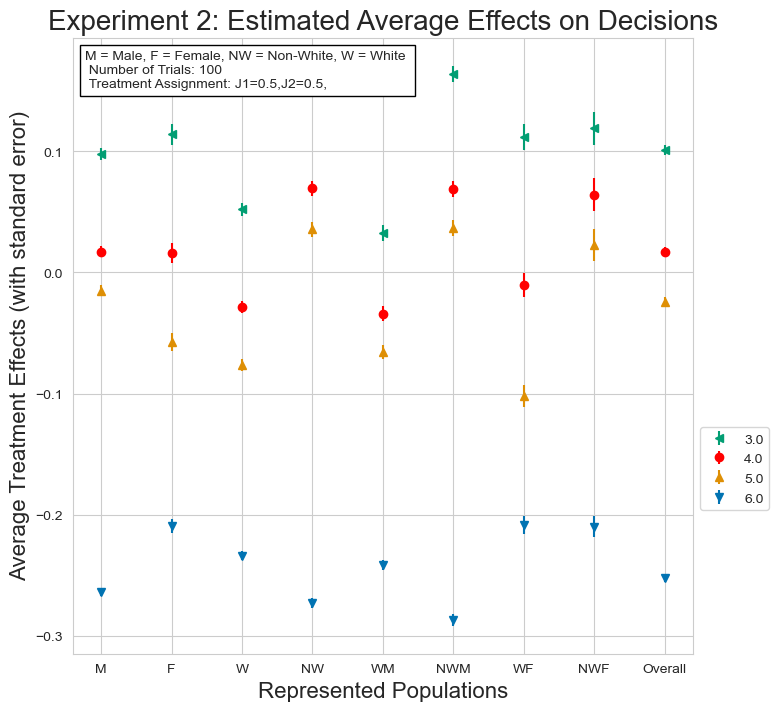}}\\
\subfloat[Judge Assignment: $J_1=0.33$, $J_2=0.33$, $J_3=0.33$]{\label{fig:c}\includegraphics[width=0.45\textwidth]{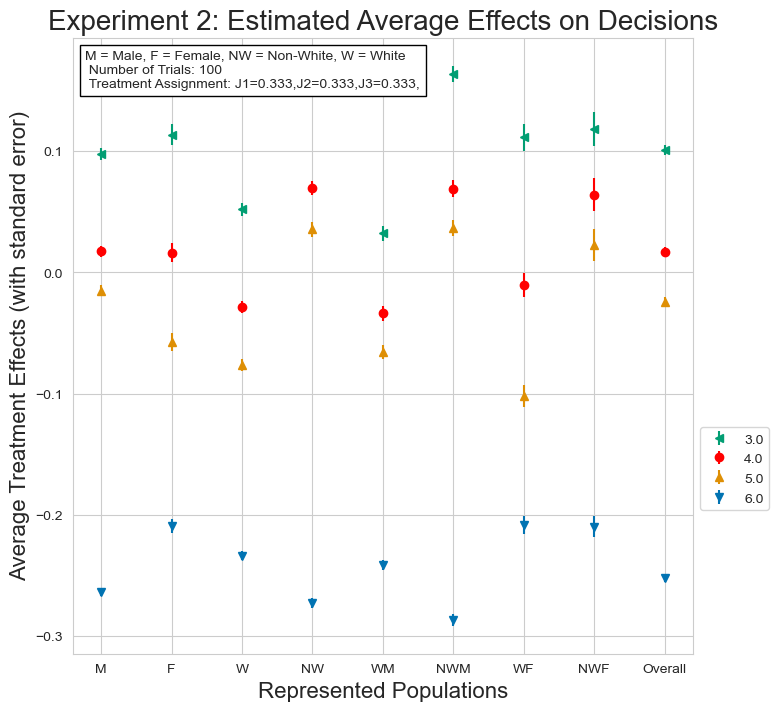}}\qquad%
\subfloat[Judge Assignment: $J_1=0.99$, $J_2=0.01$]{\label{fig:d}\includegraphics[width=0.45\textwidth]{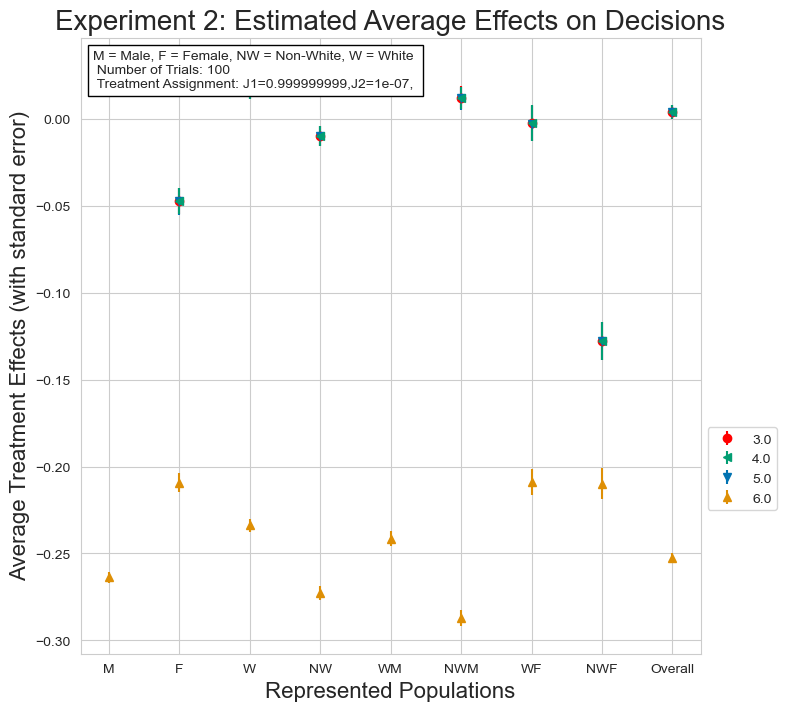}}%
\caption{\textbf{Experiment 2; Average Treatment Effect Changes on Decisions due to modifications in decision-making threshold.} We can set decision thresholds at a level of 3,4,5 or 6 in order to modify the positive predictive rate of the model; the higher the threshold required to make a positive prediction, the lower the model's positive predictive rate, and the higher the probability that positive predictions ($\hat{Y} = 1$) stays below the capacity threshold -- Non-Linear judge model,  baseline judge bias $b_k =0.6$, capacity threshold (of max positive predictive rate) $t=0.2$}
\label{fig:myfig_exp2_1}
\end{figure}

\begin{figure}[htbp]
\centering
\subfloat[Judge Assignment: $J_1=0.6$, $J_2=0.3$, $J_3=0.1$]{\label{fig:a}\includegraphics[width=0.45\linewidth]{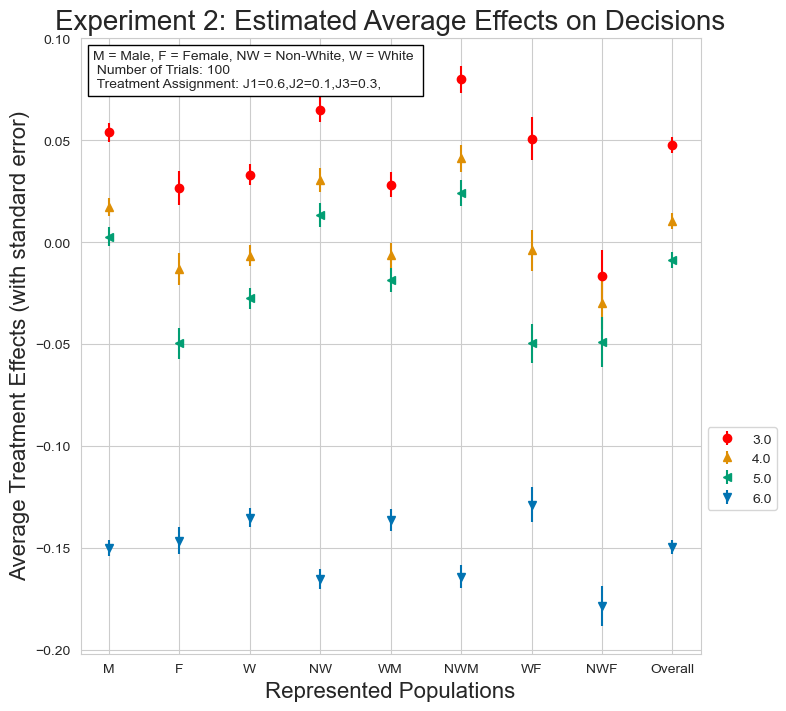}}\qquad
\subfloat[Judge Assignment: $J_1=0.5$, $J_2=0.5$]{\label{fig:b}\includegraphics[width=0.45\linewidth]{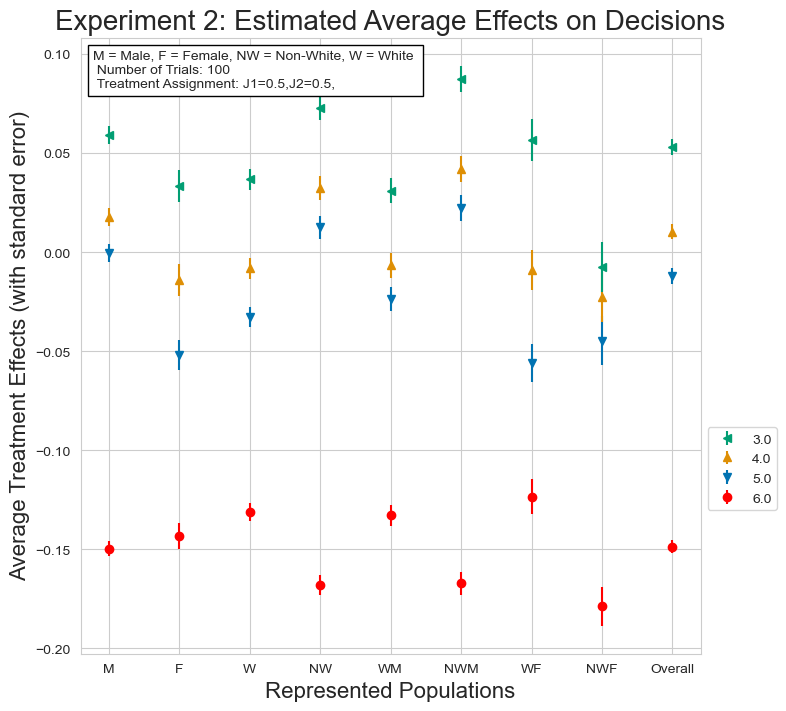}}\\
\subfloat[Judge Assignment: $J_1=0.33$, $J_2=0.33$, $J_3=0.33$]{\label{fig:c}\includegraphics[width=0.45\textwidth]{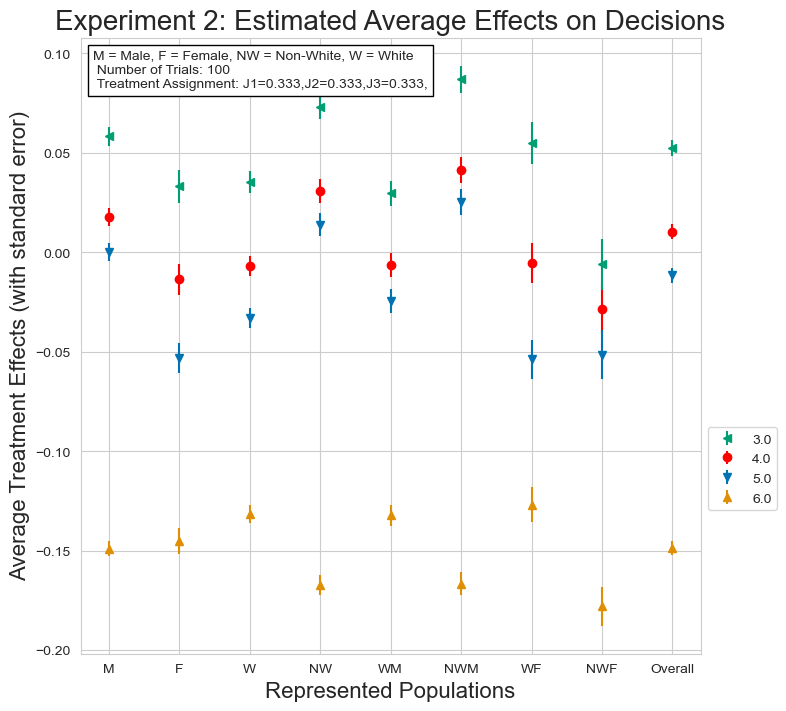}}\qquad%
\subfloat[Judge Assignment: $J_1=0.99$, $J_2=0.01$]{\label{fig:d}\includegraphics[width=0.45\textwidth]{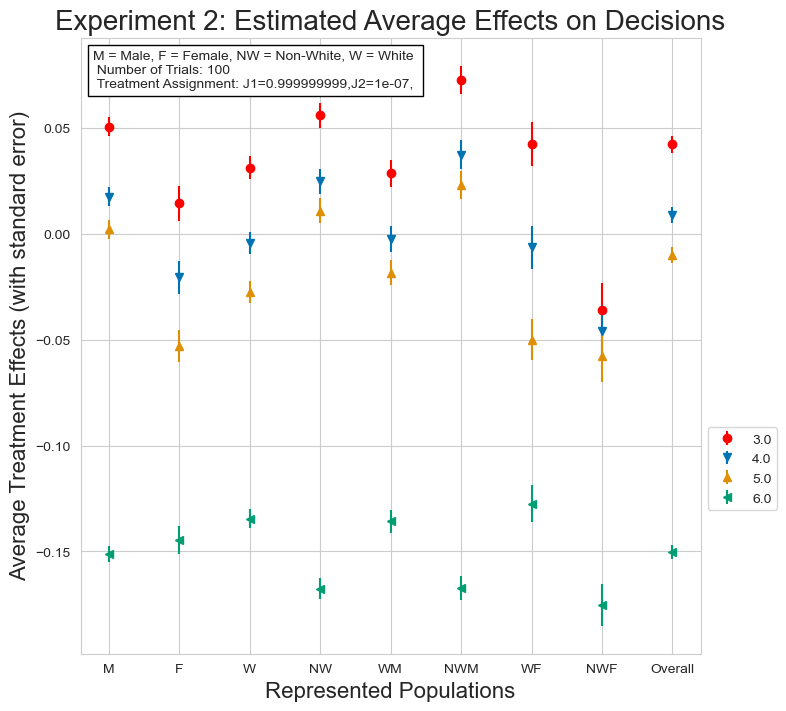}}%
\caption{\textbf{Experiment 2; Average Treatment Effect Changes on Decisions due to modifications in decision-making threshold.} We can set decision thresholds at a level of 3,4,5 or 6 in order to modify the positive predictive rate of the model; the higher the threshold required to make a positive prediction, the lower the model's positive predictive rate, and the higher the probability that positive predictions ($\hat{Y} = 1$) stays below the capacity threshold -- Linear judge model,  baseline judge bias $b_k =0.6$, capacity threshold (of max positive predictive rate) $t=0.2$}
\label{fig:myfig_exp2_2}
\end{figure}

\begin{figure}[htbp]
\centering
\subfloat[Judge Assignment: $J_1=0.6$, $J_2=0.3$, $J_3=0.1$]{\label{fig:a}\includegraphics[width=0.45\linewidth]{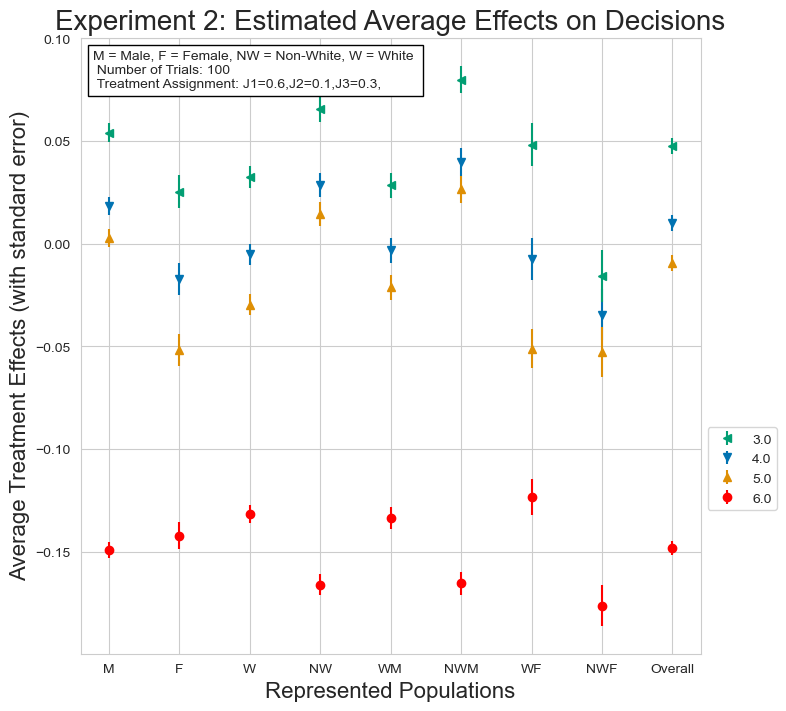}}\qquad
\subfloat[Judge Assignment: $J_1=0.5$, $J_2=0.5$]{\label{fig:b}\includegraphics[width=0.45\linewidth]{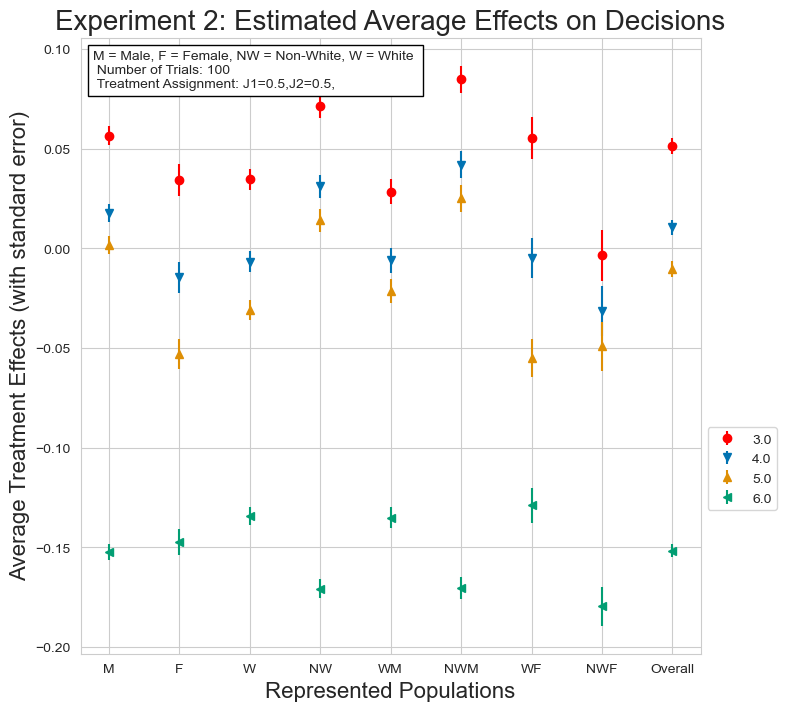}}\\
\subfloat[Judge Assignment: $J_1=0.33$, $J_2=0.33$, $J_3=0.33$]{\label{fig:c}\includegraphics[width=0.45\textwidth]{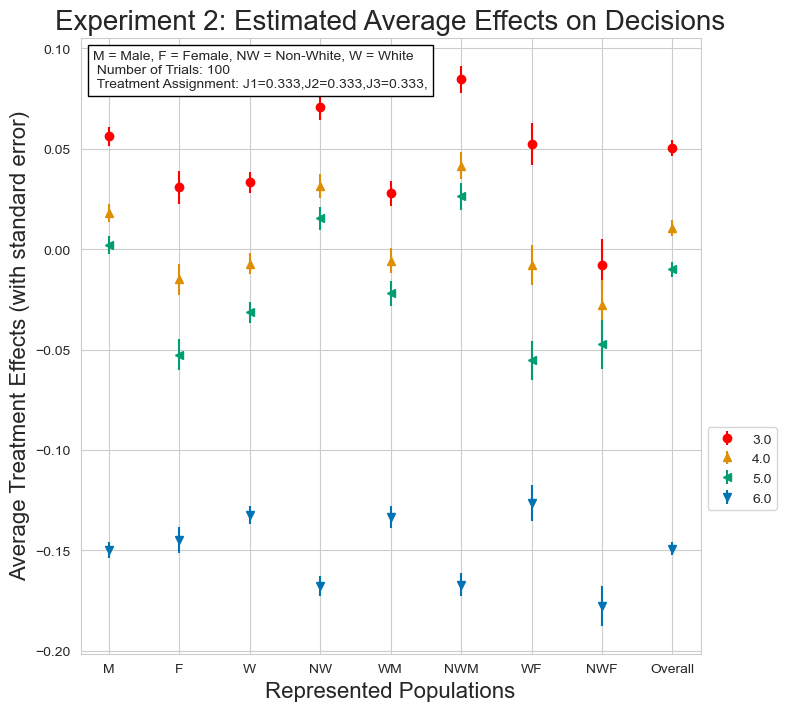}}\qquad%
\subfloat[Judge Assignment: $J_1=0.99$, $J_2=0.01$]{\label{fig:d}\includegraphics[width=0.45\textwidth]{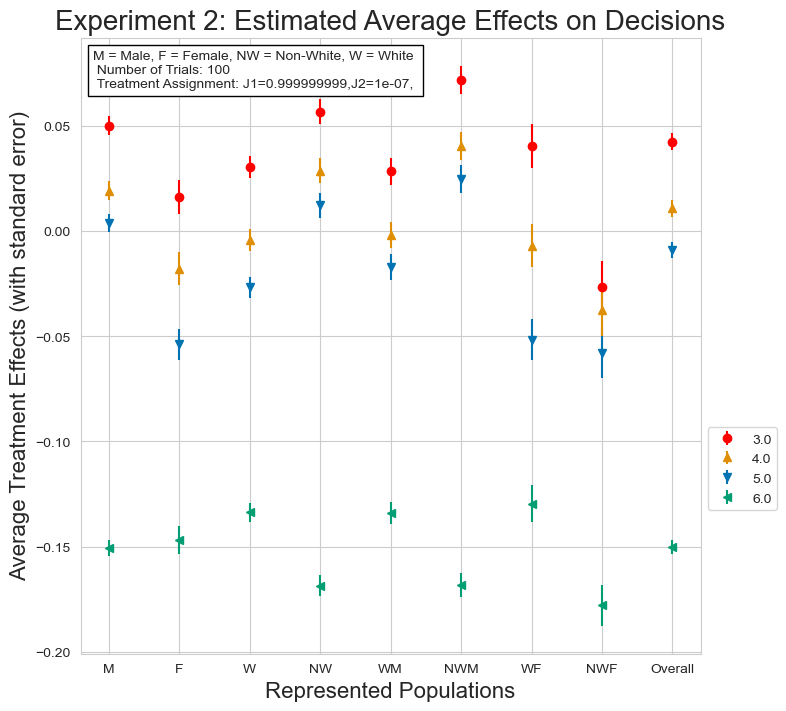}}%
\caption{\textbf{Experiment 2; Average Treatment Effect Changes on Decisions due to modifications in decision-making threshold.} We can set decision thresholds at a level of 3,4,5 or 6 in order to modify the positive predictive rate of the model; the higher the threshold required to make a positive prediction, the lower the model's positive predictive rate, and the higher the probability that positive predictions ($\hat{Y} = 1$) stays below the capacity threshold -- Linear judge model,  baseline judge bias $b_k =0.6$, capacity threshold (of max positive predictive rate) $t=0.15$}
\label{fig:myfig_exp2_3}
\end{figure}

\begin{figure}[htbp]
\centering
\subfloat[Judge Assignment: $J_1=0.6$, $J_2=0.3$, $J_3=0.1$]{\label{fig:a}\includegraphics[width=0.45\linewidth]{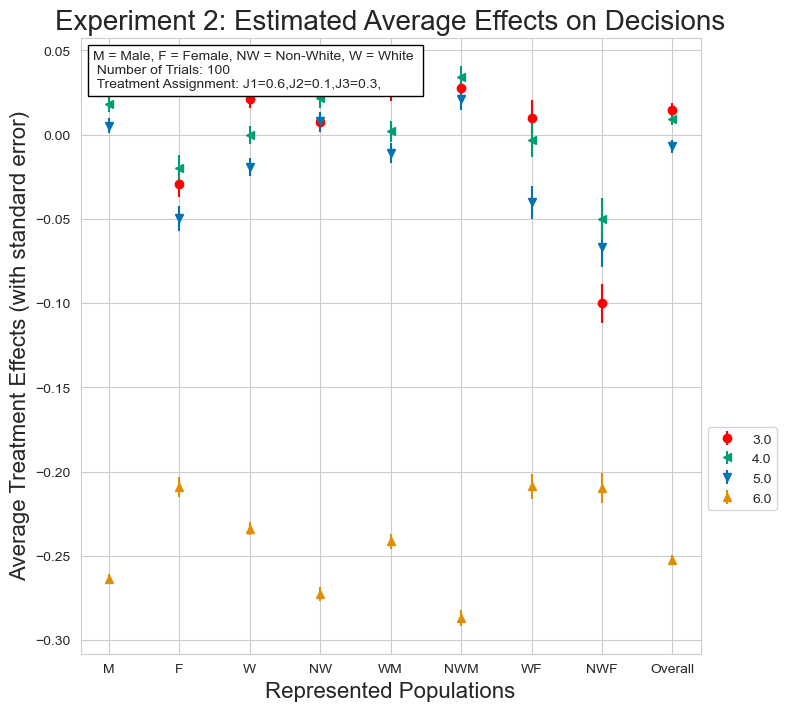}}\qquad
\subfloat[Judge Assignment: $J_1=0.5$, $J_2=0.5$]{\label{fig:b}\includegraphics[width=0.45\linewidth]{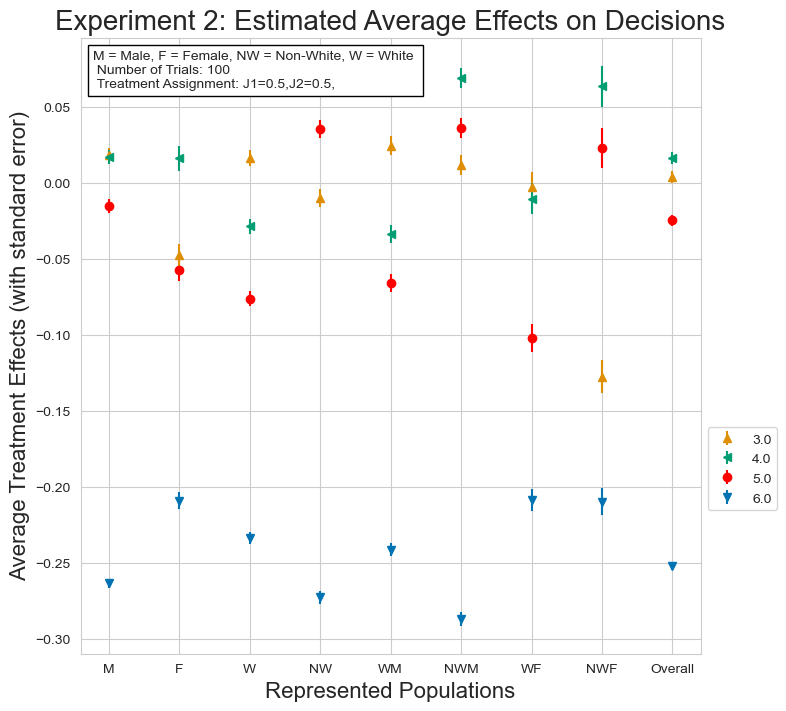}}\\
\subfloat[Judge Assignment: $J_1=0.33$, $J_2=0.33$, $J_3=0.33$]{\label{fig:c}\includegraphics[width=0.45\textwidth]{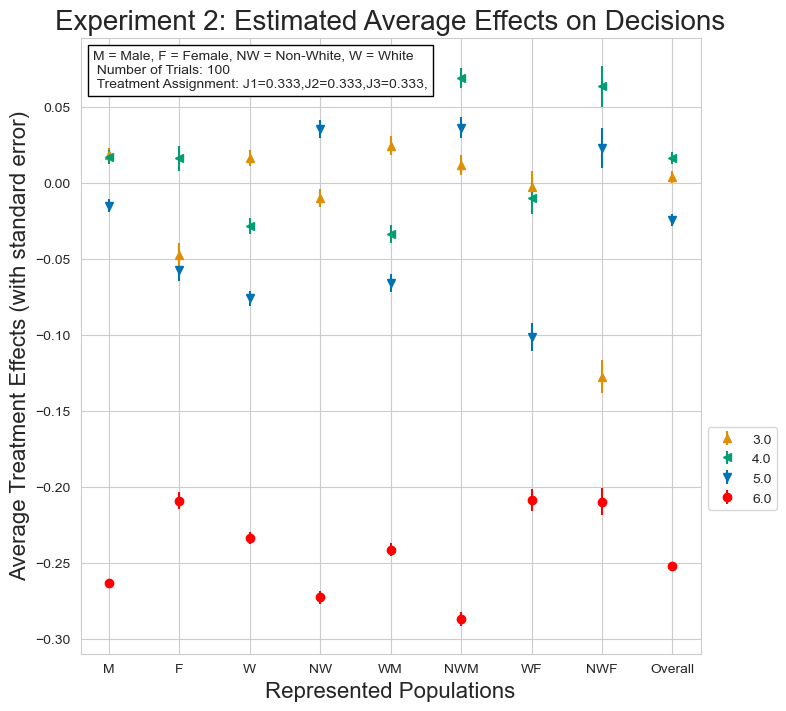}}\qquad%
\subfloat[Judge Assignment: $J_1=0.99$, $J_2=0.01$]{\label{fig:d}\includegraphics[width=0.45\textwidth]{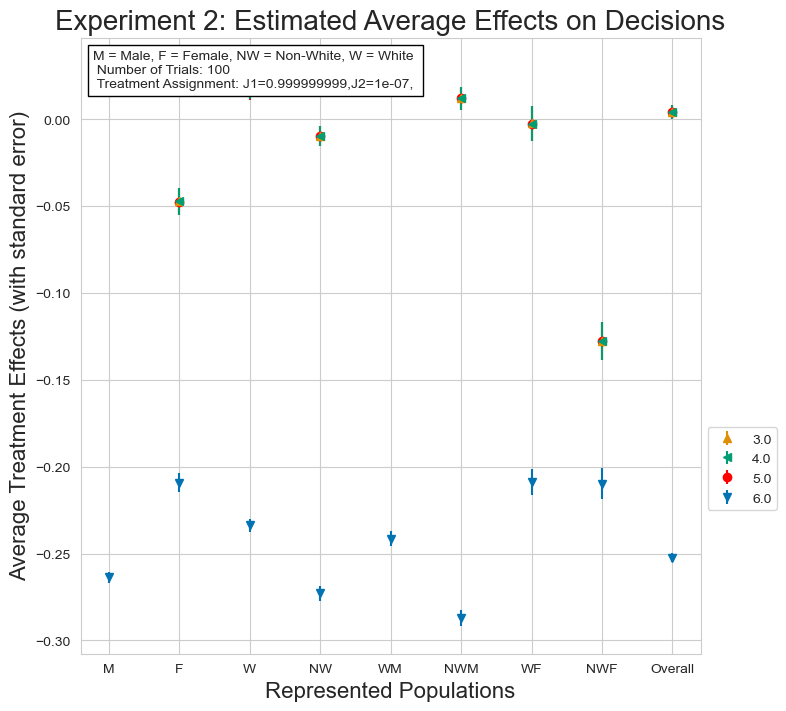}}%
\caption{\textbf{Experiment 2; Average Treatment Effect Changes on Decisions due to modifications in decision-making threshold.} We can set decision thresholds at a level of 3,4,5 or 6 in order to modify the positive predictive rate of the model; the higher the threshold required to make a positive prediction, the lower the model's positive predictive rate, and the higher the probability that positive predictions ($\hat{Y} = 1$) stays below the capacity threshold -- Non-Linear judge model,  baseline judge bias $b_k =0.6$, capacity threshold (of max positive predictive rate) $t=0.15$}
\label{fig:myfig_exp2_4}
\end{figure}

\begin{figure}[htbp]
\centering
\subfloat[Judge Assignment: $J_1=0.6$, $J_2=0.3$, $J_3=0.1$]{\label{fig:a}\includegraphics[width=0.45\linewidth]{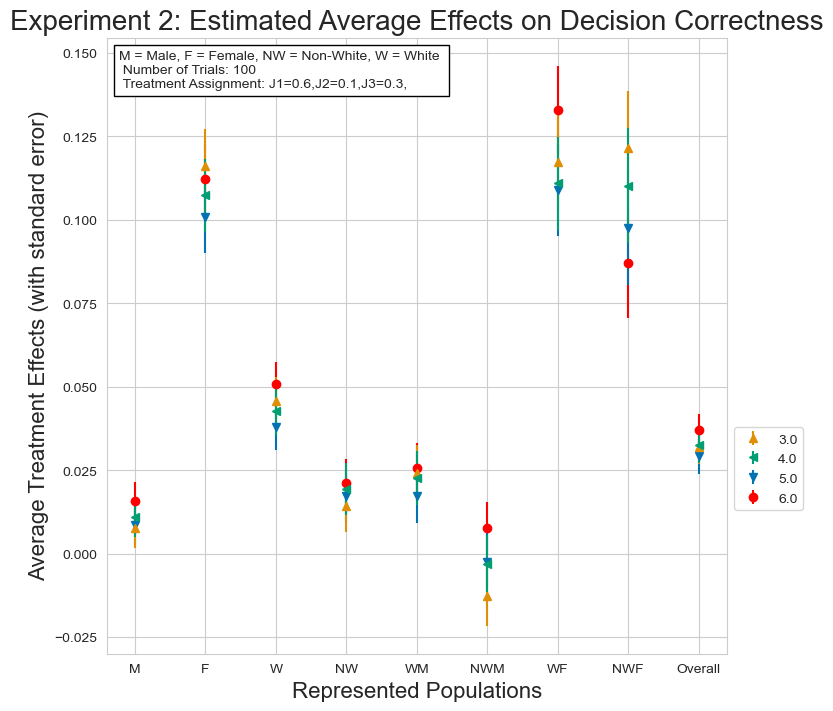}}\qquad
\subfloat[Judge Assignment: $J_1=0.5$, $J_2=0.5$]{\label{fig:b}\includegraphics[width=0.45\linewidth]{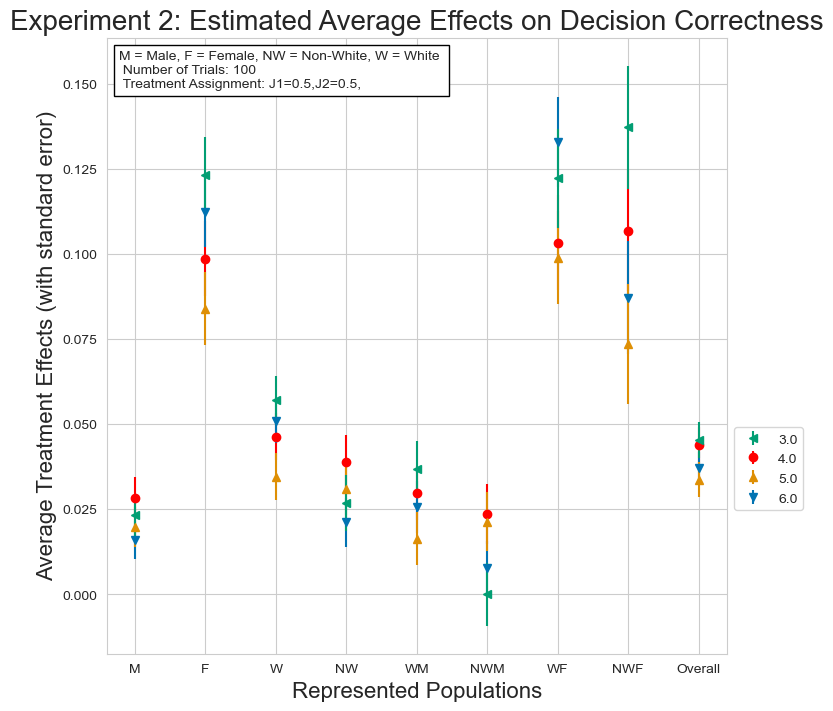}}\\
\subfloat[Judge Assignment: $J_1=0.33$, $J_2=0.33$, $J_3=0.33$]{\label{fig:c}\includegraphics[width=0.45\textwidth]{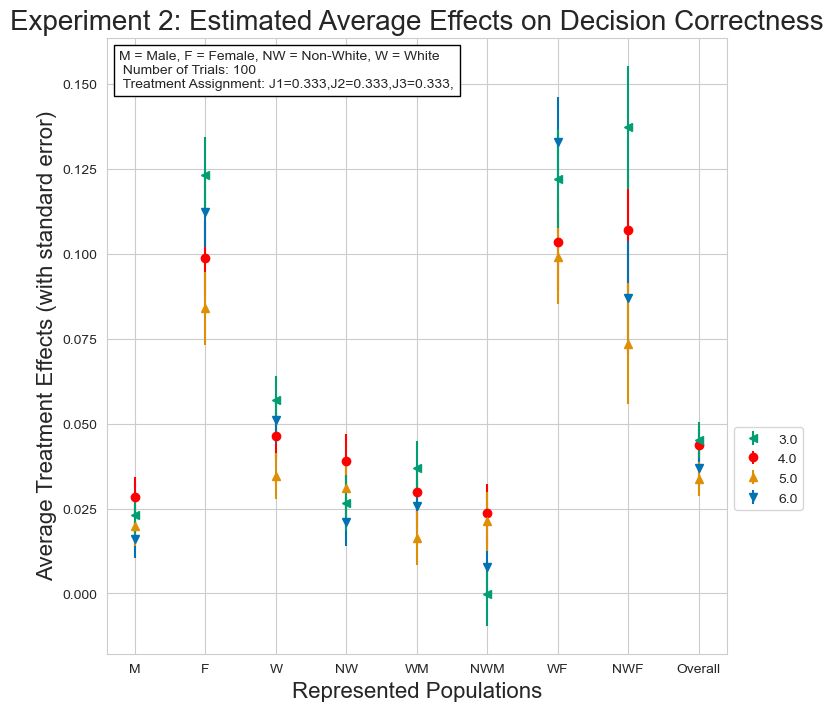}}\qquad%
\subfloat[Judge Assignment: $J_1=0.99$, $J_2=0.01$]{\label{fig:d}\includegraphics[width=0.45\textwidth]{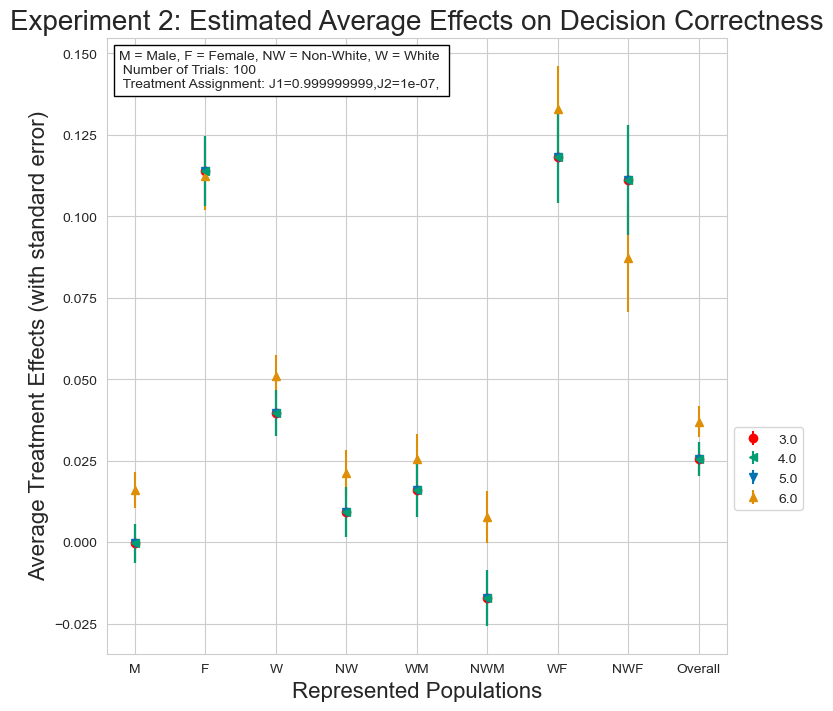}}%
\caption{\textbf{Experiment 2; Average Treatment Effect Changes on Decision Correctness due to modifications in decision-making threshold.} We can set decision thresholds at a level of 3,4,5 or 6 in order to modify the positive predictive rate of the model; the higher the threshold required to make a positive prediction, the lower the model's positive predictive rate, and the higher the probability that positive predictions ($\hat{Y} = 1$) stays below the capacity threshold -- Non-Linear judge model,  baseline judge bias $b_k =0.6$, capacity threshold (of max positive predictive rate) $t=0.2$}
\label{fig:myfig_exp2_5}
\end{figure}

\begin{figure}[htbp]
\centering
\subfloat[Judge Assignment: $J_1=0.6$, $J_2=0.3$, $J_3=0.1$]{\label{fig:a}\includegraphics[width=0.45\linewidth]{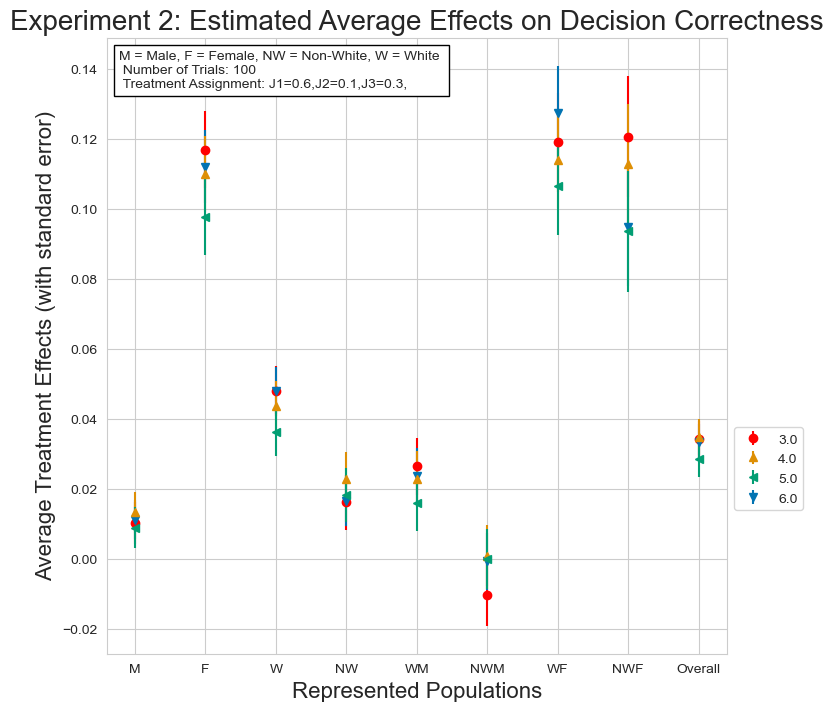}}\qquad
\subfloat[Judge Assignment: $J_1=0.5$, $J_2=0.5$]{\label{fig:b}\includegraphics[width=0.45\linewidth]{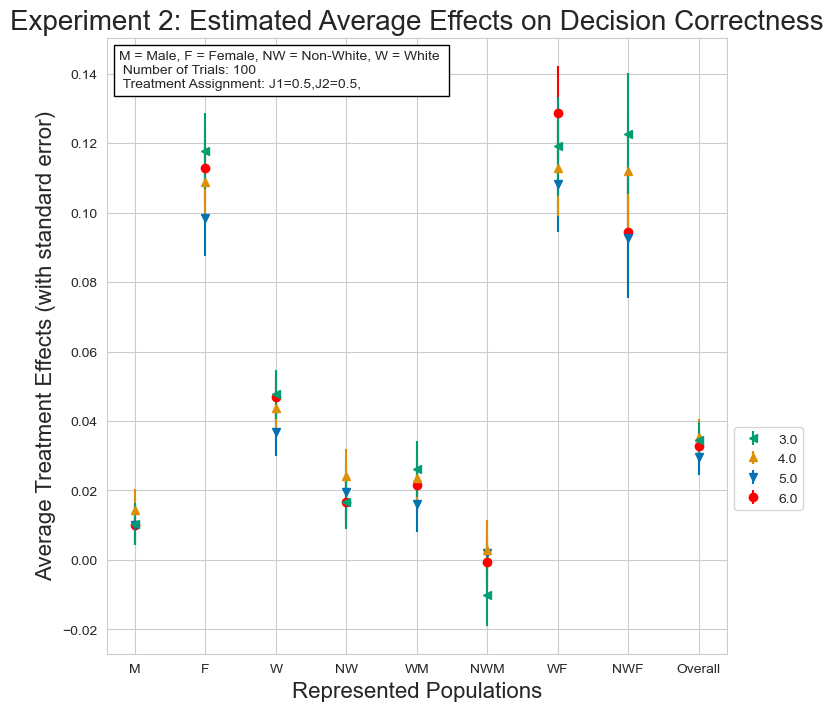}}\\
\subfloat[Judge Assignment: $J_1=0.33$, $J_2=0.33$, $J_3=0.33$]{\label{fig:c}\includegraphics[width=0.45\textwidth]{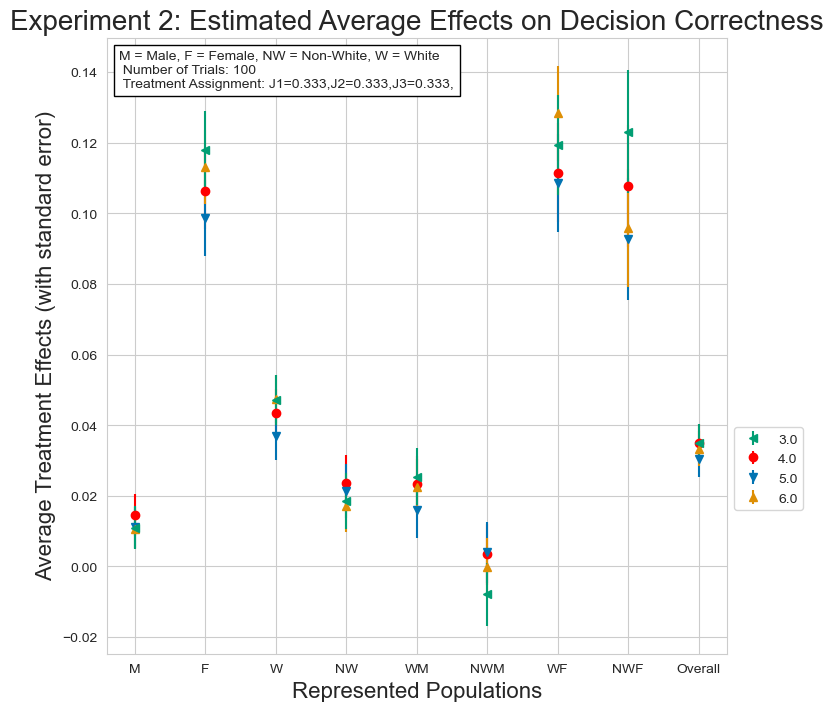}}\qquad%
\subfloat[Judge Assignment: $J_1=0.99$, $J_2=0.01$]{\label{fig:d}\includegraphics[width=0.45\textwidth]{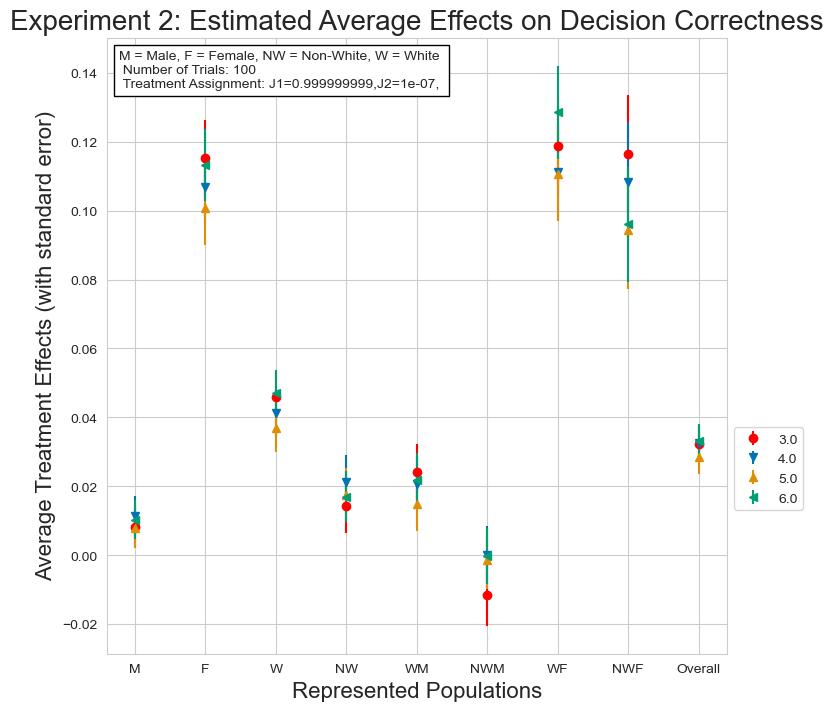}}%
\caption{\textbf{Experiment 2; Average Treatment Effect Changes on Decision Correctness due to modifications in decision-making threshold.} We can set decision thresholds at a level of 3, 4, 5 or 6 in order to modify the positive predictive rate of the model; the higher the threshold required to make a positive prediction, the lower the model's positive predictive rate, and the higher the probability that positive predictions ($\hat{Y} = 1$) stays below the capacity threshold -- Linear judge model,  baseline judge bias $b_k =0.6$, capacity threshold (of max positive predictive rate) $t=0.2$}
\label{fig:myfig_exp2_6}
\end{figure}

\begin{figure}[htbp]
\centering
\subfloat[Judge Assignment: $J_1=0.6$, $J_2=0.3$, $J_3=0.1$]{\label{fig:a}\includegraphics[width=0.45\linewidth]{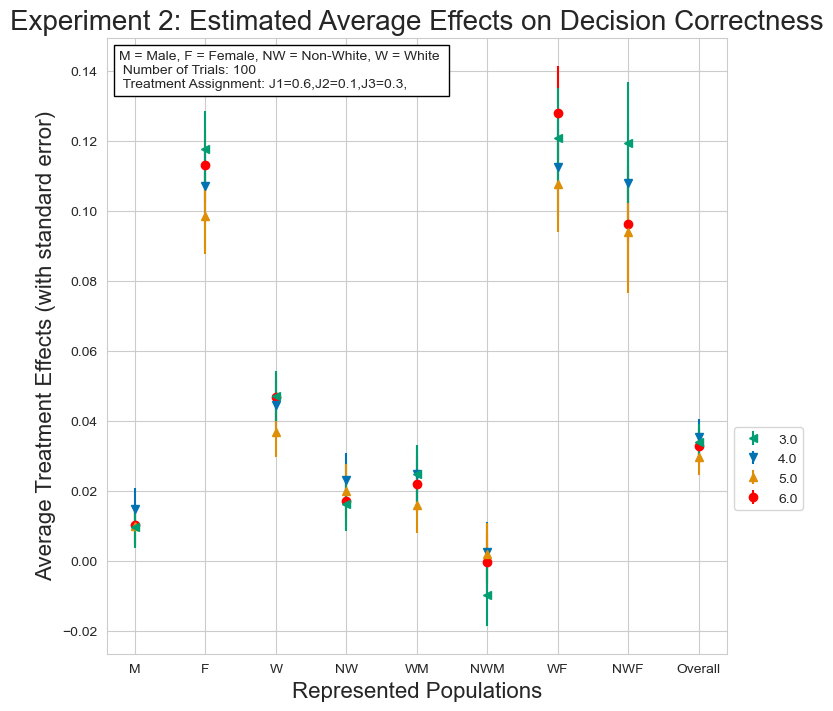}}\qquad
\subfloat[Judge Assignment: $J_1=0.5$, $J_2=0.5$]{\label{fig:b}\includegraphics[width=0.45\linewidth]{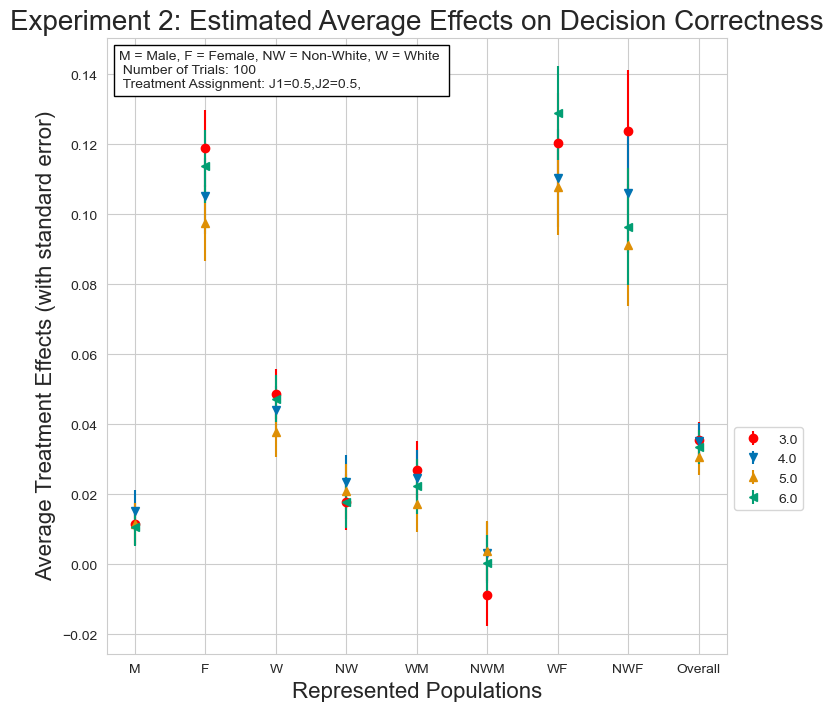}}\\
\subfloat[Judge Assignment: $J_1=0.33$, $J_2=0.33$, $J_3=0.33$]{\label{fig:c}\includegraphics[width=0.45\textwidth]{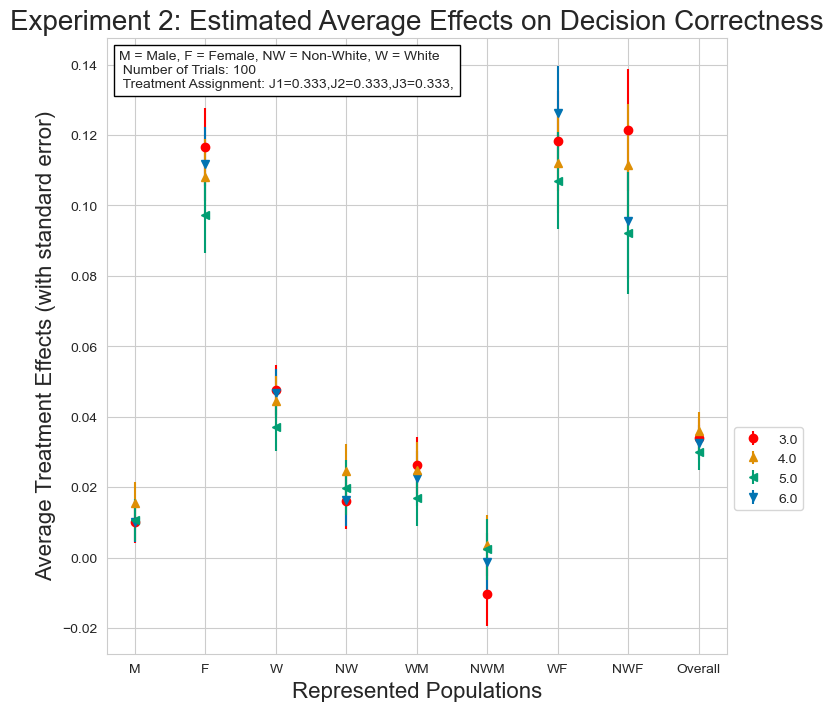}}\qquad%
\subfloat[Judge Assignment: $J_1=0.99$, $J_2=0.01$]{\label{fig:d}\includegraphics[width=0.45\textwidth]{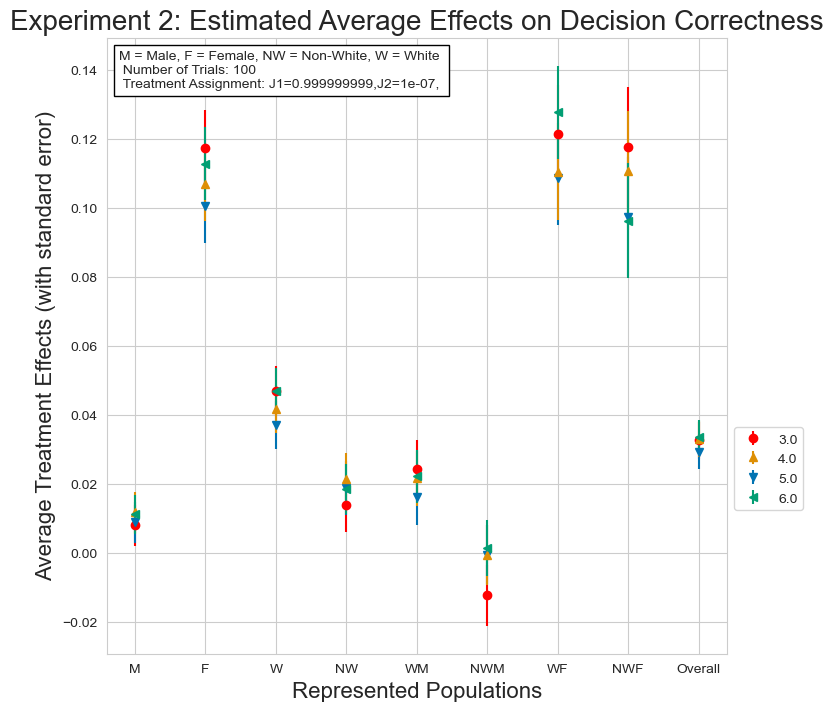}}%
\caption{\textbf{Experiment 2; Average Treatment Effect Changes on Decision Correctness due to modifications in decision-making threshold.} We can set decision thresholds at a level of 3,4,5 or 6 in order to modify the positive predictive rate of the model; the higher the threshold required to make a positive prediction, the lower the model's positive predictive rate, and the higher the probability that positive predictions ($\hat{Y} = 1$) stays below the capacity threshold -- Linear judge model, baseline judge bias $b_k =0.6$, capacity threshold (of max positive predictive rate)  $t=0.15$}
\label{fig:myfig_exp2_7}
\end{figure}

\begin{figure}[htbp]
\centering
\subfloat[Judge Assignment: $J_1=0.6$, $J_2=0.3$, $J_3=0.1$]{\label{fig:a}\includegraphics[width=0.45\linewidth]{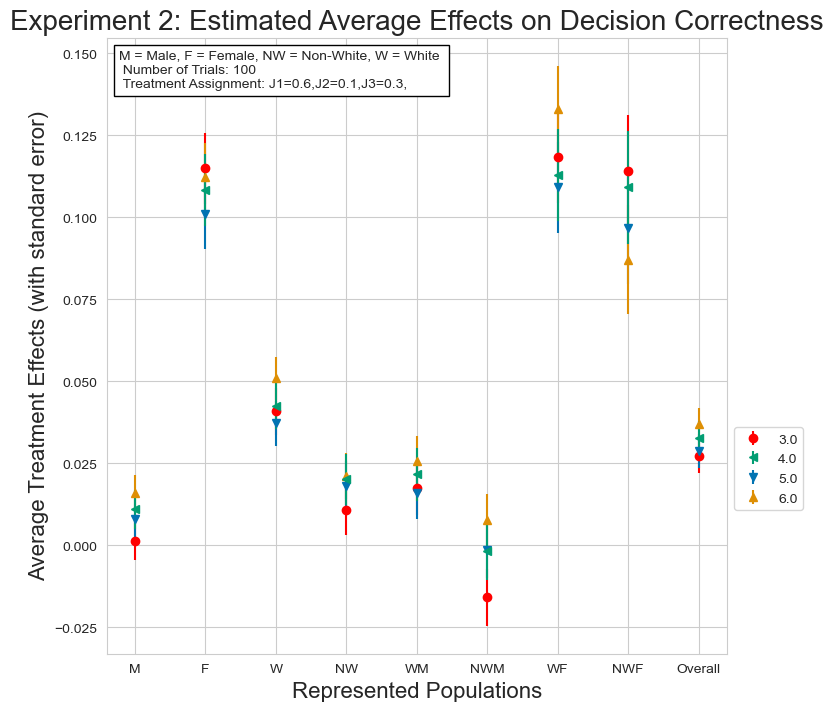}}\qquad
\subfloat[Judge Assignment: $J_1=0.5$, $J_2=0.5$]{\label{fig:b}\includegraphics[width=0.45\linewidth]{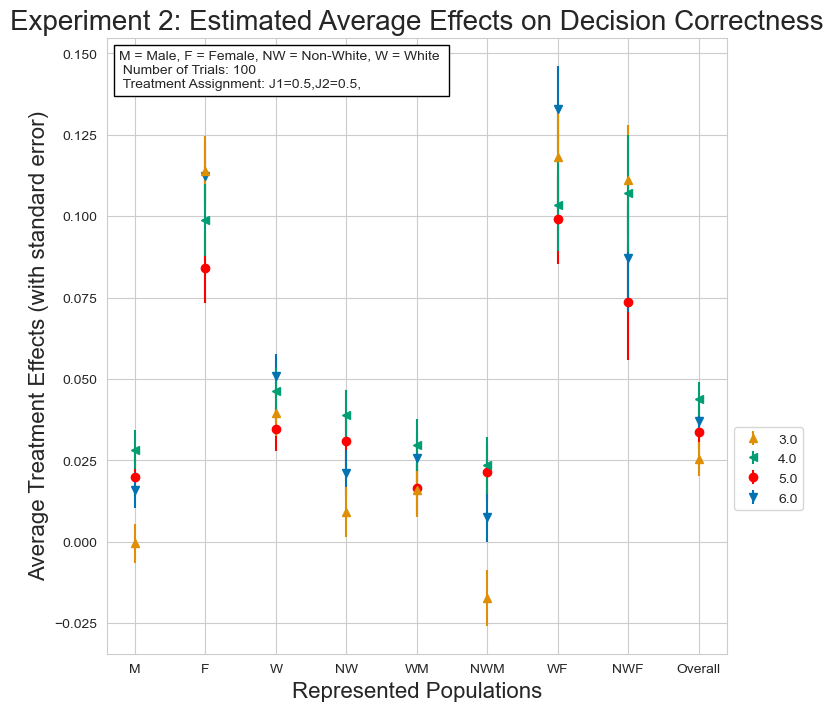}}\\
\subfloat[Judge Assignment: $J_1=0.33$, $J_2=0.33$, $J_3=0.33$]{\label{fig:c}\includegraphics[width=0.45\textwidth]{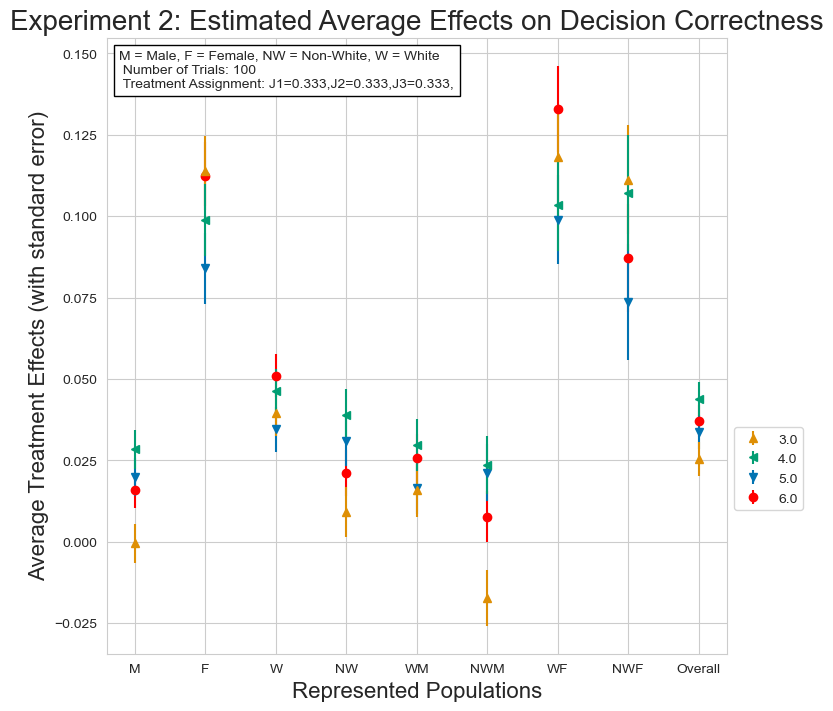}}\qquad%
\subfloat[Judge Assignment: $J_1=0.99$, $J_2=0.01$]{\label{fig:d}\includegraphics[width=0.45\textwidth]{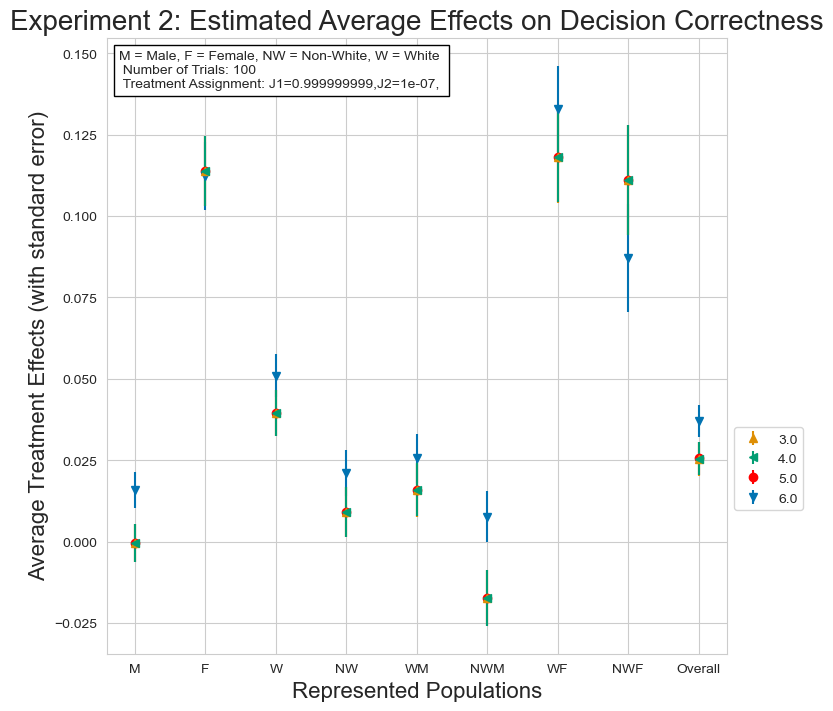}}%
\caption{\textbf{Experiment 2; Average Treatment Effect Changes on Decision Correctness due to modifications in decision-making threshold.} We can set decision thresholds at a level of 3,4,5 or 6 in order to modify the positive predictive rate of the model; the higher the threshold required to make a positive prediction, the lower the model's positive predictive rate, and the higher the probability that positive predictions ($\hat{Y} = 1$) stays below the capacity threshold -- Non-Linear.  baseline judge bias $b_k =0.6$, capacity threshold (of max positive predictive rate) $t=0.15$}
\label{fig:myfig_exp2_8}
\end{figure}

\subsection{Experiment 3: Low Trust Effect}

In the low trust setting, we investigate the role of changing the accuracy of the algorithmic recommendation on judge decision-making under our proposed model. Our model dictates that with less accurate algorithmic recommendations, judge responsiveness decreases. We model this by using our model to derive hypothetical decisions made under simulated alternative prediction models. We simulate each alternative predictor with different algorithmic recommendation accuracy “boosts”, where an accuracy boost of 0 results in the default algorithmic recommendation accuracy of the PSA in the real world experimental setting. This default accuracy is calculated by compared recommended release decisions $D_i = 0$, with the observed outcomes. If the observed outcome is to recidivate and/or commit new crime (i.e. $Y_i = 1$), then we consider the prediction (ie. recommended decision) to be incorrect  with respect to observed outcomes. For the default PSA, we calculate an accuracy of $\approx 54\%$. 

Each accuracy boost occurs in increments of 0.1. If a boost is at 0.1, then 10\% of the algorithmic recommendations that were incorrect in the baseline setting are flipped to the correct decision with respect to the observed outcome. This simulates a setting where an alternative, higher accuracy model can be considered (eg.  with an accuracy boost of 0.1, predictive accuracy of these hypothetical alternative model becomes $\approx 67\%$).

Under various judge model settings, we can see that low accuracy algorithmic recommendations lead to an underestimation of the measured treatment effect on decisions and that this impact is also observed for decision correctness. We find, as with the other cases, that this disparity in measurement is most egregious for the non-white female (NWF) subgroup. 

Main results for this experiment can be found in in Figures ~\ref{fig:myfig_exp3_1}, ~\ref{fig:myfig_exp3_2}, ~\ref{fig:myfig_exp3_3}, ~\ref{fig:myfig_exp3_4}. 



\begin{figure}[htbp]
\centering
\subfloat[Judge Assignment: $J_1=0.6$, $J_2=0.3$, $J_3=0.1$]{\label{fig:a}\includegraphics[width=0.45\linewidth]{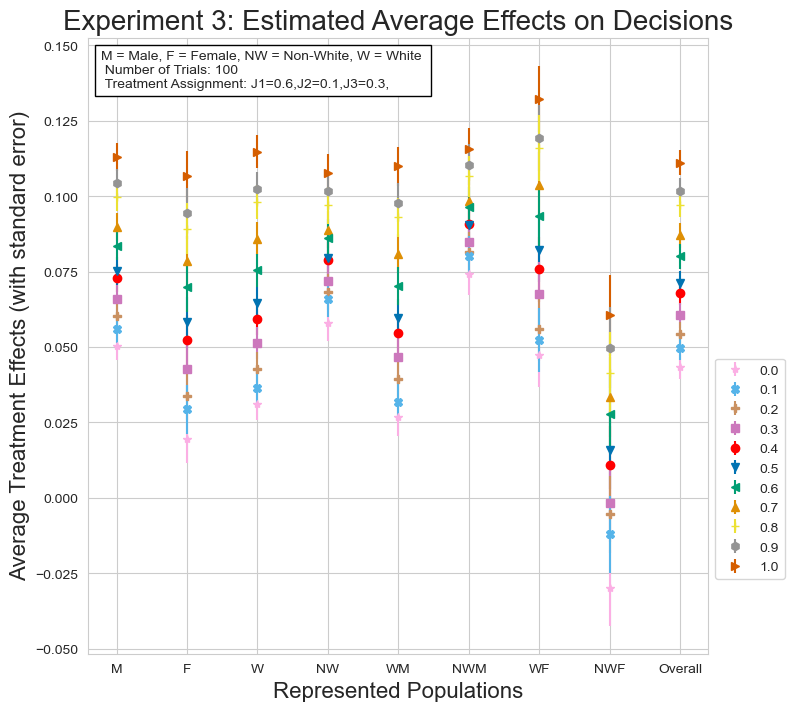}}\qquad
\subfloat[Judge Assignment: $J_1=0.5$, $J_2=0.5$]{\label{fig:b}\includegraphics[width=0.45\linewidth]{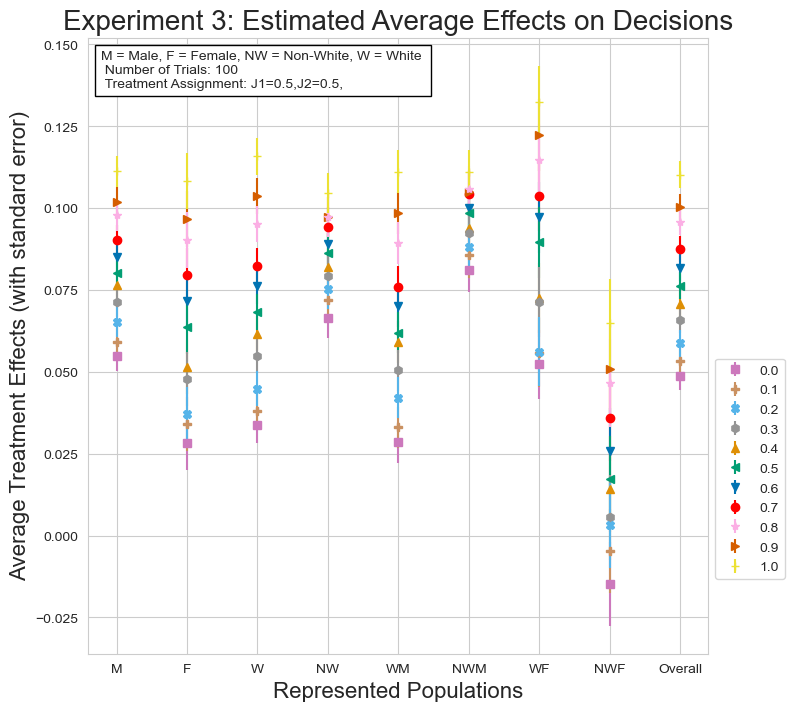}}\\
\subfloat[Judge Assignment: $J_1=0.33$, $J_2=0.33$, $J_3=0.33$]{\label{fig:c}\includegraphics[width=0.45\textwidth]{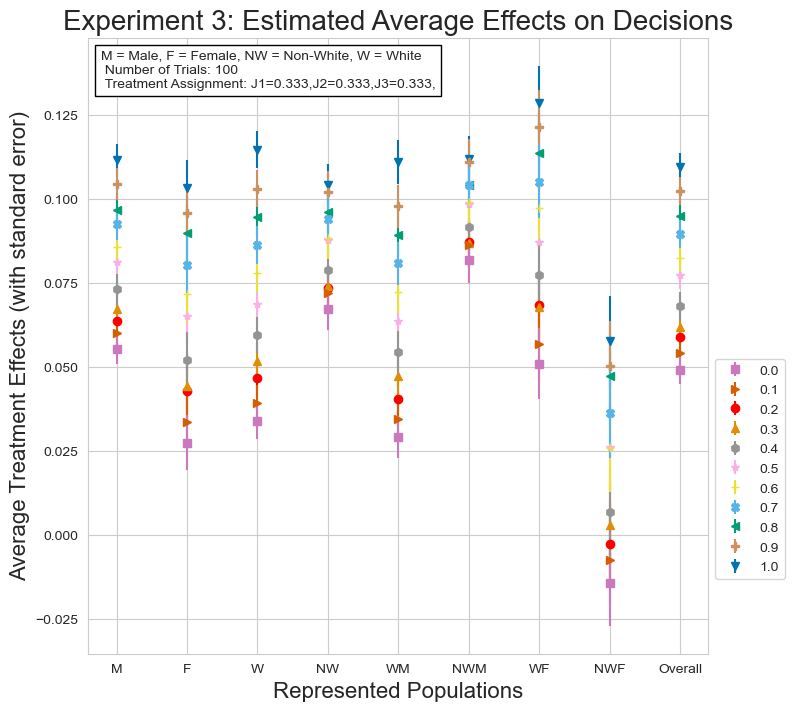}}\qquad%
\subfloat[Judge Assignment: $J_1=0.99$, $J_2=0.01$]{\label{fig:d}\includegraphics[width=0.45\textwidth]{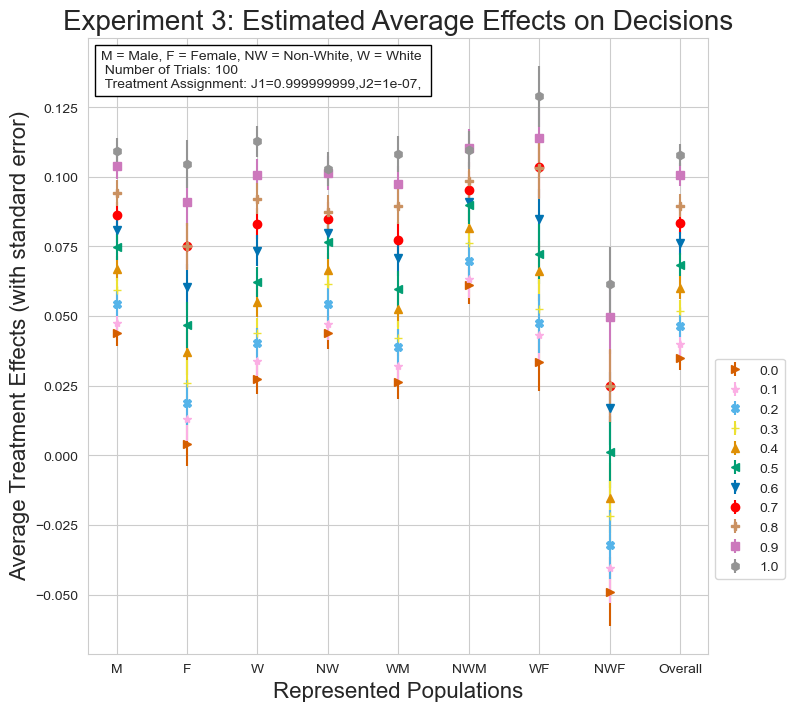}}%
\caption{\textbf{Experiment 3; Average Treatment Effect Changes on Decisions due to modifications in model accuracy.} Each increment of 0.1 in the legend represents an accuracy boost over the baseline observed accuracy of the studied PSA. For example, to simulate an accuracy boost of 0.1, we take 10\% of incorrect cases from the default PSA  (relative to observed outcomes) and we artificially flip the value of the model to match the observed outcome in order to artificially boost the accuracy of a simulated alternative prediction model. An accuracy boost of 1.0 is a perfect model that 100\% matches observed outcomes; an accuracy boost of 0 is the accuracy of the default PSA model -- Linear judge model, baseline judge bias $b_k =0.6$, error rate threshold $t=0.15$}
\label{fig:myfig_exp3_1}
\end{figure}

\begin{figure}[htbp]
\centering
\subfloat[Judge Assignment: $J_1=0.6$, $J_2=0.3$, $J_3=0.1$]{\label{fig:a}\includegraphics[width=0.45\linewidth]{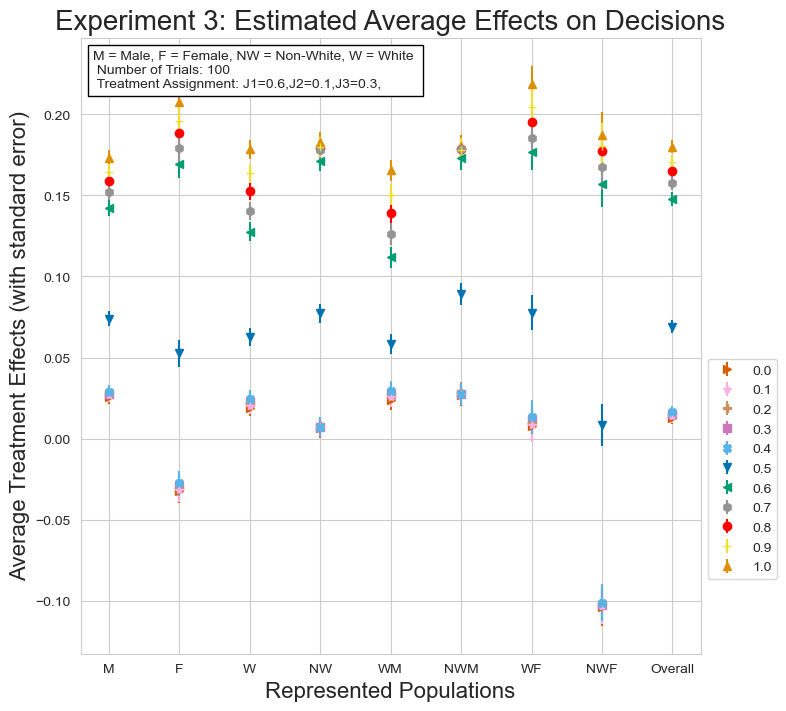}}\qquad
\subfloat[Judge Assignment: $J_1=0.5$, $J_2=0.5$]{\label{fig:b}\includegraphics[width=0.45\linewidth]{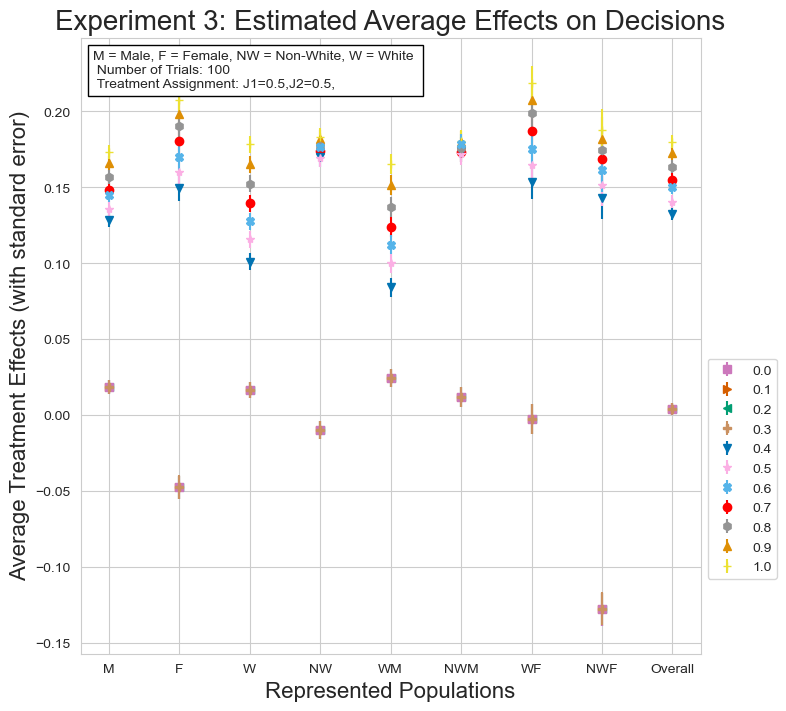}}\\
\subfloat[Judge Assignment: $J_1=0.33$, $J_2=0.33$, $J_3=0.33$]{\label{fig:c}\includegraphics[width=0.45\textwidth]{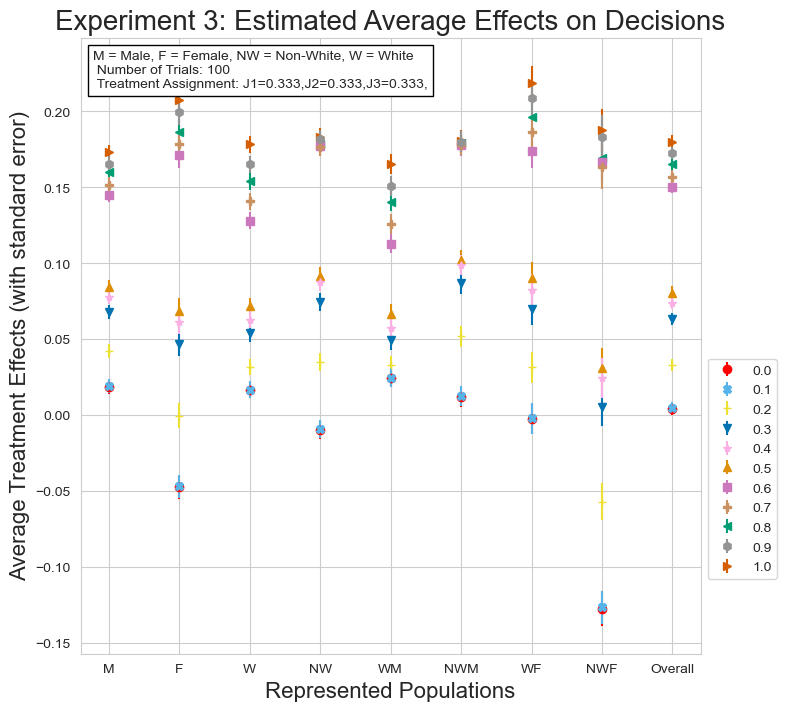}}\qquad%
\subfloat[Judge Assignment: $J_1=0.99$, $J_2=0.01$]{\label{fig:d}\includegraphics[width=0.45\textwidth]{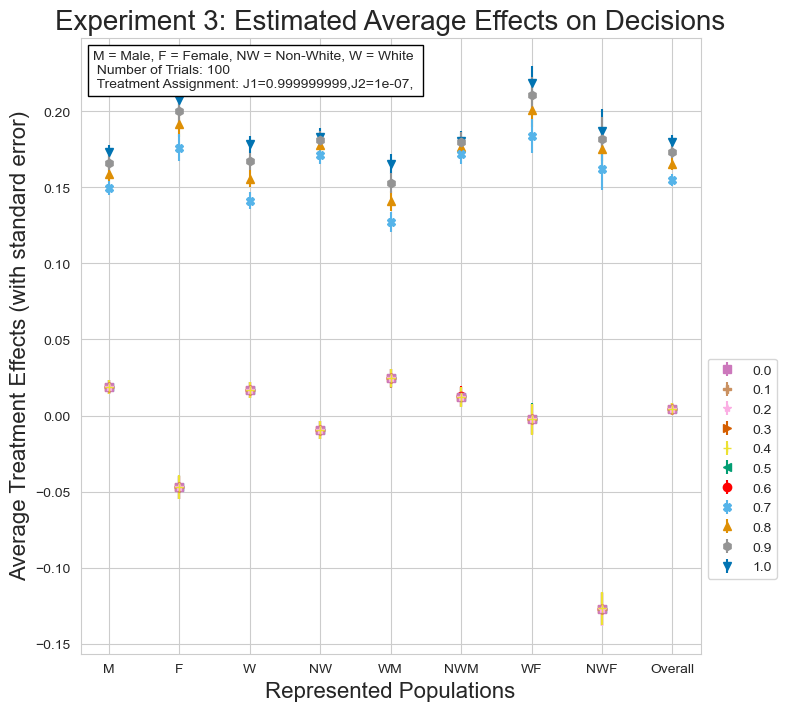}}%
\caption{\textbf{Experiment 3; Average Treatment Effect Changes on Decisions due to modifications in model accuracy.} Each increment of 0.1 in the legend represents an accuracy boost over the baseline observed accuracy of the studied PSA. For example, to simulate an accuracy boost of 0.1, we take 10\% of incorrect cases from the default PSA  (relative to observed outcomes) and we artificially flip the value of the model to match the observed outcome in order to artificially boost the accuracy of a simulated alternative prediction model. An accuracy boost of 1.0 is a perfect model that 100\% matches observed outcomes; an accuracy boost of 0 is the accuracy of the default PSA model -- Non-Linear judge model,  baseline judge bias $b_k =0.6$, error rate threshold $t=0.15$}
\label{fig:myfig_exp3_2}
\end{figure}

\begin{figure}[htbp]
\centering
\subfloat[Judge Assignment: $J_1=0.6$, $J_2=0.3$, $J_3=0.1$]{\label{fig:a}\includegraphics[width=0.45\linewidth]{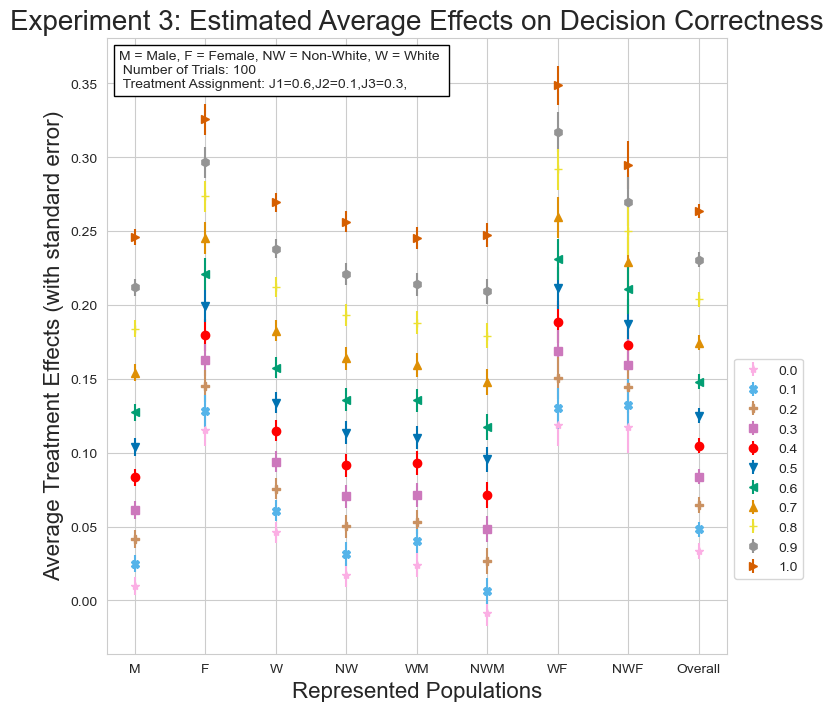}}\qquad
\subfloat[Judge Assignment: $J_1=0.5$, $J_2=0.5$]{\label{fig:b}\includegraphics[width=0.45\linewidth]{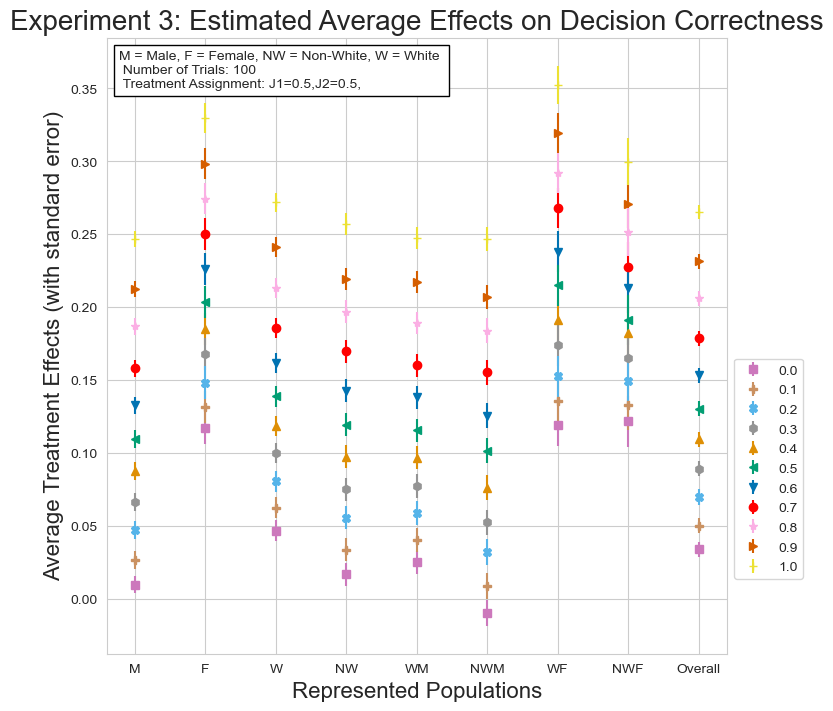}}\\
\subfloat[Judge Assignment: $J_1=0.33$, $J_2=0.33$, $J_3=0.33$]{\label{fig:c}\includegraphics[width=0.45\textwidth]{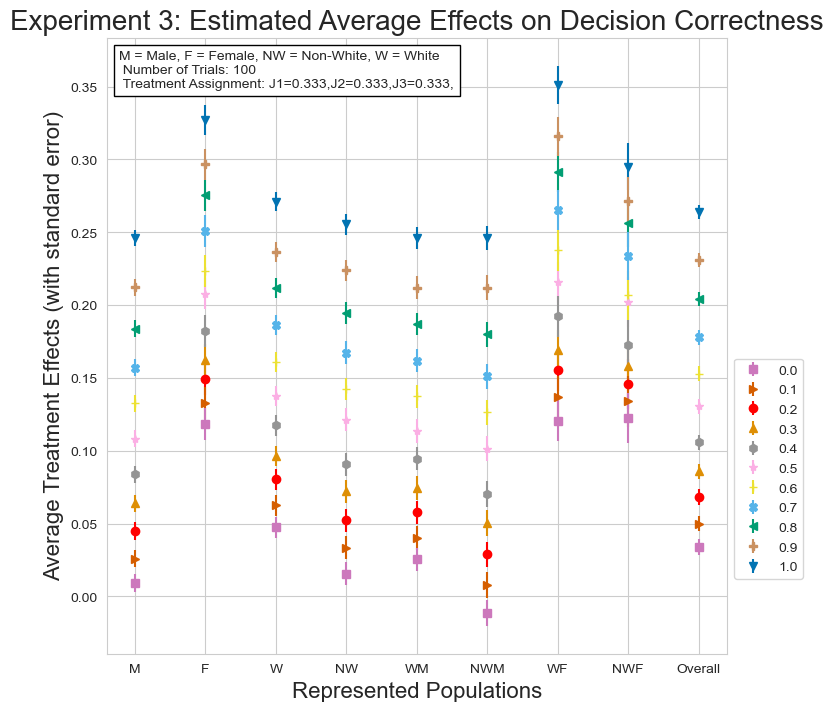}}\qquad%
\subfloat[Judge Assignment: $J_1=0.99$, $J_2=0.01$]{\label{fig:d}\includegraphics[width=0.45\textwidth]{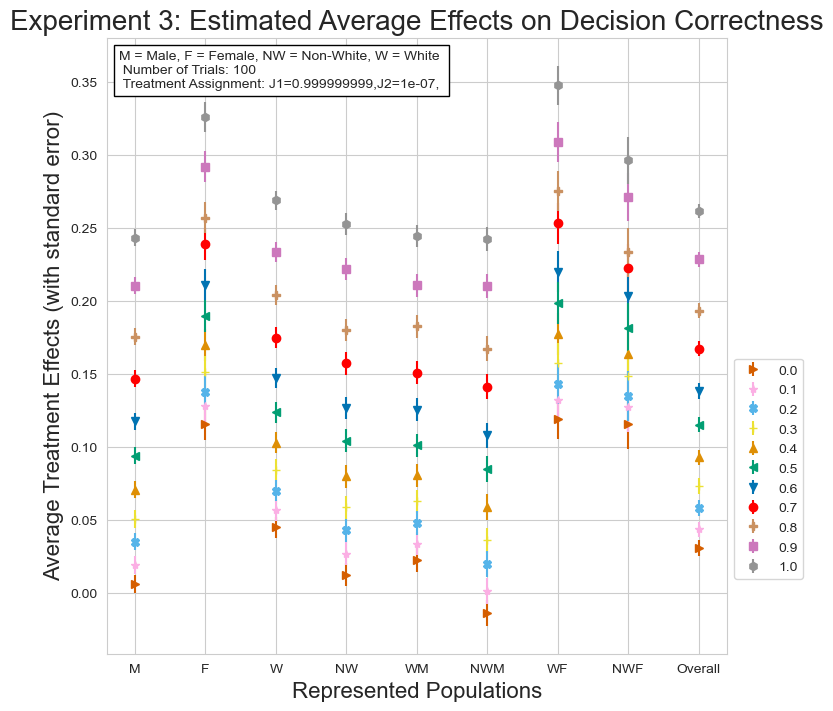}}%
\caption{\textbf{Experiment 3; Average Treatment Effect Changes on Decision Correctness due to modifications in model accuracy.} Each increment of 0.1 in the legend represents an accuracy boost over the baseline observed accuracy of the studied PSA. For example, to simulate an accuracy boost of 0.1, we take 10\% of incorrect cases from the default PSA  (relative to observed outcomes) and we artificially flip the value of the model to match the observed outcome in order to artificially boost the accuracy of a simulated alternative prediction model. An accuracy boost of 1.0 is a perfect model that 100\% matches observed outcomes; an accuracy boost of 0 is the accuracy of the default PSA model -- Linear judge model. $b_k =0.6$, error rate threshold $t=0.15$}
\label{fig:myfig_exp3_3}
\end{figure}

\begin{figure}[htbp]
\centering
\subfloat[Judge Assignment: $J_1=0.6$, $J_2=0.3$, $J_3=0.1$]{\label{fig:a}\includegraphics[width=0.45\linewidth]{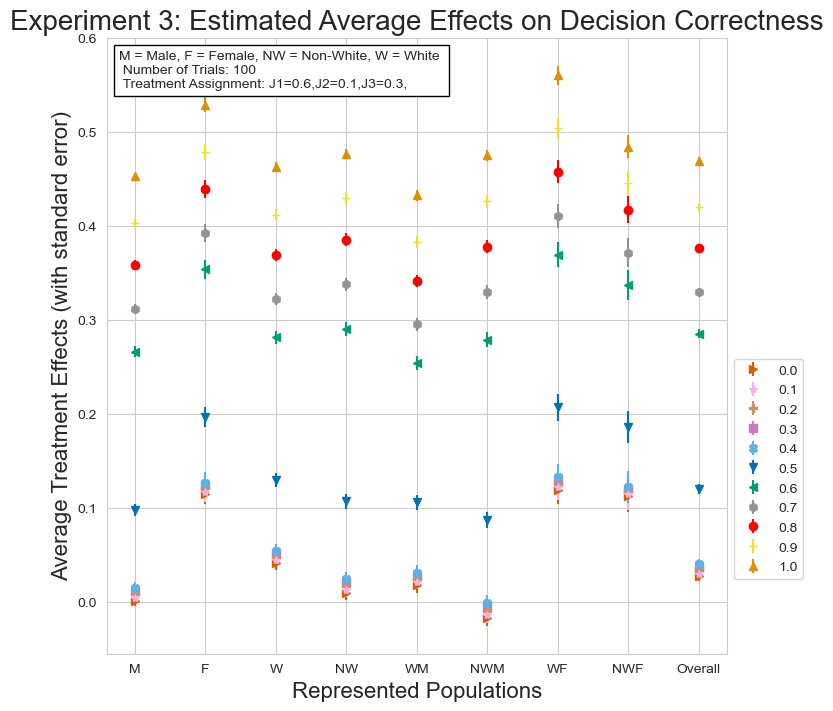}}\qquad
\subfloat[Judge Assignment: $J_1=0.5$, $J_2=0.5$]{\label{fig:b}\includegraphics[width=0.45\linewidth]{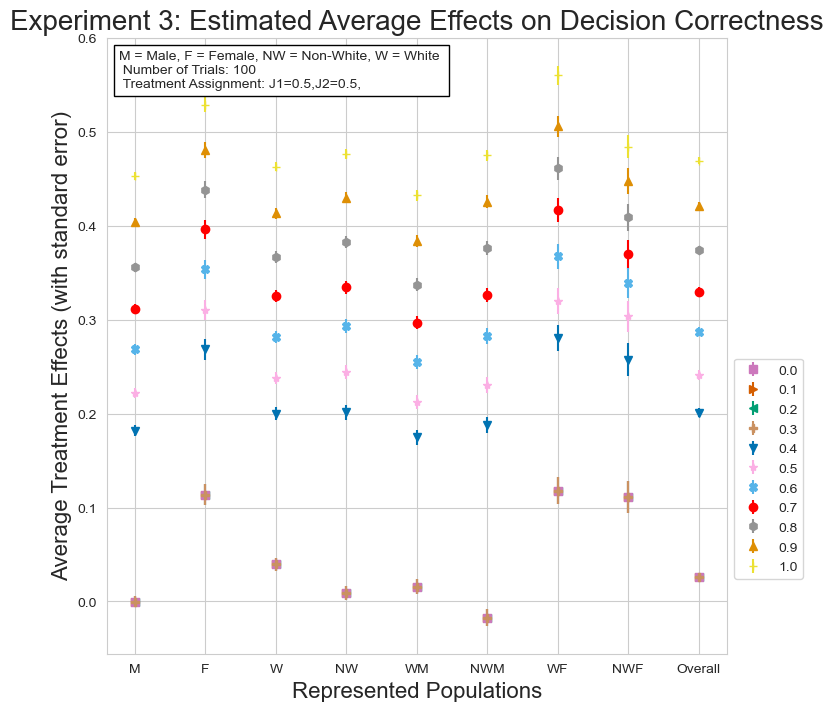}}\\
\subfloat[Judge Assignment: $J_1=0.33$, $J_2=0.33$, $J_3=0.33$]{\label{fig:c}\includegraphics[width=0.45\textwidth]{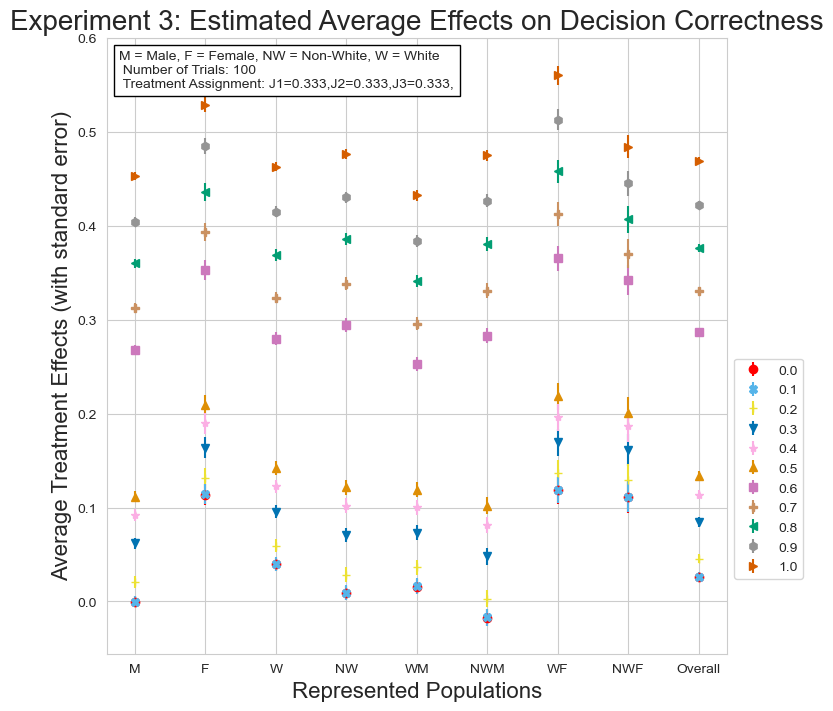}}\qquad%
\subfloat[Judge Assignment: $J_1=0.99$, $J_2=0.01$]{\label{fig:d}\includegraphics[width=0.45\textwidth]{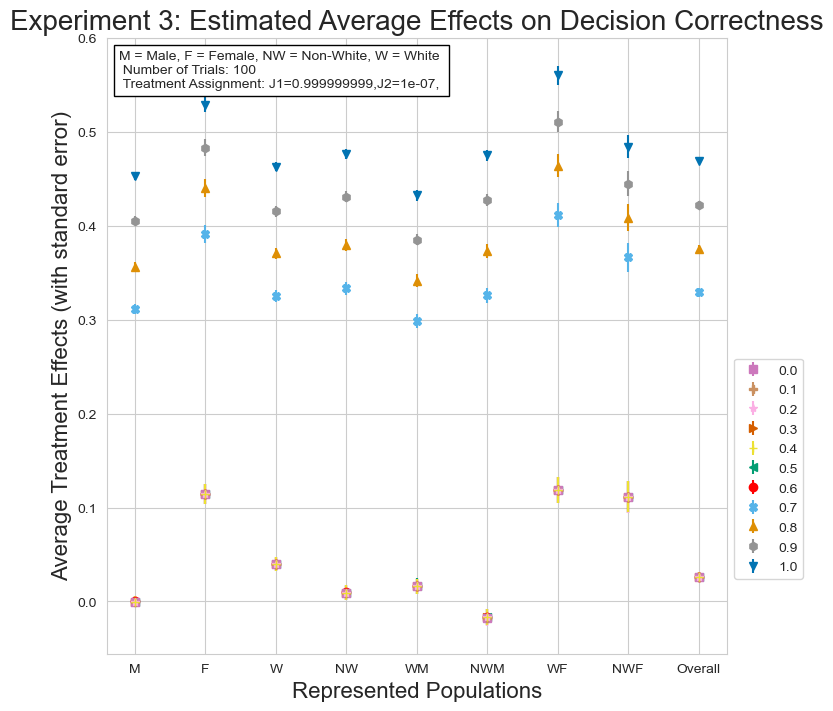}}%
\caption{\textbf{Experiment 3, Average Treatment Effect Changes on Decision Correctness due to modifications in model accuracy.} Each increment of 0.1 in the legend represents an accuracy boost over the baseline observed accuracy of the studied PSA. For example, to simulate an accuracy boost of 0.1, we take 10\% of incorrect cases from the default PSA  (relative to observed outcomes) and we artificially flip the value of the model to match the observed outcome in order to artificially boost the accuracy of a simulated alternative prediction model. An accuracy boost of 1.0 is a perfect model that 100\% matches observed outcomes; an accuracy boost of 0 is the accuracy of the default PSA model -- Non-Linear judge model, baseline judge bias $b_k =0.6$, error rate threshold $t=0.15$}
\label{fig:myfig_exp3_4}
\end{figure}

\subsection{Additional Observations}
In this section, we provide a brief overview of some additional observations that informed the design and implementation of our study. 

\paragraph{No Empirical Modification of Outcomes}
In an ideal world, we could extend this work to observe how the changes in decisions could ultimately impact outcomes. However, because we could not modify outcomes through an active experiment, there was no empirical modification of outcomes in our semi-synthetic empirical explorations. 

\paragraph{Limited Impact on Decision Correctness}
Not only do we calculate the average treatment effects on decisions, but we also calculate the average treatment effects on ~\emph{decision correctness} -- which we define as instances where the decision for release $D_i = 0$ does not result in an outcome of recidivism, or new crime $Y_i = 1$. Note that we cannot observe the correctness for scenarios for which the original decision was to detain $D_i = 1$ , though the default outcome for such decisions is set to  $Y_i = 0$ in the dataset. 

Empirically, our simulations demonstrate that decision correctness very rarely gets impacted by the judge responsiveness factors and related experimental design choices, with changes to the measured treatment effect typically about 10x smaller than that observed for the effects on the actual decisions (see Figures ~\ref{fig:myfig_ex1_3}, ~\ref{fig:myfig_ex1_4}). This indicates potential further work into what kinds of experiment design choices might impact decision correctness more directly -- notably the largest observed ATE changes on decision correctness are related to Experiment 3 (see Figures ~\ref{fig:myfig_exp3_3}, ~\ref{fig:myfig_exp3_4}), which impacts prediction correctness in addition to it's secondary impact via judge compliance.

\paragraph{Statistical Significance Tests}
In the codebase, we provide the code to calculate the statistical significance of observed differences between estimated treatment effect measurements under the various responsiveness factors. This can be used to further validate the empirical observation that in many cases, experiment design choices do have a notable and statistically significant impact on treatment effect measurement. Further details on this analysis can be found in \href{https://anonymous.4open.science/r/exp-design-human-decisions-A3CD/README.md}{our anonymous repository} of supplementary material.

\paragraph{Day-to-Day Decision-making Patterns}

 The original study ~\citep{imai2020experimental} included 1891  cases presented to one judge over 274 days.  In an attempt to better understand the dynamics of decision-making in the original study, we did an initial empirical exploration into compliance patterns given treatment exposure on a given day. What we find is that on most days, the judge has an exposure of $\approx 50\%$ (specifically, a mean treatment exposure of 49.5\% across all days). Overall, the mean agreement rate of the judge with the recommended model is fairly high (71.2\%, over all days) and weakly positively correlated with the treatment exposure rate ($\rho = 0.1095$). However, it seems that the average daily agreement rate of untreated cases (65.9\%) is notably lower that the average daily agreement rate of treated cases (70.1\%), and the latter is also more correlated with the treatment exposure rate ($\rho = -0.1768$ for untreated cases, and $\rho = 0.2562$ for treated cases, a correlation that also holds for next day  with $\rho =0.2414$). Although far from conclusive evidence, this could be interpreted to suggest that there is some  positive correlation between the treatment exposure rate and judge compliance on treated cases, as we hypothesize in our work. 

An overview of such finding can be found in Figure ~\ref{fig:myfig_add}.

\begin{figure}[htbp]
\centering
\subfloat[Rate of treated cases over total assigned casses for a given day.]{\label{fig:a}\includegraphics[width=0.45\linewidth]{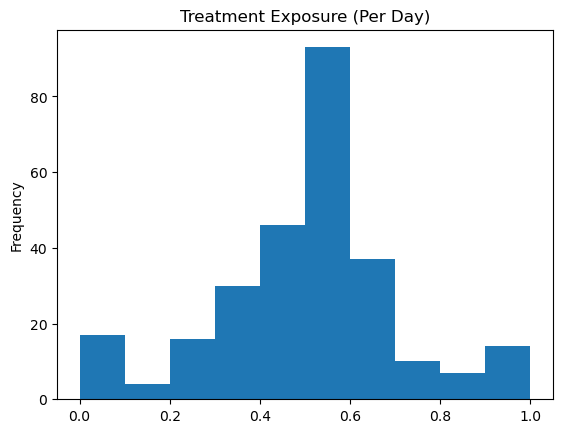}}\qquad
\subfloat[Overall rate of agreement between judge decision-making and model recommendation per day.]{\label{fig:b}\includegraphics[width=0.45\linewidth]{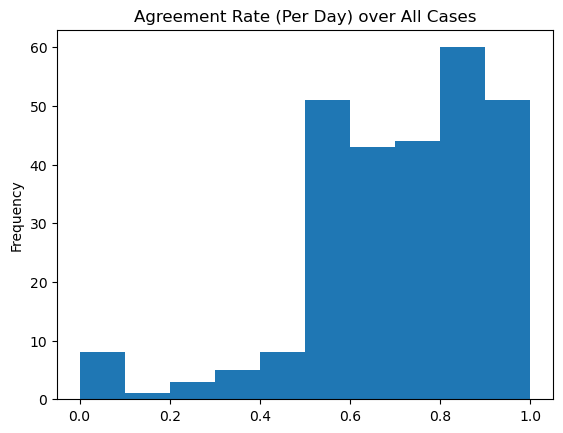}}\\
\subfloat[Rate of agreement between judge decision-making and model recommendation per day, for treated cases.]{\label{fig:c}\includegraphics[width=0.45\textwidth]{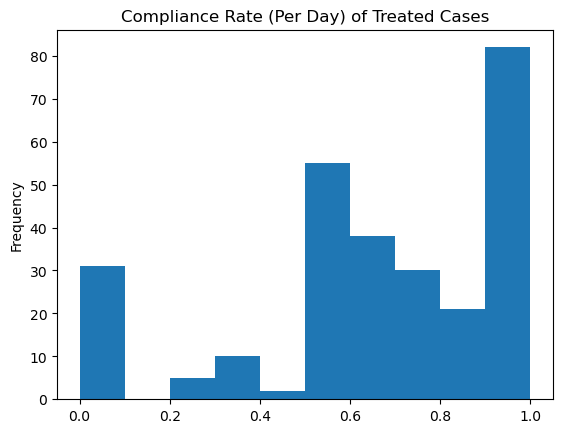}}\qquad%
\subfloat[Rate of agreement between judge decision-making and model recommendation per day, for untreated cases.]{\label{fig:d}\includegraphics[width=0.45\textwidth]{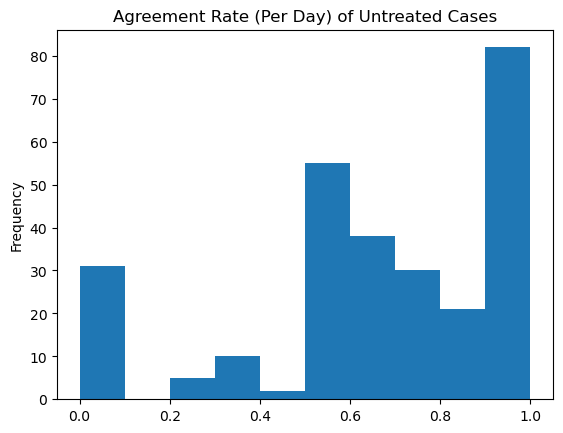}}%
\caption{Additional observations on judge behavior over individual days.}
\label{fig:myfig_add}
\end{figure}

\section{Further Related work}\label{app:further_rel}

\paragraph{Human Interaction Lab Studies}
In contrast to RCTs in the field, there has been work on controlled~\emph{lab} experiments in which experimenters assess the effect of providing algorithmic risk scores in simulated environments to instructed test subjects \citep{green2019disparate}. The test subjects are non-experts and are often role-playing real-world decision-makers on a contrived or real-world inspired task~\citep{lai2023towards}. In the current work, we focus on the setting of a field experiment as it more directly addresses the impact of ADS in deployment.

\paragraph{Spillover effects}
Recently, \citet{RCT_Service_Interventions} studied experiment design where the treatment is delivered by a human service provider who has limited resources. They show that treatment effect sizes are mediated by such capacity constraints, which are \emph{induced} by particular choices in the design of the experiment, such as the number of service providers recruited and the treatment assignment. Our work also views the human decision maker as a mediator of the treatment effect of using algorithmic decision aids in a decision process; however, we are motivated by modelling the human decision maker's cognitive biases induced by the experiment design, rather than their capacity constraints alone. Decision making is also a distinct setting from service provision.

\subsection{Academic Evidence for responsiveness factors}
\label{app:further_rel-evidence}
It has long been understood in social psychology and behavioral economics that the ``framing’’ or context in which nudges or advice is given, meaningfully impacts how much this additional information is heeded during decision-making (see \citep{1kahneman2011thinking, 4klein2017sources, 8tversky1989rational}). This discussion has been further elaborated on in computer science contexts, through the formalization of broader ``algorithm-in-the-loop’’ considerations ~\citep{10green2019principles, 9guo2024decision}, and notions of ``automation bias’’ (ie. the degree of reliance of a human decision-maker on an algorithmic recommendation) ~\citep{14goddard2012automation, 15albright2019if}. 

Specific responsiveness factors are formally supported with the following long-standing evidence: 

\paragraph{Treatment exposure effect}
Humans tend to have a consistency bias ~\citep{1kahneman2011thinking, 2tversky1981framing, 3beresford2008understanding}, especially in high pressure and rapid decision-making contexts ~\citep{4klein2017sources}. This leads them to tend to rely on consistently available information, rather than adjusting their decision-making protocol for each case. This leads to different degrees of responsiveness once the decision-making is more often exposed to a particular stimuli or nudge, versus less responsiveness when they are overall less exposed ~\citep{2tversky1981framing, 4klein2017sources, 8tversky1989rational}. 

\paragraph{Capacity constraint}
Once operating at a particular capacity, especially in high stakes settings, decision-makers become less responsive to input information and nudges, as they are no longer in a position to act on the algorithmic recommendation past a certain threshold of capacity to respond. This is a finding that has been observed pragmatically in many settings by decision theory scholars ~\citep{4klein2017sources}, and economists ~\citep{5boutilier2024randomized}. The evidence is somewhat mixed however on the broader implications for generally resource constrained settings, where it has been observed by computer scientists that decision-makers may opt in those cases to over-rely on algorithmic recommendations, rather than invest in more complex, independent decision-making ~\citep{13buccinca2021trust, 14goddard2012automation}. This is still an area of ongoing discussion and study. 

\paragraph{Low trust model} 
Scholars in human-computer interface (HCI) literature have for some time flagged the relationship between model accuracy and human trust in AI recommendations in decision-making. Specifically, ~\cite{6yin2019understanding} highlights how “people's trust in a model is affected by both its stated accuracy and its observed accuracy, and that the effect of stated accuracy can change depending on the observed accuracy”. The authors conclude that human responsiveness to algorithmic recommendations is lowered by a perception of lower accuracy, a finding replicated empirically in much of the follow-up work in the field ~\citep{7lai2023towards, 12zhang2020effect}.

\subsection{Experimental Design attempts across domains}
\label{app:further_rel-domains}
Field experiments for automated decision systems (ADS) are still a nascent effort -- and not many evaluations are currently completely at a high quality. In fact, we focus so much on Imai et al., 2020 (See Appendix E on the Experiment Case Study) specifically because, as the authors acknowledge, this is one of the first thorough experimental evaluations available of an algorithmic intervention, which is why we rely on this study’s data for our empirical investigation. 

That being said, we can definitely provide more examples of this phenomenon across domains of application, in order to demonstrate how widespread some of the practices we discuss have become beyond the criminal justice space. Here, we provide some examples across various domains. These are experiments completed by a range of involved stakeholders from institutional users (eg. school, hospital, government users), policy program evaluation professionals, and academics across a range of disciplines, including computer science, economics and quantitative social science. 

Healthcare: 

There have been several registered randomized control trials for algorithmic deployment in the medical setting, mainly executed by academics ~\citep{B4plana2022randomized} but also hospitals often publish such studies as part of published pilots for internal use ~\citep{B1strickland2019ibm} and investigation ~\citep{B8borrella2019predict}. Notably, the methodology is very rarely tailored to the context of an algorithmic intervention, which is still an evolving practice ~\citep{B5altman2000we, B6wu2021medical}. For example,  a survey of 41 RCTs identified that the vast majority of the completed assessments (93\%) do not analyze model performance errors, and focus completely on clinical or measured policy outcomes ~\citep{B4plana2022randomized}.  The few studies that do factor this context into their design are often unable to resource the required control to execute an experiment (rather than relying on observational data ~\citep{B2wong2021external, B3yala2022multi}).  In the US specifically, the Food and Drug Administration (FDA) has struggled to properly adapt its audit and RCT regulatory requirements to the novel features of ADS technology, which it perceives as rapidly changing and emergent ~\citep{B7ross2022epic}. 

Education: 

One other domain in which there has been much interest in experimental evaluation for algorithmic interventions is education, where automated decision-making is increasingly popular ~\citep{B11perdomo2023difficult}[11]. Some popular experimental evaluations have been done in this context ~\citep{B9dawson2017prediction, B10liu2022lost}[9, 10], but similar to healthcare, methods are still under-developed and unspecialized. More specifically, randomization tends to happen at the student or class level, even when interventions are oriented around teacher or counselor decision-making ~\citep{B11perdomo2023difficult}[11], and, in many cases, interventions tend to be small or not be as effective in impacting downstream monitored outcomes ~\citep{B12kleinberg2018human} [12].

Criminal Justice:

Given the widespread use of risk assessments in the criminal justice space in the US (already impacting over 60 million defendants a year), this context has been the focus of much of the current work on intervention-based evaluations of automated decision systems ~\citep{B12kleinberg2018human, B13albright2019if, B14grgic2019human, B15engel2021machine, B16green2019principles, B17imai2023experimental, B18redcross2019evaluation, B19charette2018michigan}. Many of the existing lab studies ~\citep{B15engel2021machine, B16green2019principles, B14grgic2019human} and observational studies done so far on the topic  are from this context, but, as mentioned before, most notably, one of very few ~\emph{experimental} evaluations in this context was completed in the criminal justice context ~\citep{imai2020experimental}.

Other: 

In addition to the context mentioned, we note some intervention-based evaluation techniques in use in social services ~\citep{B22wilder2021clinical}, hiring ~\citep{B21hoffman2018discretion} and more. 


\end{document}


%

%

\onecolumn
\aistatstitle{Instructions for Paper Submissions to AISTATS 2025: \\
Supplementary Materials}

\section{FORMATTING INSTRUCTIONS}

To prepare a supplementary pdf file, we ask the authors to use \texttt{aistats2025.sty} as a style file and to follow the same formatting instructions as in the main paper.
The only difference is that the supplementary material must be in a \emph{single-column} format.
You can use \texttt{supplement.tex} in our starter pack as a starting point, or append the supplementary content to the main paper and split the final PDF into two separate files.

Note that reviewers are under no obligation to examine your supplementary material.

\section{MISSING PROOFS}

The supplementary materials may contain detailed proofs of the results that are missing in the main paper.

\subsection{Proof of Lemma 3}

\textit{In this section, we present the detailed proof of Lemma 3 and then [ ... ]}

\section{ADDITIONAL EXPERIMENTS}

If you have additional experimental results, you may include them in the supplementary materials.

\subsection{The Effect of Regularization Parameter}

\textit{Our algorithm depends on the regularization parameter $\lambda$. Figure 1 below illustrates the effect of this parameter on the performance of our algorithm. As we can see, [ ... ]}

\vfill